\documentclass[preprint,12pt,3p]{article}

\usepackage{amssymb}
\usepackage{amsthm}
\usepackage{amsmath}

\textwidth = 455pt
\textheight = 22.0 cm
\hoffset = -1.50cm
\voffset = - 1.2cm
\baselineskip = 22pt

\theoremstyle{plain}
\newtheorem{theorem}{Theorem}
\newtheorem{corollary}[theorem]{Corollary}
\newtheorem{proposition}[theorem]{Proposition}

\theoremstyle{definition}
\newtheorem{definition}[theorem]{Definition}
\newtheorem{note}[theorem]{Note}

%% Added by the authors to list the equations/theorems according to sections.
\numberwithin{equation}{section}
\numberwithin{theorem}{section}
\newcommand{\sbt}{\,\begin{picture}(-1,1)(-1,-3)\circle*{2}\end{picture}\ }

\begin{document}

\centerline{\Large \bf Dirac--Lie systems and Schwarzian equations}
\vskip 0.5cm
\centerline{ J.F. Cari\~nena$^{a)}$, J. Grabowski$^{b)}$, J. de Lucas$^{c)}$ and C. Sard\'on$^{d)}$}
\vskip 0.75cm

\centerline{$^{a)}$Faculty of Sciences and IUMA, University of Zaragoza,}
\vskip 0.2cm
\centerline{Pedro Cerbuna 12, 50.009, Zaragoza, Spain.}
\vskip 0.4cm
\centerline{$^{b)}$Institute of Mathematics, Polish Academy of Sciences,}
\vskip 0.2cm
\centerline{\'Sniadeckich 8,
P.O. Box 21, 00-956, Warszawa, Poland.}
\vskip 0.4cm
\centerline{$^{c)}$Department of Mathematical Methods in Physics, University of Warsaw,}
\vskip 0.2cm
\centerline{Ho\.za 74, 00-682, Warszawa, Poland.}
\vskip 0.4cm
\centerline{$^{d)}$Department of Fundamental Physics, University of Salamanca,}
\centerline{Plza. de la Merced s/n, 37.008, Salamanca, Spain.}

\begin{abstract}A {\it Lie system} is a system of differential equations admitting a {\it superposition rule}, i.e., a function describing its general solution in terms of any generic set of particular solutions and some constants.  Following ideas going back to the Dirac's description of constrained systems, we introduce and analyze a particular class of Lie systems on Dirac manifolds, called {\it Dirac--Lie systems}, which are associated with `Dirac--Lie Hamiltonians'. Our results enable us to investigate constants of the motion, superposition rules, and other general properties of such systems in a more effective way. Several concepts of the theory of Lie systems are adapted to this `Dirac setting' and new applications of Dirac geometry in differential equations are presented. As an application, we analyze solutions of several types of Schwarzian equations, but our methods can be applied also to other classes of differential equations important for Physics.
\end{abstract}

\noindent{\bf Keywords:}
Dirac structure;  Lie system; Poisson structure; Schwarzian equation; superposition rule; third-order Kummer--Schwarz equation; Vessiot--Guldberg Lie algebra

\vskip 0.2cm

\noindent{\bf MSC classification:} 34A26 (primary); 17B81 and 35Q53 (secondary)

\section{Introduction}

The study of Lie systems can be traced back to the end of the XIX century, when
K\"onigsberger \cite{Ko83},
Vessiot \cite{Ve93,Ve94}, and Guldberg \cite{Gu93} pioneered the analysis of
systems of first-order ordinary differential equations admitting a
{{\it superposition rule}} \cite{BM09II,LS}. In 1893 Lie succeeded in applying
his theory of Lie algebras \cite{Lie1880}
to characterize systems admitting a superposition rule \cite[Theorem 44]{LS}.
His result, known nowadays as {\it Lie--Scheffers theorem}
\cite{BM09II,CGM07},
states that a system of first-order ordinary differential equations admits a
superposition rule if and only if it describes
  the integral curves of a $t$-dependent vector field taking values in a
  finite-dimensional real Lie algebra of vector fields ({\it Vessiot--Guldberg Lie algebra}) \cite{Dissertationes}.

  During the XX century, Lie systems were almost forgotten
until the 80's, when Winternitz revived their study
\cite{AndHarWin82,PW}. He analyzed the classification of Lie
systems on $\mathbb{R}^2$ \cite{SW84,SW84b}, employed superposition rules
to study systems of first-order differential equations on supermanifolds
\cite{BecGagHusWin87,BecGagHusWin90}, and investigated Lie systems of
relevance  \cite{HWA83}.
His achievements boosted the study of Lie systems, which were found to have
many geometric properties and
applications in physics, mathematics, and control theory. For instance, the work \cite{Dissertationes}
details more than two hundred references on Lie systems and related topics.

It was recently noted that remarkable Lie systems admit a Vessiot--Guldberg Lie algebra
of Hamiltonian vector fields with respect to symplectic and Poisson structures  \cite{CGM00,CLS12Ham}, e.g.,  certain coupled Riccati equations \cite{BHLS12Cla}, Kummer--Schwarz and  second-order Riccati equations in Hamiltonian form \cite{CLS12Ham}, and others \cite{ADR11,Ru10}.
This gave rise to the definition of the so-called
{\it Lie--Hamilton systems} \cite{CLS12Ham} which enjoy a plethora
of geometric features
\cite{ADR11,BHLS12Cla,Dissertationes,CLS12Ham,Ru10,FLV10,BCHLS13,Gh13}.

Algebraic and Poisson geometric techniques have been employed to study Lie--Hamilton
systems \cite{BCHLS13,CLS12Ham,Gh13}. For instance, the superposition rule for Riccati equations can be
obtained in an algebraic way from a Casimir element of a certain Lie algebra \cite{BCHLS13}. Co-algebra techniques can also be applied to obtain superposition rules and constants of the motion
for these systems \cite{BCHLS13}. Additionally, other results concerning the integrability of these systems
have been found directly or indirectly from the geometric structures associated to Lie--Hamilton systems
\cite{ADR11,CLS12Ham,Ru10}.

Of course, not all Lie systems are
Lie--Hamilton systems. We here devise a simple and useful condition ensuring that a Lie system is not a Lie--Hamilton system, and we use it to show that Lie systems related to third-order Kummer--Schwarz equations \cite{LS12} and diffusion PDEs \cite{SSVG11} are not Lie--Hamilton systems.  Meanwhile, we prove that these systems admit a Vessiot--Guldberg Lie algebra of Hamiltonian vector fields, but this time with respect to a {\it presymplectic form} \cite{LM87}. The appearance of this new structure in several important Lie systems and the fact that such systems cannot be 	investigated by the methods of the theory of Lie--Hamilton systems motivate the present study.

We introduce a class of Lie systems covering all the aforementioned
Lie systems as particular instances: Dirac--Lie systems. Roughly speaking, a Dirac--Lie system is a Lie system possessing a Vessiot--Guldberg Lie algebra of Hamiltonian vector fields of a special and very general type. Most properties of standard Hamiltonian vector fields with respect to symplectic structures can naturally be extended to these `generalized' Hamiltonian vector fields. For example, these generalized Hamiltonian vector fields can be related to `generalized' Hamiltonian functions, the so-called admissible functions, which can be employed to study them and, as a byproduct, Dirac--Lie systems. In doing this, the standard techniques in Hamiltonian dynamics and Lie--Hamilton systems can be adapted to investigate Dirac--Lie systems, which are much more general than Lie--Hamilton ones.

More precisely, a {\it Dirac--Lie system} is a triple $(N,L,X)$ consisting of a Lie system $X$ on a manifold $N$ which
admits a Vessiot--Guldberg Lie algebra of Hamiltonian vector fields with respect to a
{\it Dirac structure} $L$ on $N$ \cite{Co87,Do87,JR11}. The latter is a maximally isotropic subbundle $L$ of the {\it Pontryagin bundle} $T^*N\oplus_N TN$ satisfying an integrability condition. Note that Dirac structures provide a geometric setting for Dirac's theory of constrained mechanical systems which generalizes simultaneously Poisson and presymplectic structures.

Using that Poisson and presymplectic manifolds can be described as particular cases of Dirac structures, we show that Lie--Hamilton systems, based upon Poisson manifolds, are a particular type of Dirac--Lie systems and we recover their properties as particular instances of our theory.

Dirac--Lie systems can be studied through
Dirac geometric techniques. This is more general and commonly easier than using techniques for
Lie--Hamilton systems. We prove that every Dirac--Lie system $(N,L,X)$ can be described by a $t$-dependent
Hamiltonian
$h:(t,x)\in \mathbb{R}\times N\mapsto h_t(x)\in \mathbb{R}$
whose functions $\{h_t\}_{t\in\mathbb{R}}$ span a
finite-dimensional real Lie algebra relative to the Poisson bracket of
admissible functions induced by $L$ \cite{Bu11,Co87}.
This suggests us to define a type of $t$-dependent Hamiltonians, called {\it Dirac--Lie
Hamiltonians}, that generalize the notion of {\it Lie--Hamiltonians} used for studying
Lie--Hamilton systems \cite{CLS12Ham}. Subsequently, we show that a Dirac--Lie system $(N,L,X)$ is
equivalent to a curve in a finite-dimensional real Lie algebra of sections of the Lie algebroid bracket induced on $L$ (see \cite{Ma01,Ma08} for an account of Lie algebroids).

We study {\it diagonal prolongations} of Dirac structures and Dirac--Lie systems which play a central r\^ole in determining superposition rules \cite{CGM07}. These notions are exploited to
analyze and to derive in an algebraic way $t$-independent constants of the motion, Lie symmetries, and superposition rules for Dirac--Lie systems. In order to illustrate our procedures, we obtain  a superposition rule for {\it Schwarzian  equations} \cite{SS12,TTX01,Ni97}, i.e. differential equations related to the {\it Schwarzian derivative} \cite{OT09} and also known as Schwarz equations \cite{Ts97}. Our method is
simpler than previous approaches based upon integrating systems of PDEs and/or ODEs \cite{LS12,PW}.

Further, we develop methods of generating new  Dirac--Lie systems out of an initial one. This results in the definition of the so-called {\it bi--Dirac--Lie systems}, i.e., Lie systems admitting a Vessiot--Guldberg Lie algebra of Hamiltonian vector fields with respect to two Dirac structures. This enables us to investigate $X$ through our previous results in two, generally non-equivalent, manners: by using $L$ and $L'$.
We devise a new procedure to produce bi--Dirac--Lie systems, based upon the use of $t$-independent Lie symmetries of $X$, that generalizes a previous result employed to study autonomous Hamiltonian
systems \cite{CMR02}. This is further generalized by using the so-called {\it gauge transformations} of Dirac structures \cite{BR03}.

All our previous techniques are applied to derive a mixed superposition rule for studying Schwarzian equations. The standard methods for deriving a
mixed superposition rule demands finding certain $t$-independent constants of the motion of a Lie system or integrating a system of ODEs \cite{GL12,LS12}. In both cases, it is necessary to integrate systems of PDEs/ODEs. In our case, since we aim at obtaining a mixed superposition rule for a Dirac--Lie system, the associated Dirac structure allows us to use purely algebraic-geometrical techniques to avoid integrating complicated systems of differential equations and to simplify the whole procedure.

We find out that our techniques can be applied to Schwarzian Korteweg-de Vries (SKdV) equations \cite{Mar11,I10}. This provides a new approach to the study of these equations. We derive soliton-type solutions for Schwarzian-KdV equations, namely shape-preserving traveling wave solutions. Moreover, we show how Lie systems and our methods can be applied to provide B\"acklund transformations for certain solutions of these equations. This can be considered as the first application of Dirac structures in studying PDEs of physical and mathematical interest from the point of view of the theory of Lie systems.

The structure of the paper is as follows. Sections 2, 3,  and 4 concern the notions used throughout our paper. In
Section 5, the analysis of several remarkable Lie systems that cannot be considered as Lie--Hamilton systems
leads us to introduce the concept of Dirac--Lie systems which encompasses
such systems as particular cases. Subsequently, the Dirac--Lie Hamiltonians
 are introduced and analyzed in Section 6. Next, we investigate
several geometric properties of Dirac--Lie systems in Section 7. Section 8 concerns the
study of constants of the motion and superposition rules of Dirac--Lie systems. Next, Section 9
is devoted to bi--Dirac--Lie systems. In Section 10, we illustrate the usefulness of all our methods to derive a mixed superposition rule \cite{GL12} to study Schwarzian equations. In Section 11 we devise an application of our techniques in SKdV equations. Finally, we summarize our main results and present an outlook of our future research  in Section 12.

\section{Dirac manifolds}
The concept of {\it Dirac structure}, proposed by Dorfman \cite{Do87} in the Hamiltonian framework of integrable evolution equations and defined in \cite{Co87} as a subbundle of the Whitney sum $T N\oplus_N T^\ast N$ (called the {\it extended tangent} or {\it Pontryagin bundle}) satisfying certain conditions, was thought-out as a common generalization of Poisson and presymplectic structures. It was designed also to deal with constrained systems, including constraints induced by degenerate Lagrangians, as
was investigated by Dirac \cite{Di50}, which is the reason for the name.				In this section, we present a brief survey on all the necessary notions and facts
(see for instance  \cite{Bu11,Co87,Co90,CW88,JR11,IV,NTZ12} for details).

We hereafter assume all mathematical objects to be real,
smooth, and globally defined. Manifolds are considered to be connected. This
permits us to omit several minor technical details while highlighting the main aspects of our theory. We hereafter call $\Gamma(E)$ the space of smooth sections of a bundle $(E,B,\pi:E\rightarrow B)$.

A {\it symplectic manifold} is a pair $\left( N, \omega \right) $,
where $N$ stands for a manifold and $\omega$ is a non-degenerate closed two-form
on $N$. We say that a
vector field $X$ on $N$ is Hamiltonian with respect to $(N,\omega)$ if there
exists a function $f\in C^\infty(N)$
such that
\begin{equation}\label{conHam}
\iota_X\omega=-df.
\end{equation}
In this case, we say that $f$ is a {\it Hamiltonian function} for $X$.
Conversely, given a
function $f$, there exists a unique vector field $X_f$ on $N$, the
so-called {\it Hamiltonian vector field} of $f$, satisfying (\ref{conHam}). This allows us to define a bracket $\{\cdot,\cdot \}:C^{\infty}
\left( N \right)\times C^\infty(N)\rightarrow C^\infty(N)$ given by
\begin{equation}\label{PB}
\left\{f,g\right\}=\omega(X_f,X_g)=X_f(g).
\end{equation}
This bracket turns $C^\infty(N)$ into a {\it Poisson algebra} $(C^\infty(N),\sbt\,,\{\cdot,\cdot\})$, i.e., $\{
\cdot, \cdot \}$ is a Lie bracket on $C^\infty(N)$ which additionally holds the
{\it Leibniz rule} with respect to
the standard product `${\bf \sbt}\,$' of functions:
\[ \{ f,g \sbt\, h\} = \{ f, g\} \sbt\, h + g\sbt\,\{ f, h\}\,, \qquad
   \forall f, g, h \in C^\infty(N). \]
For simplicity, we just hereafter write $fg$ for $f\sbt\,g$. The Leibniz rule can be rephrased by saying that $\{ f , \cdot\}$ is a derivation of the
associative algebra $(C^\infty(N),\sbt\,)$ for each $f \in C^\infty(N)$. Actually, this derivation is represented by the Hamiltonian vector field $X_f$.
The bracket
$\{\cdot,\cdot\}$ is called the
{\it Poisson bracket} of $(C^\infty(N),\sbt\,,\{\cdot,\cdot\})$. Note that if $(N,\omega)$ is a symplectic manifold, the non-degeneracy condition for
$\omega$ implies that $N$ is even dimensional \cite{FM}.

The above observations lead to the concept of a Poisson manifold which is a natural generalization of the symplectic one. A {{\it
Poisson manifold}} is a pair $\left( N,\{\cdot,\cdot\}\right)$, where
$\{\cdot,\cdot\}:C^\infty(N)\times
C^\infty(N)\rightarrow C^\infty(N)$ is the Poisson bracket of $(C^{\infty}( N),\sbt\,,\{\cdot,\cdot\})$ which is also referred to as a {\it Poisson structure} on $N$. In
view of this and (\ref{PB}), every symplectic manifold is a particular type of
Poisson manifold. Moreover, by noting that $\{f,\cdot\}$ is a derivation on
$(C^\infty(N),\sbt\,)$ for every $f\in C^\infty(N)$, we can associate with every function $f$  a
single vector field $X_f$, called the {\it Hamiltonian vector field} of $f$, such that $\{f,g\}=X_fg$ for all $g\in C^\infty(N)$, like in the symplectic case.

As the Poisson
structure is a derivation in each entry, it gives rise to a bivector field $\Lambda$, i.e., an element of $
\Gamma(\bigwedge^2TN)$, the referred to as {\it Poisson bivector}, such that
$\{f,g\}=\Lambda(df,dg)$. It is known that the Jacobi identity for
$\{\cdot,\cdot\}$ amounts to
$[\Lambda,\Lambda]_{SN}=0$, with $[\cdot,\cdot]_{SN}$ being the {\it
Schouten--Nihenjuis bracket} \cite{IV}. Conversely, a
bivector $\Lambda$ satisfying $[\Lambda,\Lambda]_{SN}=0$ gives rise to a Poisson bracket on $C^\infty(N)$ by
setting $\{f,g\}=\Lambda(df,dg)$. Hence, a Poisson manifold can be
considered, equivalently, as $(N,\{\cdot,\cdot\})$ or $(N,\Lambda)$. It is remarkable that $\Lambda$ induces
a bundle morphism $\widehat\Lambda:\alpha_x\in T^*N\rightarrow \widehat\Lambda(\alpha_x)\in TN$, where
$\bar\alpha_x(\widehat\Lambda(\alpha_x))=\Lambda_x(\alpha_x,\bar\alpha_x)$ for all $\bar\alpha_x\in T_x^*N$
, which enables us to write $X_f=\widehat \Lambda(df)$ for every $f\in C^\infty(N)$.

Another way of generalizing a symplectic structure is to consider a two-form $\omega$ which is merely closed (not necessarily of constant rank), forgetting the non-degeneracy assumption. In this case,
$\omega$ is said to be a {\it presymplectic form}
and the pair $\left( N, \omega \right)$ is called a
{\it presymplectic manifold} \cite{LM87}. Like in the symplectic case, we
call a vector field $X$ on $N$ {\it Hamiltonian} if there exists a function $f\in C^\infty(N)$, a {\it Hamiltonian function} for
$X$, such that (\ref{conHam}) holds for the presymplectic form $\omega$.

The possible degeneracy of $\omega$ introduces several differences
with respect to the symplectic setting.
For example, given an $f\in C^\infty(N)$,
we cannot ensure neither the existence nor the uniqueness of a vector field $X_f$
satisfying $\iota_{X_f}\omega=-df$. If it exists, we say that $f$ is an {\it
admissible function}
with respect to $\left( N, \omega \right)$. Since the linear combinations and multiplications of admissible functions are also admissible functions, the space ${\rm Adm}(N,\omega)$ of admissible functions of $(N,\omega)$ is a real associative algebra. It is canonically also a Poisson algebra. Indeed, observe that every
$f\in {\rm Adm}(N,\omega)$ is associated to a family of Hamiltonian vector fields of the form
$X_f+Z$, with $Z$ being a vector field taking values in $\ker \omega$. Hence,  (\ref{PB}) does not depend on the representatives $X_f$ and $X_g$ and becomes a Poisson bracket on the space ${\rm Adm}(N,\omega)$, making the latter into a Poisson algebra. It is also remarkable that
$$\iota_{[X_f,X_g]}\omega=\mathcal{L}_{X_f}\iota_{X_g}\omega-
\iota_{X_g}\mathcal{L}_{X_f}\omega=-\mathcal{L}_{X_f}dg=-d\{f,g\}\,.
$$
In consequence, $[X_f,X_g]$ is a Hamiltonian vector field with a Hamiltonian function $\{f,g\}$.

A natural question now arises: is there any geometric structure incorporating presymplectic and Poisson manifolds as particular cases? Courant \cite{Co87,Co90} provided an affirmative answer to this
question.

Recall that a {\it Pontryagin bundle} $\mathcal{P}N$ is a vector bundle $TN
\oplus_N T^{\ast} N$ on $N$.

\begin{definition} An {\it almost-Dirac manifold} is a pair $(N, L)$, where $L$ is a  maximally isotropic  subbundle of
$\mathcal{P}N$ with respect to the pairing
\[ \langle X_x + \alpha_x, \bar{X}_x + \bar{\alpha}_x\rangle _+ \equiv
\frac{1}{2}  (
   \bar{\alpha}_x(X_x) + \alpha_x(\bar{X}_x)),
\]
where $ X_x + \alpha_x, \bar{X}_x
+
   \bar{\alpha}_x \in T_xN \oplus
T_x^{\ast} N=\mathcal{P}_xN.$
In other words, $L$ is isotropic and has rank $n=\dim N$.

A {\it Dirac manifold} is an almost-Dirac manifold $(N,L)$  whose subbundle $L$,
its {\it Dirac structure}, is
involutive relative to the {\it Courant--Dorfman bracket} \cite{Co87,Do87,GG11,JR11}, namely
\[ [[X + \alpha, \bar{X} + \bar{\alpha}]]_C \equiv [X, \bar{X}] + \mathcal{L}_X
   \bar{\alpha} - \iota_{\bar{X}}d\alpha\,,\]
where
$X+\alpha, \bar X+\bar{\alpha}\in \Gamma(TN\oplus_NT^*N)$.
\end{definition}

Note that the Courant--Dorfman bracket satisfies the Jacobi identity in the form
\begin{equation}\label{Jacobi} [[\,[[e_1,e_2]]_C,e_3]]_C
\!=\![[e_1,[[e_2,e_3]]_C]]_C-\![[e_2,[[e_1,e_3]]_C]]_C,\,\, \forall e_1,e_2,e_3\!\in\! \Gamma(\mathcal{P}N),
\end{equation}
but is not skew-symmetric. It is, however, skew-symmetric on sections of the Dirac subbundle $L$, defining a {\it Lie algebroid} structure
$(L,[[\cdot,\cdot]]_C,\rho)$, where $\rho
: L\ni X_x + \alpha_x \mapsto X_x\in  TN$. This means that $(\Gamma(L),[[\cdot,\cdot]]_C)$ is a Lie algebra and the vector bundle morphism  $\rho:L\rightarrow TN$, the {\it anchor}, satisfies
\begin{equation}\label{anchor}[[e_1,fe_2]]_C=(\rho(e_1)f)e_2+f[[e_1,e_2]]_C
\end{equation}
for all $e_1,e_2\in \Gamma(L)$ and $f\in C^\infty(N)$ \cite{Co87}.
One can prove that, automatically, $\rho$ induces a Lie algebra morphism of $(\Gamma(L),[[\cdot,\cdot]]_C)$ into the Lie algebra of vector fields on $N$.
The generalized distribution $\rho(L)$, called the {\it characteristic distribution} of the Dirac structure, is therefore integrable in the sense of Stefan--Sussmann \cite{Su73}.

\begin{definition}
A vector field $X$ on $N$ is said to be an $L$-{\it Hamiltonian vector
field} (or simply a {\it Hamiltonian vector field} if $L$ is fixed) if there exists an
$f \in C^{\infty} (N)$ such that $X + df \in \Gamma(L)$. In this case, $f$
is an $L$-{\it Hamiltonian function} for $X$ and an {\it admissible function} of
$(N,L)$. Let us denote by
Ham$(N,L)$ and Adm$(N, L)$ the spaces of Hamiltonian vector fields and admissible functions of $(N,L)$, respectively.
\end{definition}

 The space ${\rm Adm}(N,L)$ becomes a Poisson algebra $( {\rm Adm} (N, L), \sbt\,, \{ \cdot, \cdot
\}_L)$ relative to the standard product of functions  and the Lie bracket
given by
\[ \{f, \bar f \}_L =X \bar f\,,\]
where $X$ is an $L$-Hamiltonian vector field for $f$.
Since $L$ is isotropic, $\{f,\bar f\}_L$ is
well defined, i.e., its value is independent on the choice of the $L$-Hamiltonian vector field associated to $f$. The elements $f\in {\rm Adm}(N,L)$
possessing trivial Hamiltonian vector fields are called the {\it Casimir functions} of
$(N,L)$ \cite{NTZ12}. We write ${\rm Cas}(N,L)$ for the set of Casimir functions of $(N,L)$. We can also distinguish the space $G(N,L)$ of $L$-Hamiltonian vector fields which admits
zero (or, equivalently, any constant) as an $L$-Hamiltonian function. We call them {\it gauge vector fields} of the Dirac structure.

Note that, if $X$ and $\bar X$ are $L$-Hamiltonian vector
fields with Hamiltonian functions $f$ and $\bar f$, then $\{f, \bar f \}_L$ is a Hamiltonian for $[X, \bar X]$:
\begin{equation*}
[[X+df,\bar X+d\bar f]]_C=[X,\bar X]+\mathcal{L}_{X}d\bar f-\iota_{\bar
X}d^2f=[X,\bar X]+d\{f,\bar f\}_L.
\end{equation*}
This implies that
$({\rm Ham}(N,L),[\cdot,\cdot])$ is a Lie algebra in which $G(N,L)$ is a Lie ideal.
Denote the quotient Lie algebra ${\rm Ham}(N,L)/G(N,L)$ by $\widehat{\rm Ham}(N,L)$.

\begin{proposition}\label{CasCon}
If  $(N, L)$ is a Dirac manifold, then  $\{{\rm Cas} (N, L), {\rm Adm}(N,L)\}_L=0$, i.e.,  {\rm Cas}$(N, L)$ is an ideal
  of the Lie algebra $( {\rm Adm} (N, L), \{ \cdot, \cdot \}_L)$. Moreover, we have the following exact
sequence of Lie algebra homomorphisms
\begin{equation}\label{PreSymSeqII}
0\hookrightarrow {\rm Cas}(N,L)\hookrightarrow {\rm Adm}(N,L)\stackrel{B_L}{\longrightarrow} \widehat{\rm Ham}(N,L)\rightarrow 0\,,
\end{equation}
with $B_L(f)=\pi(X_f)$, where the vector field $X_f$ is an $L$-Hamiltonian vector field of $f$, and $\pi$ is the canonical projection $\pi:{\rm Ham}
(N,L)\rightarrow \widehat{\rm Ham}(N,L)$.
\end{proposition}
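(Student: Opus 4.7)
The plan is to first show that ${\rm Cas}(N,L)$ annihilates ${\rm Adm}(N,L)$ under $\{\cdot,\cdot\}_L$ (which gives the ideal property), then to verify exactness by checking in turn that $B_L$ is well defined, is a Lie algebra homomorphism, is surjective, and has kernel exactly ${\rm Cas}(N,L)$. The underlying mantra is the dual reading of the condition $X+df\in\Gamma(L)$: fixing $f$ and varying $X$ parametrizes the $L$-Hamiltonian vector fields of $f$, while fixing $X$ and varying $f$ parametrizes its admissible Hamiltonians; the ambiguities on each side are precisely $G(N,L)$ and ${\rm Cas}(N,L)$, and recognizing this symmetry is what makes the argument essentially formal.

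For the first claim, if $f\in{\rm Cas}(N,L)$ then, by definition, the zero vector field is an $L$-Hamiltonian of $f$, i.e.\ $0+df\in\Gamma(L)$. Since the bracket $\{f,g\}_L=X_f(g)$ is independent of the chosen representative $X_f$, I may evaluate it with $X_f=0$, obtaining $\{f,g\}_L=0$ for every $g\in{\rm Adm}(N,L)$. This immediately yields that ${\rm Cas}(N,L)$ is a Lie ideal of $({\rm Adm}(N,L),\{\cdot,\cdot\}_L)$.

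For the exact sequence, well-definedness of $B_L$ follows by noting that two $L$-Hamiltonian vector fields $X_f,\tilde X_f$ for the same $f$ differ by $X_f-\tilde X_f=(X_f+df)-(\tilde X_f+df)\in\Gamma(L)$; since $(X_f-\tilde X_f)+d0=X_f-\tilde X_f\in\Gamma(L)$, this difference is a gauge vector field, whence $\pi(X_f)=\pi(\tilde X_f)$. The Lie homomorphism property follows from the identity displayed just before the proposition: applied to $X_f+df$ and $X_g+dg$, it gives $[X_f,X_g]+d\{f,g\}_L\in\Gamma(L)$, showing that $[X_f,X_g]$ is an $L$-Hamiltonian vector field for $\{f,g\}_L$, so that $B_L(\{f,g\}_L)=\pi([X_f,X_g])=[B_L(f),B_L(g)]$.

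Surjectivity is essentially tautological: any class in $\widehat{\rm Ham}(N,L)$ is represented by an $L$-Hamiltonian vector field $X$, which by definition admits an admissible Hamiltonian $f$, and choosing $X$ itself as the representative $X_f$ gives $B_L(f)=\pi(X)$. For the kernel, $f\in\ker B_L$ means $X_f\in G(N,L)$, i.e.\ $X_f\in\Gamma(L)$; subtracting this from $X_f+df\in\Gamma(L)$ forces $df\in\Gamma(L)$, which is precisely the statement that $0$ is an $L$-Hamiltonian of $f$, hence $f\in{\rm Cas}(N,L)$. The reverse inclusion was already established in the first paragraph, and the only slightly delicate point in the whole argument is keeping clearly separate the two notions of \emph{Casimir function} ($0$ admissible as a Hamiltonian vector field) and \emph{gauge vector field} ($0$ admissible as a Hamiltonian function), which the isotropy of $L$ identifies at the level of sections.
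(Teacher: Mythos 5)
Your proof is correct and follows essentially the same route the paper takes: the paper states Proposition \ref{CasCon} without an explicit proof, as an immediate consequence of the observations directly preceding it (the representative-independence of $\{\cdot,\cdot\}_L$ from isotropy, the Courant--Dorfman computation showing $[X_f,X_g]$ has Hamiltonian $\{f,g\}_L$, and $G(N,L)$ being a Lie ideal of ${\rm Ham}(N,L)$), and your write-up simply spells out these steps, including the correct identification of $\ker B_L$ with ${\rm Cas}(N,L)$.
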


For every Dirac manifold $(N,L)$, we have a canonical linear map
$\Omega^L_x : \rho (L)_x\subset T_xN \rightarrow \rho (L)_x^*\subset T^*_xN$ given by
\begin{equation}\label{twoform}
[\Omega^L_x (X_x)](\bar X_x) = -{\alpha}_x (\bar{X}_x), \qquad X_x,\bar X_x\in
\rho(L),
\end{equation}
where $\alpha_x\in T^*_xN$ is such that $X_x+\alpha_x\in L$.
Note that, as  $L$ is isotropic, $\Omega_x^L$ is well defined, i.e., the
value of
\[\Omega_x^L(X_x,\bar X_x)=[\Omega^L_x(X_x)](\bar X_x)\] is independent of the particular
${\alpha}_x$ and defines a skew-symmetric bilinear form $\Omega^L$ on the (generalized) distribution $\rho(L)$. Indeed, given
$X_x+\bar{\bar{\alpha}}_x\in L$, we have that ${\alpha}_x-\bar{\bar{\alpha}}_x\in
L$. Since $L$ is isotropic, $\langle {\alpha}_x-\bar{\bar{\alpha}}_x,
\bar{X}_x+\bar{\alpha}_x\rangle_+=({\alpha}_x-\bar{\bar{\alpha}}_x)\bar{X}_x/2=0$ for all
$ \bar{X}_x+\bar{\alpha}_x\in L$. Then,  $[\Omega_x^L(X_x)](\bar X_x)=-\bar{\bar
{\alpha}}_x( \bar{X}_x)=-{\alpha}_x(\bar{X}_x)$ for all $\bar{X}_x\in \rho(L)$ and
$\Omega^L$ is well defined.

It is easy to see that gauge vector fields generate the {\it gauge distribution} $\ker \Omega^L$. Moreover, the involutivity of $L$ ensures that $\rho(L)$ is an integrable
generalized distribution in the sense of Stefan--Sussmann \cite{Su73}.
Therefore, it induces a (generalized) foliation $\mathfrak{F}^L=\{\mathfrak{F}^L_x: x\in N\}$ on $N$.

Since $\rho(L_x)=T_x\mathfrak{F}_x^L$, if the elements $X_x+\alpha_x$ and $X_x+\bar{\alpha}_x$, with $X_x\in T_x\mathfrak{F}_x^L$, are in $L_x\subset \mathcal{P}_xN=T_xN\oplus T^*_xN$, then $\alpha_x-\bar{\alpha}_x$ is in the annihilator of $T_x\mathfrak{F}_x^L$, so the image of $\alpha_x$ under the canonical restriction
$\sigma:\alpha_x\in T^*_xN\mapsto \alpha_x|_{T_x\mathfrak{F}_x^L}\in T^*_x\mathfrak{F}_x^L$ is uniquely determined. One can verify that $\sigma(\alpha_x)=-\Omega^L_x(X_x)$. The two-form $\Omega^L$ restricted to $\mathfrak{F}_x^L$ turns out to be closed, so that $\mathfrak{F}_x^L$ is canonically a presymplectic manifold, and the canonical restriction of $L$ to $\mathfrak{F}_x^L$ is the graph of this form \cite{Co87}.

As particular instances,
Poisson and presymplectic
manifolds are particular cases of Dirac
manifolds. On one hand, consider a presymplectic manifold $(N,\omega)$ and
define $L^\omega$ to be the graph of minus the fiber bundle morphism
$\widehat\omega:X_x\in TN\mapsto \omega_x(X_x,\cdot) \in T^*N$. The generalized
distribution $L^\omega$ is isotropic, as
\[
\langle
X_x-\widehat {\omega}(X_x),\bar{X}_x-\widehat {\omega}(\bar{X}_x)\rangle_+=-({\omega}_x(X_x,\bar{X}_x)+\omega_x(\bar{X}_x,X_x))/2=0\,.
\]
As $L^\omega$ is the graph of $-\widehat{\omega}$, then $\dim L^\omega_x=\dim N$ and $L^\omega$ is a maximally
isotropic subbundle of $\mathcal{P}N$. In addition, its integrability relative to
the Courant--Dorfman bracket comes from the fact that $d\omega=0$. Indeed, for arbitrary $X,X'\in \Gamma(TN)$, we have
\[
[[X-\iota_X\omega,X'-\iota_{X'}\omega]]_C=[X,X']-\mathcal{L}_X \iota_{X'}\omega+\iota_{X'}d\iota_{X}\omega=
[X,X']-\iota_{[X,X']}\omega\,,
\]
since
\[
\mathcal{L}_X \iota_{X'}\omega-\iota_{X'}d\iota_{X}\omega=\mathcal{L}_X \iota_{X'}\omega-\iota_{X'}\mathcal{L}_{X}\omega=\iota_{[X,X']}\omega\,.\]
In this case, $\rho:L^\omega\rightarrow TN$ is a bundle isomorphism.
Conversely, given a
Dirac manifold whose $\rho:L\rightarrow TN$ is a bundle isomorphism, its characteristic distribution satisfies $\rho(L)=TN$ and it admits a unique integral leaf, namely
$N$, on which $\Omega^L$ is a closed two-form, i.e., $(N,\Omega^L)$ is a presymplectic manifold.

On the other hand, every Poisson manifold $(N,\Lambda)$ induces a subbundle
 $L^\Lambda$ given by the graph of $\widehat\Lambda$. It is isotropic,
\[
\langle \widehat{\Lambda}(\alpha_x)+ \alpha_x,\widehat{\Lambda}(\bar{\alpha}_x)+
\bar{\alpha}_x\rangle_+=(\Lambda_x(\bar{\alpha}_x,\alpha_x)+\Lambda_x(\alpha_x,
\bar{\alpha}_x))/2=0,
\]
for all $ \alpha_x,\bar{\alpha}_x\in T^*_xN$  and $x\in N$,
and of rank $\dim N$ as the graph of $\widehat \Lambda$ is a map from $T^*N$. Additionally, $L^\Lambda$ is integrable. Indeed, as $\widehat \Lambda (d\{f,g\})=[\widehat \Lambda(df),\widehat \Lambda(dg)]$ for every $f,g\in C^\infty(N)$ \cite{IV}, we have
\[
\![[\widehat \Lambda(df)\!+\!df,\!\widehat \Lambda(dg)+dg]]_C\!\!=\!\![\widehat \Lambda(df),\!\widehat \Lambda(dg)]+\mathcal{L}_{\widehat \Lambda(df)}dg-\iota_{\widehat \Lambda(dg)}d^2\!f\!\!=\!\!
\widehat \Lambda(d\{f,g\})+d\{f,g\}
\]
and the involutivity follows from the fact that the module of 1-forms is generated locally by exact 1-forms.

Conversely, every Dirac manifold $(N,L)$ such
that $\rho^*:L\rightarrow T^*N$ is a bundle isomorphism is the graph of $\widehat\Lambda$ of a Poisson bivector.

Let us motivate our terminology. We call  $\rho(L)$ the
 characteristic distribution of $(N,L)$, which follows the terminology of \cite{NTZ12}
instead of the original one by Courant \cite{Co87}. This is done because when
$L$ comes from a Poisson manifold, $\rho(L)$ coincides with the characteristic
distribution of the Poisson structure \cite{IV}. Meanwhile, the vector fields taking values in $\ker \Omega^L$ are called {\it gauge vector fields}. In this way, when $L$ is the graph of a presymplectic
structure, such vector fields are its gauge vector fields
\cite{EMR99}.

\section{Actions, momentum maps, and invariants on Dirac manifolds}
In the standard symplectic setting, momentum maps are associated with Hamiltonian actions of Lie groups. We will present an analogous concept for Hamiltonian actions on Dirac manifolds, however, limiting ourselves to infinitesimal actions which is sufficient for the theory and our purposes.
\begin{definition}
Let us assume that $(N,L)$ is a Dirac manifold equipped with an {\it infinitesimal $L$-Hamiltonian action} of a finite-dimensional real Lie algebra $\mathfrak{g}$, i.e., a Lie algebra homomorphism $\phi:(\mathfrak{g},[\cdot,\cdot])\rightarrow ({\rm Adm}(N,L),\{\cdot,\cdot\}_L)$. The {\it momentum map} associated with $\phi$ is the map $J_\phi:N\to\mathfrak{g}^*$ defined by
\begin{equation*}
[J_\phi(x)](v)=[\phi(v)](x)\,,\qquad \forall v\in\mathfrak{g},\,\, \forall x\in N.
\end{equation*}
\end{definition}
\noindent Note that $\mathfrak{g}^*$ is canonically a Poisson manifold with respect to the Kirillov--Konstant--Souriau Poisson structure for which linear functions $f_v:\theta\in \mathfrak{g}^*\mapsto \theta(v)\in \mathbb{R}$ and $f_w:\theta\in \mathfrak{g}^*\mapsto \theta(w)\in \mathbb{R}$ associated with $v,w\in\mathfrak{g}$ commute as $v,w$ in the Lie algebra $\mathfrak{g}$:
\begin{equation}\label{KKS}\{f_v,f_w\}_{\mathfrak{g}^*}=f_{[v,w]},\,\qquad \forall v,w\in\mathfrak{g}.
\end{equation}
%The following is rather obvious.
\begin{proposition}\label{invariants} The map
\begin{equation}\label{momentum}
J_\phi^*:f\in C^\infty(\mathfrak{g}^*)\mapsto f\circ J_\phi\in  C^\infty(N)\,
\end{equation}
takes values in ${\rm Adm}(N,L)$ and establishes a morphism of Poisson algebras. In particular,  if $C\in C^\infty(\mathfrak{g}^*)$ is a Casimir function, i.e., a central element in $(\mathfrak{g}^*,\{\cdot,\cdot\})$, then
$C\circ J_\phi$ commutes with all elements of $\phi(\mathfrak{g})$, thus it is an invariant, i.e., a first integral of all Hamiltonian vector fields $X_h$, with $h\in\phi(\mathfrak{g})$.
\end{proposition}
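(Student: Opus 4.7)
The plan is to handle the three assertions in turn. I first verify admissibility and the Poisson-morphism property on the linear functions $f_v$, then extend to all of $C^\infty(\mathfrak{g}^*)$ by exploiting that both Poisson brackets are bi-derivations and that the linear coordinates on $\mathfrak{g}^*$ generate $C^\infty(\mathfrak{g}^*)$ under smooth functional composition. The Casimir statement will then follow as a direct corollary.

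For the first step, a direct evaluation gives $[J_\phi^*(f_v)](x)=f_v(J_\phi(x))=[J_\phi(x)](v)=[\phi(v)](x)$, so $J_\phi^*(f_v)=\phi(v)$, which lies in ${\rm Adm}(N,L)$ by hypothesis. To reach arbitrary $f\in C^\infty(\mathfrak{g}^*)$, I would fix a basis $\{v_1,\dots,v_n\}$ of $\mathfrak{g}$ so that the $f_{v_i}$ furnish global linear coordinates on $\mathfrak{g}^*$ and $f=F(f_{v_1},\dots,f_{v_n})$ for some smooth $F\colon\mathbb{R}^n\to\mathbb{R}$. The key auxiliary observation I would establish is that ${\rm Adm}(N,L)$ is closed under smooth functional composition: if $h_1,\dots,h_n\in{\rm Adm}(N,L)$ admit $L$-Hamiltonian vector fields $X_i$, so that $X_i+dh_i\in\Gamma(L)$, then since $L$ is a $C^\infty(N)$-submodule of $\Gamma(\mathcal{P}N)$ and $d(F\circ(h_1,\dots,h_n))=\sum_i(\partial_iF\circ(h_1,\dots,h_n))\,dh_i$, the combination $\sum_i(\partial_iF\circ(h_1,\dots,h_n))X_i+d(F\circ(h_1,\dots,h_n))\in\Gamma(L)$ realizes $F\circ(h_1,\dots,h_n)$ as admissible. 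Applying this with $h_i=\phi(v_i)$ yields $J_\phi^*(f)\in{\rm Adm}(N,L)$.

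For the Poisson-morphism property, $J_\phi^*$ is automatically a unital algebra homomorphism since pullback preserves pointwise products. On the generators, using the Lie algebra homomorphism property of $\phi$ together with (3.2),
\begin{equation*}
\{J_\phi^*(f_v),J_\phi^*(f_w)\}_L=\{\phi(v),\phi(w)\}_L=\phi([v,w])=J_\phi^*(f_{[v,w]})=J_\phi^*(\{f_v,f_w\}_{\mathfrak{g}^*}).
\end{equation*}
To extend this identity to all smooth functions, I would exploit that both $(f,g)\mapsto J_\phi^*(\{f,g\}_{\mathfrak{g}^*})$ and $(f,g)\mapsto\{J_\phi^*f,J_\phi^*g\}_L$ are bi-derivations of $C^\infty(\mathfrak{g}^*)$ into $C^\infty(N)$: the latter because $\{\cdot,\cdot\}_L$ obeys Leibniz on ${\rm Adm}(N,L)$ and $J_\phi^*$ respects products, the former because the KKS bracket itself is a bi-derivation. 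Writing $f=F(f_{v_1},\dots,f_{v_n})$ and $g=G(f_{v_1},\dots,f_{v_n})$ and applying the chain rule reduces both sides to $\sum_{i,j}(\partial_iF\circ J_\phi)(\partial_jG\circ J_\phi)\,J_\phi^*(\{f_{v_i},f_{v_j}\}_{\mathfrak{g}^*})$ on one side and the analogous expression with $\{\phi(v_i),\phi(v_j)\}_L$ on the other, and these coincide by the linear case.

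The Casimir claim is then immediate: if $C\in C^\infty(\mathfrak{g}^*)$ satisfies $\{C,f_v\}_{\mathfrak{g}^*}=0$ for every $v\in\mathfrak{g}$, the Poisson-morphism property gives $\{J_\phi^*(C),\phi(v)\}_L=0$, and selecting any $L$-Hamiltonian vector field $X_{\phi(v)}$ of $\phi(v)$ this reads $X_{\phi(v)}(J_\phi^*(C))=0$, so $J_\phi^*(C)$ is a first integral of all vector fields Hamiltonian for elements of $\phi(\mathfrak{g})$. The main subtlety in the whole argument is the passage from the generators $f_v$ to arbitrary smooth functions; the clean way to handle this is the closure of ${\rm Adm}(N,L)$ under smooth functional composition together with the bi-derivation property, both of which ultimately rest on $L$ being a $C^\infty(N)$-submodule of $\Gamma(\mathcal{P}N)$.
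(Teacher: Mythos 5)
Your proof is correct and follows essentially the same route as the paper: the same closure-of-admissibility lemma via $X+df=\sum_i(\partial_iF)(X_{f_i}+df_i)\in\Gamma(L)$ using that $\Gamma(L)$ is a $C^\infty(N)$-module, the same verification of the Poisson-morphism identity on the linear generators $f_v$, and the same derivation of the Casimir claim. The only difference is that you spell out the bi-derivation/chain-rule argument extending the morphism property from the $f_v$ to all of $C^\infty(\mathfrak{g}^*)$, a step the paper compresses into ``the rest easily follows.''
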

\begin{proof} If $f\!:\!\mathbb{R}^n\to\mathbb{R}$ is a smooth function and $f_1,\dots,f_n\in{\rm Adm}(N,L)$, then $f=f(f_1,\dots,f_n)\in {\rm Adm}(N,L)$. Indeed,
by defining the vector field
\[X=\sum_{i=1}^n\frac{\partial f}{\partial x^i}(f_1,\dots,f_n)X_{f_i}\,,\]
we see that
\[
\begin{gathered}
X+df=\sum_{i=1}^n\frac{\partial f}{\partial x^i}(f_1,\dots,f_n)X_{f_i}+\sum_{i=1}^n\frac{\partial f}{\partial x^i}(f_1,\dots,f_n)df_i=\\\sum_{i=1}^n\frac{\partial f}{\partial x^i}(f_1,\dots,f_n)(X_{f_i}+df_i).
\end{gathered}
\]
Since $L$ is a vector bundle and $X_{f_i}+df_i$ belong to $\Gamma(L)$, then $X+df\in \Gamma(L)$ and $f$ becomes an $L$-Hamiltonian, i.e., admissible, function for $X$. The rest easily follows from the fact that the momentum map is Poisson, namely if $v,w\in\mathfrak{g}$, then
\[
\begin{gathered}
\{J^*_\phi f_v,J^*_\phi f_w\}_L=\{f_v\circ J_\phi,f_w\circ J_\phi\}_L=\{\phi(v),\phi(w)\}_L=\\\phi([v,w])=f_{[v,w]}\circ J_\phi=J_\phi^*\left(\{f_{v},f_{w}\}_{\mathfrak{g}^*}\right).
\end{gathered}
\]
In particular,
\[X_{\phi(v)}(J^*_\phi(C))=\{\phi(v),J^*_\phi(C)\}_L=J_\phi^*\left(\{f_v,C\}_{\mathfrak{g}^*}\right)=0\,.\]
\end{proof}

\section{Lie systems, Lie--Hamilton systems, and related notions}

We  denote a real Lie algebra by a pair
 $(V, [ \cdot, \cdot])$, i.e.  $V$
stands for a real linear space endowed with a Lie bracket $[\cdot  \hspace{0.25em},
\cdot ] : V \times V \rightarrow V$. Given two subsets $\mathcal{A},
\mathcal{B} \subset V$, we write $[ \mathcal{A}, \mathcal{B}]$ for the real
linear space spanned by the Lie brackets between elements of $\mathcal{A}$ and
$\mathcal{B}$, and we define Lie$( \mathcal{B}, V, [ \cdot,
\cdot])$ to be the smallest Lie subalgebra of $V$ containing $\mathcal{B}$.
When their meaning is clear, we use Lie$( \mathcal{B})$ and $V$ to represent
Lie$( \mathcal{B}, V, [ \cdot, \cdot])$ and $(V, [ \cdot, \cdot])$,
respectively.

A {\it $t$-dependent vector field} on $N$ is a map $X : (t, x)
\in
\mathbb{R} \times N \mapsto X (t, x) \in TN$ such that $\tau_N \circ X =
\pi_2$, where $\pi_2 : (t, x) \in \mathbb{R} \times N \mapsto x \in N$ and
$\tau_N : TN \rightarrow N$ is the tangent bundle projection associated to $N$. This
condition entails that $X$ amounts to a family of
vector fields $\{X_t \}_{t \in \mathbb{R}}$, with $X_t : x \in N \mapsto X (t,
x) \in TN$ for all $t \in \mathbb{R}$ and vice versa {\cite{Dissertationes}}.
The {\it minimal Lie algebra} of $X$  is the
  smallest real Lie algebra, $V^X$, containing the vector fields $\{X_t \}_{t
  \in \mathbb{R}}$, namely $V^X ={\rm Lie} (\{X_t \}_{t \in \mathbb{R}})$.

Any {\it integral curve} of $X$ corresponds to an integral curve  $\gamma:
\mathbb{R} \mapsto \mathbb{R} \times N$ of the {\it
suspension} of $X$,
i.e. the vector field $\partial / \partial t+X (t, x)$ on $\mathbb{R} \times
N$ {\cite{FM}}. Every integral curve $\gamma$ of the form $t\mapsto (t,x(t))$
satisfies
\[ \frac{d (\pi_2 \circ \gamma)}{d t} ( t) = (X \circ \gamma) (
   t) . \]
This system is referred to as the {\it associated system} of $X$.
Conversely, every system of first-order differential equations in the normal form
describes the integral curves $(t,x(t))$ of a unique $t$-dependent vector field.
This
establishes a bijection between $t$-dependent vector fields and systems of
first-order differential equations in the normal form, which justifies the use of $X$
to denote both: the $t$-dependent vector field and its associated system.

The {\it associated distribution} of a  $t$-dependent vector field $X$ on $N$ is the generalized distribution $\mathcal{D}^X$ on $N$ spanned by the
  vector fields of $V^X$, i.e.
  \[ \mathcal{D}^X_x = \{Y_x \mid Y \in V^X \} \subset T_x N. \]

Observe that {r}$^X : x \in N \mapsto \dim
\mathcal{D}^X_x \in \mathbb{N} \cup \{0\}$ needs not to be constant on $N$. We
can only guarantee that {r}$^X (x) = k$ implies {r}$^X (x')
\geq {\rm r}^X (x)$ for $x'$ in a neighborhood of $x$. It follows
that {r}$^X$ is a {\it lower semicontinuous function}
which is constant on the connected components of an open and dense subset
$U^X$ of $N$ (cf. {\cite[p. 19]{IV}}), where $\mathcal{D}^X$ becomes a regular
involutive distribution. The most relevant instance for us is when $\mathcal{D}^X$ is determined by a
finite-dimensional $V^X$ and hence $\mathcal{D}^X$ is integrable (in the sense of Stefan--Sussmann) on $N$ {\cite[p. 63]{JPOT}}.

Let us now turn to some fundamental notions appearing in the theory of Lie
systems.

\begin{definition}{\bf (Vessiot 1893 \cite{Ve93})}  A {\it Lie system} is a
system of
first-order ordinary differential equations $X$ on a manifold $N$ such that
$X_t=\sum_{k=1}^rb_k(t)X_k,$
for a certain collection of $t$-dependent functions $b_1,\ldots,b_r$ and a
family of $t$-independent vector fields $X_1,\ldots,X_r$ on $N$ spanning an
$r$-dimensional real Lie algebra of vector fields.
\end{definition}

Following the terminology in \cite{Ib00,Dissertationes}, we call the real Lie algebra
spanned by $X_1,\ldots,X_r$ a {\it Vessiot--Guldberg Lie algebra} for $X$. Its importance is due to its use in devising various
methods of integration of Lie systems
\cite{BCHLS13,CGM00,CGM07,CarRamGra,CLS12Ham}, especially
in the derivation of superposition rules \cite{CGM07}, which allow us to reduce the integration of a
Lie system to deriving finite families of particular solutions.

\begin{definition}
  A {\it superposition rule} depending on $m$ particular solutions for a system
$X$
  on $N$ is a function $\Phi : N^m \times N \rightarrow N$, $x = \Phi
  (x_{(1)}, \ldots, x_{(m)} ; \lambda)$, such that the general solution $x
  (t)$ of $X$ can be brought into the form $x (t) = \Phi (x_{(1)} (t), \ldots,
  x_{(m)} (t) ; \lambda)$, where $x_{(1)} (t), \ldots, x_{(m)} (t)$ is any
  generic family of particular solutions and $\lambda$ is a point of $N$ to be
  related to initial conditions.
\end{definition}

The conditions ensuring that a system $X$ possesses a superposition rule are
stated in the {\it Lie--Scheffers Theorem} (see \cite{BM09II,CGM07,LS} for details).

\begin{theorem} {\bf (Lie--Scheffers Theorem)}
  A system $X$ admits a superposition rule if and only if $X$ is a Lie system.
Equivalently, $X$ possesses a superposition rule if and only if $V^X$ is
finite-dimensional.
\end{theorem}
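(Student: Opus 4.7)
The plan is to prove both directions using the \emph{diagonal prolongation} $\widetilde Y^{(k)}$ of a vector field $Y$ to $N^k$, defined by
$$\widetilde Y^{(k)}(x_1,\ldots,x_k)=(Y(x_1),\ldots,Y(x_k)).$$
This operation is a Lie algebra homomorphism, since $[\widetilde Y^{(k)},\widetilde Z^{(k)}]=\widetilde{[Y,Z]}^{(k)}$, and is injective because $\widetilde Y^{(k)}=0$ forces $Y=0$ by evaluation on the diagonal. Since the definition of Lie system immediately gives the equivalence between ``$X$ is a Lie system'' and ``$V^X$ is finite-dimensional'', it suffices to handle the equivalence with the existence of a superposition rule.

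For ``$V^X$ finite-dimensional $\Rightarrow$ superposition rule'', pick a basis $X_1,\ldots,X_r$ of $V^X$ and choose $m$ large enough that the prolonged distribution spanned by $\widetilde X_1^{(m)},\ldots,\widetilde X_r^{(m)}$ generically has rank $r$ on $N^m$. On $N^{m+1}$ the corresponding distribution is involutive and still of generic rank $r$, so on a dense open subset it admits $(m+1)n-r\ge n$ functionally independent common first integrals, where $n=\dim N$. A transversality argument yields $n$ of them, $F_1,\ldots,F_n$, whose differentials are independent along the last copy of $N$; locally solving $F_j(x_1,\ldots,x_m,x)=k_j$ for $x$ produces $x=\Phi(x_1,\ldots,x_m;k_1,\ldots,k_n)$. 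Since each $F_j$ is constant along integral curves of the suspension of $X$ prolonged to $N^{m+1}$, the function $\Phi$ is a superposition rule.

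For the converse ``superposition rule $\Rightarrow V^X$ finite-dimensional'', given $\Phi$, invert in the last argument to obtain $n$ functionally independent functions $F_1,\ldots,F_n$ on $N^{m+1}$; by the defining property of a superposition rule these are common first integrals of every $\widetilde X_t^{(m+1)}$. Since $[Y,Z]F=Y(ZF)-Z(YF)=0$ whenever $YF=ZF=0$, the $F_j$ are first integrals of every iterated Lie bracket and hence of the entire prolonged minimal Lie algebra $\widetilde{V^X}^{(m+1)}$. Therefore $\widetilde{V^X}^{(m+1)}$ is tangent to the leaves of the $F$-foliation, which have generic dimension $mn$. Restriction to a generic leaf $\Sigma$, parametrised via $(x_1,\ldots,x_m)\mapsto(x_1,\ldots,x_m,\Phi(x_1,\ldots,x_m;k))$ as an open dense subset $U\subset N^m$, identifies $\widetilde Y^{(m+1)}|_\Sigma$ with $\widetilde Y^{(m)}|_U$; combined with the injectivity of the prolongation map $V^X\hookrightarrow\widetilde{V^X}^{(m+1)}$, this forces $\dim V^X=\dim\widetilde{V^X}^{(m+1)}\le mn$.

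The main obstacle is the converse direction: the existence of $\Phi$ directly constrains only the family $\{X_t\}_{t\in\mathbb{R}}$, whereas the conclusion asserts finite-dimensionality of the entire iterated-bracket closure $V^X$. The decisive observation that unlocks this is that a common first integral of a collection of vector fields is automatically a first integral of all their Lie brackets, which propagates the geometric constraint from $\{X_t\}_{t\in\mathbb{R}}$ to $V^X$. One must also handle genericity with care, working on the dense open subset where the relevant distributions have constant rank, so that the restriction-to-a-leaf argument yields the desired finite bound rather than merely a local one.
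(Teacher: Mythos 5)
The paper does not prove this theorem: it is quoted as a classical result with references to \cite{BM09II,CGM07,LS}, and your strategy is recognizably the modern geometric proof from the second of those references. Your forward direction is essentially sound: it rests on the standard (and true, though here unproven) lemma that $\mathbb{R}$-linearly independent vector fields have diagonal prolongations to $N^m$ that become \emph{pointwise} linearly independent on an open dense subset once $m$ is large enough, and your transversality argument for extracting $n$ first integrals with $\partial(F_1,\dots,F_n)/\partial(x^1,\dots,x^n)\neq 0$ is correct because the leaf tangent space meets the last-factor vertical subspace trivially.

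The converse, however, has a genuine gap at the final dimension count. You correctly observe that common first integrals propagate to iterated brackets, so all of $\widetilde{V^X}^{(m+1)}$ is tangent to the $mn$-dimensional foliation by level sets of $F$. But tangency to an $mn$-dimensional foliation does not bound the dimension of a Lie algebra of vector fields (all vector fields on $N^{m+1}$ are tangent to the trivial one-leaf foliation), and your restriction-to-a-leaf step only identifies $\widetilde Y^{(m+1)}|_\Sigma$ with $\widetilde Y^{(m)}|_U$ — a space that is \emph{isomorphic to $V^X$ itself}, so the claimed inequality $\dim \widetilde{V^X}^{(m+1)}\le mn$ does not follow; the argument is circular exactly where the bound should appear. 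The missing idea is to fix a generic base point $p^0=(x^0_{(1)},\dots,x^0_{(m)})$ and vary the leaf label $k$ rather than fixing $k$ and varying the base point: tangency of $\widetilde Y^{(m+1)}$ to the graph $x=\Phi(\,\cdot\,;k)$ reads
\begin{equation*}
Y^j\bigl(\Phi(p^0;k)\bigr)=\sum_{a=1}^{m}\sum_{i=1}^{n}\frac{\partial \Phi^j}{\partial x^i_{(a)}}(p^0;k)\,Y^i\bigl(x^0_{(a)}\bigr),
\end{equation*}
so on the open set $\Phi(p^0;N)\subset N$ every $Y\in V^X$ is the image of the single evaluation $\bigl(Y(x^0_{(1)}),\dots,Y(x^0_{(m)})\bigr)\in\mathbb{R}^{mn}$ under a fixed linear map into vector fields. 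Hence the restriction of $V^X$ to that open set spans at most $mn$ dimensions, and injectivity of restriction at generic points (or a covering argument) yields $\dim V^X\le mn$. Without this step your proof of the converse does not close.
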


The simplest nonlinear example of a Lie system is the Riccati equation, i.e.
\begin{equation}\label{Riccati}
\frac{dx}{dt}=a_0(t)+a_1(t)x+a_2(t)x^2,
\end{equation}
where $a_0(t)$, $a_1(t)$ and $a_2(t)$ are arbitrary $t$-dependent functions \cite{PW}. This equation is related to the $t$-dependent vector field
\[
X_t=(a_0(t)+a_1(t)x+a_2(t)x^2)\frac{\partial}{\partial x},
\]
which can be written as the linear combination
$X_t=a_0(t)X_1+a_1(t)X_2+a_2(t)X_3$, with
\[
X_1=\frac{\partial}{\partial x},\qquad X_2=x\frac{\partial }{\partial x},\qquad
X_3=x^2\frac{\partial }{\partial x},
\]
which satisfy the commutation relations
\[
[X_1,X_2]=X_1,\qquad [X_1,X_3]=2X_2,\qquad [X_2,X_3]=X_3,
\]
and therefore spanning a finite-dimensional real Lie algebra of vector fields
isomorphic to $\mathfrak{sl}(2,\mathbb{R})$ \cite{Dissertationes,PW}. A
superposition rule for Riccati equations is given by the function
$\Phi:\mathbb{R}^3\times \mathbb{R}\rightarrow \mathbb{R}$ of the form \cite{PW}
\[
\Phi(u_{(1)},u_{(2)},u_{(3)};\lambda)=\frac{u_{(1)}(u_{(2)}-u_{(3)})-\lambda u_{(2)}(u_{(3)}
-u_{(1)})}{(u_{(2)}-u_{(3)})-\lambda(u_{(3)}-u_{(1)})},
\]
which allows us to recover the general solution, $x(t)$, of any Riccati equation in terms
of three different particular solutions, $x_{(1)}(t),x_{(2)}(t),x_{(3)}(t)$, and
a real constant $\lambda$ as follows:
\[
x(t)=\Phi(x_{(1)}(t),x_{(2)}(t),x_{(3)}(t);\lambda).
\]

One can devise more powerful methods to study Lie systems admitting particular types of Vessiot--Guldberg Lie algebras \cite{CLS12Ham}, e.g.,  to
consider Lie--Hamilton systems \cite{ADR11,CLS12Ham}.

\begin{definition}
A {\it Lie--Hamilton system} $X$ on $N$ is a Lie system that possesses a
Vessiot--Guldberg Lie algebra of Hamiltonian vector fields with respect to a
Poisson bivector on $N$.
\end{definition}

As expected, imposing additional conditions on the Vessiot--Guldberg Lie
algebras reduces substantially classes of Lie systems we can consider \cite{BHLS12Cla}. For instance, most Riccati equations are not Lie--Hamilton systems: they are defined on the real line and $X=0$ is the unique Lie--Hamilton system on the real line. More specifically, every Poisson bivector on $\mathbb{R}$ is null and, recalling
that every Hamiltonian vector field can be written in the form
$X=\widehat\Lambda(df)$ for a Hamiltonian function $f$, we see that
every Vessiot--Guldberg Lie algebra of Hamiltonian vector fields on $\mathbb{R}$
is trivial. So, given a Lie--Hamilton system on the real line we have that $V^X=0$ and $X=0$. Despite this, Lie--Hamilton systems can be applied to study relevant differential equations, e.g.,
second-order Riccati equations, second-order Kummer--Schwarz equations
\cite{CLS12Ham}, and in a particular type of mechanical systems generated by Lie
algebras of functions \cite{BBHMR09,BBR06,BO98}.

The main feature of a Lie--Hamilton system is the following property \cite{CLS12Ham}.
\begin{theorem}\label{BasicLieHam}
Given a Lie--Hamilton system $X$ admitting a Vessiot--Guldberg Lie algebra of Hamiltonian vector fields relative to a Poisson manifold $(N,\Lambda)$,  there exists
a $t$-dependent Hamiltonian $h:(t,x)\in \mathbb{R}\times N\mapsto h_t(x)=h(t,x)\in
\mathbb{R}$ such that $X_t=\widehat{\Lambda}(dh_t)$, for every $t\in\mathbb{R}$, and the functions
$\{h_t\}_{t\in\mathbb{R}}$ span a finite-dimensional real Lie algebra  with respect to the Poisson structure $\{\cdot,\cdot\}_\Lambda$ induced by $(N,\Lambda)$.
\end{theorem}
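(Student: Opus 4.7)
The plan is to choose Hamiltonian functions for a basis of a Vessiot--Guldberg Lie algebra of $X$ and then enlarge the resulting space by the Casimir discrepancies that appear, thereby obtaining a finite-dimensional Lie subalgebra of $(C^\infty(N),\{\cdot,\cdot\}_\Lambda)$ that contains every $h_t$.

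First, I would fix a finite-dimensional real Lie algebra $V$ of Hamiltonian vector fields containing $\{X_t\}_{t\in\mathbb{R}}$, which exists by the Lie--Hamilton hypothesis. Taking a basis $X_1,\ldots,X_r$ of $V$ with structure constants $[X_i,X_j]=\sum_k c^k_{ij}X_k$, I would write $X_t=\sum_{i=1}^r b_i(t)X_i$ and choose, for each $i$, a function $h_i\in C^\infty(N)$ with $X_i=\widehat\Lambda(dh_i)$. Setting $h_t:=\sum_i b_i(t)h_i$ immediately yields $X_t=\widehat\Lambda(dh_t)$, establishing the Hamiltonian part of the statement.

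The main step is to close the linear span of the $h_i$ under the Poisson bracket. Using the Lie algebra morphism $f\mapsto \widehat\Lambda(df)$ recalled in Section~2, one has $\widehat\Lambda(d\{h_i,h_j\}_\Lambda)=[X_{h_i},X_{h_j}]=\sum_k c^k_{ij}X_k=\widehat\Lambda(d\sum_k c^k_{ij}h_k)$, so the difference $\xi_{ij}:=\{h_i,h_j\}_\Lambda-\sum_k c^k_{ij}h_k$ lies in $\mathrm{Cas}(N,\Lambda)$. Since Casimirs are central in $(C^\infty(N),\{\cdot,\cdot\}_\Lambda)$, we get $\{h_i,\xi_{jk}\}_\Lambda=0$ and $\{\xi_{ij},\xi_{kl}\}_\Lambda=0$, so the linear space $\mathcal{H}:=\langle h_1,\ldots,h_r,\ \xi_{ij}:1\le i<j\le r\rangle$ has dimension at most $r+\binom{r}{2}$, is closed under $\{\cdot,\cdot\}_\Lambda$, and contains $\{h_t\}_{t\in\mathbb{R}}$. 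This is the desired finite-dimensional real Lie algebra.

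The main conceptual subtlety is that the Hamiltonian functions $h_i$ are defined only up to Casimirs, and one might be tempted to exploit this ambiguity to kill the cocycle $(i,j)\mapsto \xi_{ij}$ from the outset. I would not attempt this trivialisation, which need not be possible in general: the theorem merely requires the $h_t$ to sit inside \emph{some} finite-dimensional Lie subalgebra of functions, and $\mathcal{H}$ provides one essentially for free once one observes that Casimirs act as a central extension. No further machinery is needed beyond the standard identity $\widehat\Lambda(d\{f,g\}_\Lambda)=[\widehat\Lambda(df),\widehat\Lambda(dg)]$ already recorded in Section~2.
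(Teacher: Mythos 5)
Your argument is correct and is essentially the same as the paper's: while Theorem~\ref{BasicLieHam} itself is only quoted here from the Lie--Hamilton literature, the paper's proof of its Dirac generalization (Theorem~\ref{MT}) proceeds exactly as you do, choosing Hamiltonians $T(X_i)=h_i$ for a basis of $V^X$, observing that the discrepancies $\Upsilon(X_i,X_j)=\{T(X_i),T(X_j)\}_L-T([X_i,X_j])$ are Casimirs, and closing up to the finite-dimensional central extension $\mathfrak{W}=\mathfrak{W}_0+\mathfrak{W}_{\mathcal{C}}$. Your remark that one should not try to trivialise the cocycle $\xi_{ij}$ matches the paper's treatment as well.
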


In the latter case, we say that $X$ admits a {\it Lie--Hamiltonian structure} $(N,\Lambda,h)$. The following observations are immediate consequences of the corresponding definitions (see \cite{BCHLS13,CLS12Ham} for details).

\begin{proposition}\label{Prop:Int} Let us assume that $X$ is a Lie--Hamilton system with
a Lie--Hamiltonian $(N,\{\cdot,\cdot\},h)$. Then, $f$ is a $t$-independent
constant of the motion for $X$ if and only if it Poisson commutes with the elements
of ${\rm Lie}(\{h_t\}_{t\in\mathbb{R}},\{\cdot,\cdot\})$. The space
$\mathcal{I}^X$ of $t$-independent constants of the motion for $X$ is a Poisson
algebra $(\mathcal{I}^X,\sbt,\{\cdot,\cdot\})$. If $f$ is a $t$-independent constant of
motion for $X$, then the Hamiltonian vector field associated to $f$ is a
$t$-independent Lie symmetry for $X$.
\end{proposition}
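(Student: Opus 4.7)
The plan is to reduce all three claims to the identity $X_t f = \{h_t, f\}$, which follows from the Lie--Hamiltonian relation $X_t = \widehat\Lambda(dh_t)$ guaranteed by Theorem~\ref{BasicLieHam}, together with $\widehat\Lambda(dh_t)f = \Lambda(dh_t, df) = \{h_t, f\}$.

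For the first statement, I would note that a $t$-independent function $f$ is a constant of the motion for $X$ precisely when $X_t f = 0$ for every $t \in \mathbb{R}$, since along any integral curve $\gamma$ of $X$ one has $\frac{d}{dt}(f \circ \gamma) = X_t f$. By the identity above, this is equivalent to $\{h_t, f\} = 0$ for every $t$. The set $C_f := \{g \in C^\infty(N) : \{g, f\} = 0\}$ is a Lie subalgebra of $(C^\infty(N), \{\cdot, \cdot\})$ by the Jacobi identity: if $g_1, g_2 \in C_f$ then $\{\{g_1, g_2\}, f\} = \{g_1, \{g_2, f\}\} - \{g_2, \{g_1, f\}\} = 0$. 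Hence $\{h_t\}_{t \in \mathbb{R}} \subset C_f$ if and only if ${\rm Lie}(\{h_t\}_{t \in \mathbb{R}}, \{\cdot, \cdot\}) \subset C_f$, which yields the desired equivalence.

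For the Poisson algebra structure on $\mathcal{I}^X$, closedness under sums is linear, and closedness under the product follows from the Leibniz rule: $X_t(fg) = (X_t f)g + f(X_t g) = 0$ whenever $f, g \in \mathcal{I}^X$. Closedness under the Poisson bracket again uses Jacobi: for any $f, g \in \mathcal{I}^X$ we have $\{h_t, \{f, g\}\} = \{\{h_t, f\}, g\} + \{f, \{h_t, g\}\} = 0$, so by the first part $\{f, g\} \in \mathcal{I}^X$.

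Finally, for the Lie-symmetry claim, if $f \in \mathcal{I}^X$ then $\{f, h_t\} = 0$ for every $t$ by skew-symmetry of the Poisson bracket. Since the assignment $g \mapsto X_g = \widehat\Lambda(dg)$ is a Lie algebra homomorphism from $(C^\infty(N), \{\cdot, \cdot\})$ into the Lie algebra of vector fields on $N$, we conclude $[X_f, X_t] = [X_f, X_{h_t}] = X_{\{f, h_t\}} = 0$, so $X_f$ commutes with every $X_t$ and is therefore a $t$-independent Lie symmetry of $X$. The only conceptually delicate point is the passage from Poisson-commuting with the generators $\{h_t\}$ to Poisson-commuting with the entire Lie algebra they generate, which is precisely what Jacobi delivers through the subalgebra $C_f$; the remaining computations are routine applications of Leibniz and the homomorphism property.
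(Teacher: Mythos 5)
Your proof is correct. The paper itself gives no argument for this proposition (it declares the claims ``immediate consequences of the corresponding definitions'' and defers to the cited references), and your reduction of all three statements to the identity $X_t f=\{h_t,f\}$, followed by the Jacobi identity for the centralizer $C_f$, the Leibniz rule, and the homomorphism property $[X_f,X_g]=X_{\{f,g\}}$, is exactly the standard argument those references rely on.
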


In order to illustrate the above notions, let us provide a
simple example of mathematical and physical interest: the system of Riccati equations
\begin{equation}\label{CoupledRic}
\begin{aligned}
\frac{dx_i}{dt}&=a_0(t)+a_1(t)x_i+a_2(t)x_i^2,\qquad i=1,\ldots,4,\\
\end{aligned}
\end{equation}
where $a_0(t),a_1(t),a_2(t)$ are arbitrary $t$-dependent functions and we assume
that $(x_1-x_2)(x_2-x_3)(x_3-x_4)\neq 0$.
The determination of a common $t$-independent constant of the motion $F$ for all systems of this type
leads to deriving a superposition rule for Riccati equations (cf. \cite{CGM07}).
The standard methods to derive $F$ require the integration of a system of PDEs \cite{CGM07} or ODEs \cite{PW}. Nevertheless, we next show that, since  (\ref{CoupledRic}) is a Lie--Hamilton system, we can obtain $F$ from algebraic manipulations without integrating any system of  PDE or ODEs \cite{BCHLS13}.

System (\ref{CoupledRic}) determines integral curves of the $t$-dependent vector field
\begin{equation}\label{Field}
X_{R}=\sum_{i=1}^4(a_0(t)+a_1(t)x_i+a_2(t)x_i^2)\frac{\partial}{\partial x_i}=a_0(t)X_1+a_1(t)X_2+a_2(t)X_3,
\end{equation}
with
\[
X_1=\sum_{i=1}^4\frac{\partial}{\partial x_i},\qquad X_2=\sum_{i=1}^4x_i\frac{\partial}{\partial x_i},\qquad X_3=\sum_{i=1}^4x_i^2\frac{\partial}{\partial x_i}.
\]
Using that $X_1, X_2,$ and $X_3$ span a finite-dimensional real Lie algebra of vector fields, we see that $X_R$ is a Lie system.
Consider the two-form
\begin{equation}\label{SymCoupledRic}
\omega_R=\frac{dx_1\wedge dx_2}{(x_1-x_2)^2}+\frac{dx_3\wedge dx_4}{(x_3-x_4)^2}.
\end{equation}
Note that $\omega_R$ is a symplectic form on $\mathcal{O}=\{(x_1,x_2,x_3,x_4)|(x_1-x_2)(x_2-x_3)(x_3-x_4)\neq 0\}\subset \mathbb{R}^4$ and
\[
\begin{gathered}
\iota_{X_1}\omega_R=d\left(\frac{1}{x_1-x_2}+\frac{1}{x_3-x_4}\right),\,\,
\qquad \iota_{X_2}\omega_R=
\frac 12d\left(\frac{x_1+x_2}{x_1-x_2}+\frac{x_3+x_4}{x_3-x_4}\right),\\
\iota_{X_3}\omega_R=d\left(\frac{x_1 x_2}{x_1-x_2}+\frac{x_3 x_4}{x_3-x_4}\right).
\end{gathered}
\]
Hence, the vector fields $X_1,X_2,$ and $X_3$ are Hamiltonian with respect to $(\mathcal{O},\omega_R)$ with the Hamiltonian functions
\[
\begin{gathered}
h_1=-\frac{1}{x_1-x_2}-\frac{1}{x_3-x_4},\quad
h_2=-\frac 12\left(\frac{x_1+x_2}{x_1-x_2}+\frac{x_3+x_4}{x_3-x_4}\right),
\\h_3=-\frac{x_1 x_2}{x_1-x_2}-\frac{x_3 x_4}{x_3-x_4},
\end{gathered}
\]
respectively. This shows that system (\ref{CoupledRic}) is a Lie--Hamilton system. Additionally,
\[
\{h_1,h_2\}=h_1,\qquad \{h_1,h_3\}=2h_2,\qquad \{h_2,h_3\}=h_3,
\]
where $\{\cdot,\cdot\}$ stands for the natural Poisson bracket induced by the symplectic form $\omega_R$, and ${\rm Lie}(\{h_1,h_2,h_3\},\{\cdot,\cdot\})\simeq \mathfrak{sl}(2,\mathbb{R})$. It is known that $h_1$, $h_2,$ and $h_3$ Poisson commute with $h_1h_3-h_2^2$, which can be considered, up to a constant factor, as a Casimir function of the Poisson manifold $C^\infty(\mathfrak{sl}(2,\mathbb{R})^*)$ (see \cite{BBHMR09} for details). In other words,
\[
C=h_1h_3-h_2^2=\frac{(x_2-x_3)(x_1-x_4)}{(x_1-x_2)(x_3-x_4)}
\]
Poisson commutes with $h_1,h_2$ and $h_3$. Using that $0=\{h_k,C\}=X_k C$ for $k=1,2,3$, we see that $C$
is a common first-integral for $X_1,X_2,$ and $X_3$. From this, it turns out that $C$ is a $t$-independent constant of the motion for $X$.
Observe that the above method can be applied to other Lie--Hamilton systems {\it mutatis mutandis}. The works \cite{ADR11,BCHLS13,CLS12Ham} include many other techniques that can be applied to these systems.
 We will see in this work that this kind of procedures can be applied to even more general types of systems.

\section{On the necessity of Dirac--Lie systems}

Many systems have recently been found to be Lie--Hamilton systems \cite{ADR11,BCHLS13,CLS12Ham}. This permitted us to use several geometric and algebraic techniques to study their superposition rules, constants of the motion, and Lie symmetries. Despite the advantages of these methods, they are not applicable to all Lie systems, as some of them do not
admit any Vessiot--Guldberg Lie algebra of Hamiltonian vector fields.
Let us illustrate this through several examples.

Consider a third-order
Kummer--Schwarz equation \cite{Be88,GL12} of the form
\begin{equation}\label{KS3}
\frac{d^3x}{dt^3}=\frac 32\left(
\frac{dx}{dt}\right)^{-1}\!\!\left(\frac{d^2x}{dt^2}\right)^{2}\!\!-2c_0\left(\frac{
dx}{dt}\right)^3\!\!+2b_1(t)\frac{dx}{dt},
\end{equation}
where $c_0$ is a real constant and $b_1=b_1(t)$ is any
$t$-dependent function.  This differential equation is known to be a HODE Lie system \cite{CGL11}. This means that the system of first-order differential equations obtained by adding the variables $v\equiv dx/dt$ and $a\equiv d^2x/dt^2$, namely
\begin{equation}\label{firstKS3}
\frac{dx}{dt}=v,\qquad \frac{dv}{dt}=a,\qquad \frac{da}{dt}=\frac 32 \frac{a^2}v-2c_0v^3+2b_1(t)v\,,
\end{equation}	
 is a Lie system. Indeed, it is associated to the $t$-dependent vector field
\begin{equation}\label{Ex}
X^{3KS}_t=v\frac{\partial}{\partial x}+a\frac{\partial}{\partial v}+\left(\frac 32
\frac{a^2}v-2c_0v^3+2b_1(t)v\right)\frac{\partial}{\partial
a}=Y_3+b_1(t)Y_1,
\end{equation}
where the vector fields on $\mathcal{O}_2=\{(x,v,a)\in{T}^2\mathbb{R}\mid v\neq 0\}$ given by
\begin{equation}\label{VFKS1}
Y_1=2v\frac{\partial}{\partial a},\quad Y_2=v\frac{\partial}{\partial v}+2a\frac{\partial}{\partial a},\quad Y_3=v\frac{\partial}{\partial x}+a\frac{\partial}{\partial v}+\left(\frac 32
\frac{a^2}v-2c_0v^3\right)\frac{\partial}{\partial a},
\end{equation}
 satisfy
the commutation relations
\begin{equation}
[Y_1,Y_3]=2Y_2,\quad [Y_1,Y_2]=Y_1,\quad [Y_2,Y_3]=Y_3.
\end{equation}
In consequence, $Y_1, Y_2,$ and $Y_3$ span a three-dimensional Lie algebra of vector fields $V$ isomorphic
to $\mathfrak{sl}(2,\mathbb{R})$ and $X^{3KS}$ becomes a $t$-dependent vector field taking values in $V$, i.e. $X^{3KS}$ is a Lie system. However, $X^{3KS}$ is not a Lie--Hamilton system when $b_1(t)$ is not a constant. Indeed, in this case $\mathcal{D}^{X^{3KS}}$ coincides with $T\mathcal{O}_2$ on $\mathcal{O}_2$. If $X^{3KS}$ were also a Lie--Hamilton system with respect to $(N,\Lambda)$, then $V^{X^{3KS}}$ would consist of Hamiltonian vector fields and the characteristic distribution associated to $\Lambda$ would have odd-dimensional rank on $\mathcal{O}_2$. This is impossible, as the local Hamiltonian vector fields of a Poisson manifold span a generalized distribution of even rank at each point. Our previous argument can easily be generalised to formulate the following `no-go' theorem.

\begin{proposition}\label{NoGo} If $X$ is a Lie system on an odd-dimensional manifold $N$ satisfying that $\mathcal{D}^X_{x_0}=T_{x_0}N$ for a point $x_0$ in $N$, then $X$ is not a Lie--Hamilton system on $N$.
\end{proposition}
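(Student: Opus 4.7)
The plan is to argue by contradiction, mimicking the argument already sketched for the third-order Kummer--Schwarz example but stated intrinsically. Suppose $X$ were a Lie--Hamilton system with respect to some Poisson manifold $(N,\Lambda)$. Then every element of the Vessiot--Guldberg Lie algebra $V^X$ is of the form $\widehat\Lambda(df)$ for some $f\in C^\infty(N)$, and hence $V^X\subset \widehat{\Lambda}(T^*N)$ pointwise. Evaluating at $x_0$ yields
\[
\mathcal{D}^X_{x_0}=\{Y_{x_0}\mid Y\in V^X\}\subset \widehat{\Lambda}_{x_0}(T^*_{x_0}N).
\]
The hypothesis $\mathcal{D}^X_{x_0}=T_{x_0}N$ then forces $\widehat{\Lambda}_{x_0}(T^*_{x_0}N)=T_{x_0}N$.

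Next I would invoke the standard linear-algebraic fact that the image of the bundle map $\widehat{\Lambda}_{x_0}\colon T^*_{x_0}N\to T_{x_0}N$ has the same dimension as the rank of the skew-symmetric bilinear form $\Lambda_{x_0}$ on $T^*_{x_0}N$, which is necessarily even. Equivalently, the characteristic distribution of a Poisson manifold has even rank at every point, because each symplectic leaf is even-dimensional. Combining this with the previous identity gives $\dim N=\operatorname{rank}\widehat\Lambda_{x_0}\in 2\mathbb{Z}$, contradicting the odd-dimensionality of $N$.

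There is essentially no obstacle beyond recording these two observations cleanly; the only point worth double-checking is the inclusion $\mathcal{D}^X_{x_0}\subset \widehat\Lambda_{x_0}(T^*_{x_0}N)$, which holds because the evaluation at $x_0$ of any Hamiltonian vector field $\widehat{\Lambda}(df)$ lies in $\widehat{\Lambda}_{x_0}(T^*_{x_0}N)$, and $V^X$ is by hypothesis generated as a Lie algebra by such vector fields (and commutators of Hamiltonian vector fields remain Hamiltonian on a Poisson manifold). The conclusion $X$ is not a Lie--Hamilton system then follows immediately.
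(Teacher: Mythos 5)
Your proof is correct and follows essentially the same route the paper takes: it argues by contradiction that if $X$ were Lie--Hamilton, the minimal Lie algebra $V^X$ would consist of Hamiltonian vector fields, so $\mathcal{D}^X_{x_0}$ would sit inside the characteristic distribution $\widehat\Lambda_{x_0}(T^*_{x_0}N)$, whose rank is even, contradicting $\mathcal{D}^X_{x_0}=T_{x_0}N$ with $\dim N$ odd. The paper leaves this as ``the previous argument can easily be generalised''; your write-up simply makes that generalisation explicit.
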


Note that from the properties of ${\rm r}^X$ it follows that, if $\mathcal{D}^X_{x_0}=T_{x_0}N$ for a point $x_0$, then $\mathcal{D}^X_{x}=T_{x}N$ for $x$ in an open neighborhood $U_{x_0}\ni x_0$. Hence, we can merely consider whether $X$ is a Lie--Hamilton system on $N\backslash U_{x_0}$.

Despite the previous negative results, system (\ref{firstKS3}) admits another interesting property:  we can endow the
manifold $\mathcal{O}_2$ with a presymplectic form $\omega_{3KS}$ in such a way that $V^{X^{3KS}}$ consists of Hamiltonian vector fields with respect to it. Indeed, by considering the equations $
\mathcal{L}_{Y_1}\omega_{3KS}=\mathcal{L}_{Y_2}\omega_{3KS}=\mathcal{L}_{Y_3}\omega_{3KS}=0$ and $d\omega_{3KS}=0$, we can readily find the presymplectic form
\[
\omega_{3KS}=\frac{dv\wedge da}{v^3}
\]
on $\mathcal{O}_2$. Additionally, we see that
\begin{equation}\label{3KSHamFun}
\iota_{Y_1}\omega_{3KS}=d\left(\frac{2}{v}\right),\qquad \iota_{Y_2}\omega_{3KS}=d\left(\frac{a}{v^2}\right), \qquad \iota_{Y_3}\omega_{3KS}=d\left(\frac{a^2}{2v^3}+2c_0v\right).
\end{equation}
So, the system $X^{3KS}$ becomes a Lie system with a Vessiot--Guldberg Lie algebra of Hamiltonian vector fields
with respect to $\omega_{3KS}$. As seen later on, systems of this type can be studied through
appropriate generalizations of the methods employed to investigate Lie--Hamilton systems.

Another example of a Lie system which is not a Lie--Hamilton system but admits a Vessiot--Guldberg Lie algebra of Hamiltonian vector fields with respect to a presymplectic form is the {\it Riccati system}
\begin{equation}\label{Partial}
\left\{
\begin{aligned}
\frac{ds}{dt}&=-4a(t)u s-2d(t)s,\qquad &\frac{dx}{dt}&=(c(t)+4a(t)u)x+f(t)-2u g(t), \\
\frac{du}{dt}&=-b(t)+2c(t) u+4a(t)u^2,\qquad &\frac{dy}{dt}&=(2a(t)x-g(t))v,\\
\frac{dv}{dt}&=(c(t)+4a(t)u)v,\qquad &\frac{dz}{dt}&=a(t)x^2-g(t)x, \\
\frac{dw}{dt}&=a(t)v^2, &\\
\end{aligned}\right.
\end{equation}
where $a(t)$, $b(t),$  $c(t)$, $d(t)$, $f(t)$ and $g(t)$ are arbitrary $t$-dependent functions. The interest of this system is due to its use in solving diffusion-type equations, Burger's equations, and other PDEs \cite{SSVG11}.

Taking into account that every particular solution $(s(t),u(t),v(t),w(t),x(t),y(t),z(t))$ of (\ref{Partial}), with $v(t_0)=0$ ($s(t_0)=0$) for a certain $t_0\in\mathbb{R}$, satisfies
$v(t)=0$ ($s(t)=0$) for every $t$, we can restrict ourselves to analyzing system (\ref{Partial}) on the submanifold $M=\{(s,u,v,w,x,y,z)\in\mathbb{R}^7\,|\,v\neq 0, s\neq 0\}$. This will simplify the application of our  techniques without omitting any relevant detail.

System (\ref{Partial}) describes integral curves of the $t$-dependent vector field
\[
X^{RS}_t=a(t)X_1-b(t)X_2+c(t)X_3-2d(t)X_4+f(t)X_5+g(t)X_6,
\]
where
\[
\begin{gathered}
 X_1=-4us\frac{\partial}{\partial s}+4u^2\frac{\partial}{\partial u}+4uv\frac{\partial}{\partial v}+v^2\frac{\partial}{\partial w}+4ux\frac{\partial}{\partial x}+2xv\frac{\partial}{\partial y}+x^2\frac{\partial}{\partial z},\\  X_2=\frac{\partial}{\partial u},\quad X_3=2u\frac{\partial}{\partial u}+v\frac{\partial }{\partial  v}+x\frac{\partial}{\partial x}, \quad X_4=s\frac{\partial}{\partial s},\quad  X_5=\frac{\partial}{\partial x},\\ X_6=-2u\frac{\partial}{\partial x}-v\frac{\partial}{\partial y}-x\frac{\partial}{\partial z},\quad X_7=\frac{\partial}{\partial z}.
\end{gathered}
\]
Their commutation relations are
\[
\begin{gathered}
\left[X_1,X_2\right]=4(X_4 -X_3), \quad [X_1,X_3]=-2X_1,\quad [X_1,X_5]=2X_6, \quad [X_1.X_6]=0,\\
[X_2,X_3]=2X_2, \quad [X_2,X_5]=0,\quad [X_2,X_6]=-2X_5, \\
 [X_3,X_5]=-X_5, \quad [X_3,X_6]=X_6,\\
[X_5,X_6]=-X_7, \\
\end{gathered}
\]
and $X_4$ and $X_7$ commute with all the vector fields.
Hence, system (\ref{Partial}) is a Lie system associated to a Vessiot--Guldberg Lie algebra $V$ isomorphic to $(\mathfrak{sl}(2,\mathbb{R})\ltimes \mathfrak{h_2})\oplus \mathbb{R}$, where $\mathfrak{sl}(2,\mathbb{R})\simeq \langle X_1,X_2,X_4-X_3\rangle$, $\mathfrak{h_2}\simeq \langle X_5,X_6,X_7\rangle$ and $\mathbb{R}\simeq \langle X_4\rangle$.
It is worth noting that this new example of Lie system is one of the few Lie systems related to remarkable PDEs until now \cite{CGM07}.

Observe that (\ref{Partial}) is not a Lie--Hamilton system when $V^{X^{RS}}=V$. In this case $\mathcal{D}^{X^{RS}}_p=T_pM$ for any $p\in M$ and, in view of Proposition \ref{NoGo} and the fact  that $\dim T_pM=7$, the system $X^{RS}$ is not a Lie--Hamilton system on $M$.

Nevertheless, we can look for a presymplectic form turning $X^{RS}$ into a Lie system with a Vessiot--Guldberg Lie algebra of Hamiltonian vector fields. Looking for a non-trivial solution of the system of equations $\mathcal{L}_{X_k}\omega_{RS}=0$, with $k=1,\ldots,7$, and $d\omega_{RS}=0$, one can find the presymplectic two-form
\[
\omega_{RS}=-\frac{4wdu\wedge dw}{v^2}+\frac{dv\wedge dw}{v}+\frac{4w^2du\wedge dv}{v^3}.
\]
In addition, we can readily see that $d\omega_{RS}=0$ and $X_1,\ldots,X_r$ are Hamiltonian vector fields:
\begin{equation}\label{HamPar}
\iota_{X_1}\omega_{RS}=d\left(4uw-\frac{8u^2w^2}{v^2}-\frac{v^2}{2}\right),\quad \iota_{X_2}\omega_{RS}=-d\left(\frac{2w^2}{v^2}\right),\\  \iota_{X_3}\omega_{RS}=d\left(w-\frac{4w^2u}{v^2}\right),
\end{equation}
and $\iota_{X_k}\omega_{RS}=0$ for $k=4,\ldots,7$.

Apart from the above examples, other non Lie--Hamilton systems that admit a Vessiot--Guldberg Lie algebra of Hamiltonian vector fields with respect to a presymplectic form can be found in the study of certain reduced Ermakov systems \cite{HG04}, Wei--Norman equations for dissipative quantum oscillators \cite{Dissertationes},  and $\mathfrak{sl}(2,\mathbb{R})$--Lie systems \cite{Pi12}.

A straightforward generalization of the concept of a Lie--Hamilton system to Dirac manifolds would be a Lie system admitting a Vessiot--Guldberg Lie algebra $V$ of vector fields for which there exists a Dirac structure $L$ such that $V$ consists of $L$-Hamiltonian vector fields.
Nevertheless, this definition does not make too much sense, as every Lie system is of this type. If $X$ is a Lie system on $N$, the subbundle $L\equiv TN\subset \mathcal{P}N$ gives rise to a Dirac manifold $(N,L)$, where all vector fields $X\in
\Gamma(L)$ are $L$-Hamiltonian with a zero
$L$-Hamiltonian function. Additionally, examples like this do
not provide any additional information about the Lie system. As in the case of
Lie--Hamiltonian systems \cite{CLS12Ham}, we aim at using the Hamiltonian functions related to the
vector fields of $V$ to study the properties of Dirac--Lie systems.
Unfortunately, these functions are zero in the previous trivial example.

In view of the above-mentioned reasons, it only makes nontrivial sense to consider Dirac--Lie
systems as associated to a fixed Dirac structure. Particularly, the notion becomes
useful only when the elements of $V^X$ admit a rich family of $L$-Hamiltonian functions. This leads to the following definition.

\begin{definition}
  A {\it Dirac--Lie system} is a triple $(N, L, X)$, where $(N, L)$ stands for a
  Dirac manifold and $X$ is a Lie system admitting a Vessiot--Guldberg Lie
algebra of
  $L$-Hamiltonian vector fields.
\end{definition}

Recall that every presymplectic manifold $(N,
\omega)$ gives rise to a Dirac manifold $(N,L^\omega)$ whose distribution
$L^\omega$ is spanned by elements
of $\Gamma(TN\oplus_NT^*N)$ of the form $X - \iota_{X} \omega$ with $X\in\Gamma(TN)$. Obviously, this shows
that the Hamiltonian vector fields for $(N,\omega)$ are $L$-Hamiltonian vector
fields relative to $(N,L)$. From here, it
follows that $(\mathcal{O}_2,L^{\omega_{3KS}},X^{3KS})$ and
$(M, L^{\omega_{RS}}, X^{RS})$ are Dirac--Lie systems. Moreover, note that system (\ref{CoupledRic}), which was proved to
be a Lie--Hamilton system, gives also rise to a Dirac--Lie system $(\mathcal{O},L^{\omega_R},X_R)$.

\section{Dirac--Lie Hamiltonians}

In view of Theorem \ref{BasicLieHam}, every Lie--Hamilton system admits a Lie--Hamiltonian. Since Dirac--Lie systems are  generalizations of these systems, it is
natural to
investigate whether Dirac--Lie systems admit an analogous property.

As an example, consider again the third-order Kummer--Schwarz equation in first-order form (\ref{firstKS3}). Remind that $Y_1$, $Y_2$, and $Y_3$ are Hamiltonian vector fields with respect to
the presymplectic manifold $(\mathcal{O}_2,\omega_{3KS})$.  It follows from relations (\ref{3KSHamFun}) that the vector fields $Y_1$, $Y_2,$ and $Y_3$ have Hamiltonian
functions
\begin{equation}\label{Fun3KS}
h_1=-\frac{2}{v},\qquad h_2=-\frac{a}{v^2},\qquad h_3=-\frac{a^2}{2v^3}-2c_0v,
\end{equation}
respectively. Moreover,
\[
\{ h_1, h_3 \}= 2 h_2, \quad \{h_1,
h_2 \} = h_1, \quad \{ h_2, h_3 \}=
h_3,
\]
where $\{\cdot,\cdot\}$ is the Poisson bracket on ${\rm Adm}(\mathcal{O}_2,\omega_{3KS})$  induced by $\omega_{3KS}$. In consequence, $h_1,h_2, $ and $h_3$ span a finite-dimensional real Lie algebra
isomorphic to $\mathfrak{sl}(2,\mathbb{R})$. Thus,
every $X^{3KS}_t$ is a Hamiltonian vector field with Hamiltonian function $
h^{3KS}_t = h_3 + b_1 (t) h_1$ and the space
${\rm Lie}(\{h^{3KS}_t\}_{t\in\mathbb{R}},\{\cdot,\cdot\})$ becomes a
finite-dimensional real Lie algebra. This enables us to associate $X^{3KS}$ to a curve in ${\rm Lie}(\{h^{3KS}_t\}_{t\in\mathbb{R}},\{\cdot,\cdot\})$. The similarity of  $(\mathcal{O}_2,\omega_{3KS},h^{3KS})$
with Lie--Hamiltonians are immediate.

If we now turn to the Riccati system (\ref{Partial}), we will see that we can obtain a similar result. More specifically, relations
(\ref{HamPar}) imply that $X_1,\ldots,X_7$ have Hamiltonian functions
\[
h_1=\frac{(v^2-4 u w)^2}{2v^2}, \qquad h_2=\frac{2\omega^2}{v^2},\qquad h_3=\frac{4w^2u}{v^2}-w,
\]
and $h_4=h_5=h_6=h_7=0$. Moreover, given the Poisson bracket on admissible functions induced by $\omega_{3KS}$, we see that
\[
\{h_1,h_2\}=-4h_3,\qquad \{h_1,h_3\}=-2h_1,\qquad \{h_2,h_3\}=2h_2.
\]
Hence, $h_1,\ldots,h_7$ span a real Lie algebra isomorphic to $\mathfrak{sl}(2,\mathbb{R})$ and, as in the previous case, the $t$-dependent vector fields
$X^{RS}_t$ possess Hamiltonian functions $h^{RS}_t=a(t)h_1-b(t)h_2+c(t)h_3$. Again, we can associate $X^{RS}$ to a curve $t\mapsto h^{RS}_t$ in the finite-dimensional real Lie algebra $({\rm Lie}(\{h^{RS}_t\}_{t\in\mathbb{R}}),\{\cdot,\cdot\})$.

The above examples suggest us the following definition.

\begin{definition}
  A {\it Dirac--Lie Hamiltonian structure} is a triple $(N, L, h)$, where $(N, L)$ stands
  for a Dirac manifold and $h$ represents a $t$-parametrized family of
  admissible functions $h_t : N \rightarrow \mathbb{R}$ such that
  Lie$(\{h_t \}_{t \in \mathbb{R}}, \{ \cdot, \cdot \}_L)$ is a
  finite-dimensional real Lie algebra.
  A $t$-dependent vector field $X$ is said to admit, to have or to possess a
  Dirac--Lie Hamiltonian $(N, L, h)$ if $X_t+dh_t \in \Gamma(L)$ for all $t
  \in \mathbb{R}$.
\end{definition}

\begin{note}
For simplicity, we hereafter call Dirac--Lie Hamiltonian structures Dirac--Lie Hamiltonians.
\end{note}

From the above definition, we see that system (\ref{firstKS3}) related to the third-order Kummer--Schwarz equations
possesses a Dirac--Lie Hamiltonian $(N, L^{\omega_{3KS}},h^{3KS})$ and system (\ref{Partial}), used to analyze diffusion equations,
admits a Dirac--Lie Hamiltonian $(N,L^{\omega_{RS}},h^{RS})$.

Let us analyze the properties of Dirac--Lie structures. Observe first that there may be several systems associated to the same Dirac--Lie Hamiltonian. For instance, the systems $X^{RS}$ and
\[
X^{RS}_2=a(t)X_1-b(t)X_2+c(t)X_3-2d(t)X_4+f(t)z^3X_5+g(t)X_6+h(t) z^2 X_7
\]
admit the same Dirac--Lie Hamiltonian  $(N, L^{\omega_{RS}},h^{RS})$. It is remarkable that $X_2^{RS}$ is not even a Lie system in general. Indeed, in view of
\[
[z^2X_7,z^nX_5]=nz^{n+1}X_5,\qquad n=3,4,\ldots,
\]
we easily see that the successive Lie brackets of $z^nX_5$ and $z^2X_7$ span an infinite set of vector fields which are linearly independent over $\mathbb{R}$. So, in those cases in which $X_5$ and $X_7$ belong to $V^{X^{RS}_2}$, this Lie algebra becomes infinite-dimensional.

In the case of a Dirac--Lie system, Proposition \ref{CasCon} shows easily the following.

\begin{corollary} Let $(N,L,X)$ be a Dirac--Lie system admitting a Dirac--Lie
Hamiltonian $(N,L,h)$. Then, we have
the exact sequence of Lie algebras
\begin{equation*}
0\hookrightarrow {\rm
Cas}(\{h_t\}_{t\in\mathbb{R}},\{\cdot,\cdot\}_L)\hookrightarrow
{\rm
Lie}(\{h_t\}_{t\in\mathbb{R}},\{\cdot,\cdot\}_L)\stackrel{B_L}{\longrightarrow}
\pi(V^X)\rightarrow 0\,,
\end{equation*}
where ${\rm
Cas}(\{h_t\}_{t\in\mathbb{R}},\{\cdot,\cdot\}_L)={\rm Lie}(\{h_t\}_{t\in\mathbb{R}},\{\cdot,\cdot\}_L)\cap {\rm
Cas}(N,L)$.
That is, ${\rm Lie}(\{h_t\}_{t\in\mathbb{R}},\{\cdot,\cdot\}_L)$ is a Lie
algebra extension of $\pi(V^X)$ by ${\rm
Cas}(\{h_t\}_{t\in\mathbb{R}},\{\cdot,\cdot\}_L)$.
\end{corollary}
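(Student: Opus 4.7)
The plan is to deduce the claimed sequence by restricting the exact sequence of Proposition \ref{CasCon} to the finite-dimensional Lie subalgebra $\mathfrak{h}:={\rm Lie}(\{h_t\}_{t\in\mathbb{R}},\{\cdot,\cdot\}_L)$ of $({\rm Adm}(N,L),\{\cdot,\cdot\}_L)$. The Dirac--Lie Hamiltonian hypothesis ensures that each $h_t$ is admissible and that $\mathfrak{h}$ closes as a finite-dimensional Lie subalgebra, so the restriction is well defined.

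First I would identify the kernel of $B_L|_{\mathfrak{h}}$. Since Proposition \ref{CasCon} gives $\ker B_L={\rm Cas}(N,L)$, it follows immediately that
\[
\ker\bigl(B_L|_{\mathfrak{h}}\bigr)=\mathfrak{h}\cap{\rm Cas}(N,L)={\rm Cas}(\{h_t\}_{t\in\mathbb{R}},\{\cdot,\cdot\}_L),
\]
which matches the definition in the statement and yields exactness at the two leftmost positions of the sequence.

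Next I would verify the crucial equality $B_L(\mathfrak{h})=\pi(V^X)$. Since $X_t+dh_t\in\Gamma(L)$ by the Dirac--Lie Hamiltonian assumption, $X_t$ is an $L$-Hamiltonian vector field for $h_t$ and hence $B_L(h_t)=\pi(X_t)$. Because $B_L$ and $\pi$ are both Lie algebra homomorphisms, the Lie subalgebra generated by a set is mapped onto the Lie subalgebra generated by the image, giving
\[
B_L(\mathfrak{h})={\rm Lie}(\{\pi(X_t)\}_{t\in\mathbb{R}})=\pi\bigl({\rm Lie}(\{X_t\}_{t\in\mathbb{R}})\bigr)=\pi(V^X).
\]
This establishes the surjectivity of $B_L|_{\mathfrak{h}}$ onto $\pi(V^X)$ and so closes the exact sequence.

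No step poses a serious obstacle: the whole argument is essentially a restriction of Proposition \ref{CasCon} coupled with the compatibility of the ${\rm Lie}(\cdot)$ operation with Lie algebra homomorphisms. The only point deserving care is that $\mathfrak{h}$ is truly contained in ${\rm Adm}(N,L)$, which is automatic since $\{\cdot,\cdot\}_L$ restricts to a Lie bracket on admissible functions (cf.\ the discussion preceding Proposition \ref{CasCon}), so any iterated bracket of the $h_t$'s remains admissible.
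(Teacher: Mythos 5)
Your proof is correct and follows exactly the route the paper intends: the paper offers no explicit proof, stating only that the corollary ``follows easily'' from Proposition~\ref{CasCon}, and your argument --- restricting the exact sequence of that proposition to the finite-dimensional subalgebra ${\rm Lie}(\{h_t\}_{t\in\mathbb{R}},\{\cdot,\cdot\}_L)$, identifying the kernel with ${\rm Lie}(\{h_t\}_{t\in\mathbb{R}},\{\cdot,\cdot\}_L)\cap{\rm Cas}(N,L)$, and getting surjectivity onto $\pi(V^X)$ from $B_L(h_t)=\pi(X_t)$ together with the compatibility of ${\rm Lie}(\cdot)$ with Lie algebra homomorphisms --- is precisely the omitted verification.
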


\begin{theorem}\label{MT}
  Each Dirac--Lie system $(N,L,X)$ admits a Dirac--Lie Hamiltonian $(N,L,h)$.
\end{theorem}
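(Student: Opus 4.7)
The plan is to exhibit $L$-Hamiltonian functions for a basis of $V^X$ and check that, up to Casimir corrections, they span a finite-dimensional Lie subalgebra of $({\rm Adm}(N,L),\{\cdot,\cdot\}_L)$. Because $X$ is a Lie system, the Lie--Scheffers theorem ensures $V^X$ is finite-dimensional; I pick a basis $Y_1,\ldots,Y_r$ of $V^X$ and write $X_t=\sum_{i=1}^r b_i(t)Y_i$ for some $t$-dependent coefficients $b_i$. Since every element of $V^X$ is $L$-Hamiltonian, each $Y_i$ admits some $h_i\in{\rm Adm}(N,L)$ with $Y_i+dh_i\in\Gamma(L)$. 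Setting $h_t:=\sum_{i=1}^r b_i(t)h_i$, linearity yields
\[
X_t+dh_t=\sum_{i=1}^r b_i(t)(Y_i+dh_i)\in\Gamma(L)\qquad \forall t\in\mathbb{R},
\]
so the first requirement of a Dirac--Lie Hamiltonian is satisfied.

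Next I need to verify that ${\rm Lie}(\{h_t\}_{t\in\mathbb{R}},\{\cdot,\cdot\}_L)$ is finite-dimensional. Write $[Y_i,Y_j]=\sum_k c_{ij}^k Y_k$ using the structure constants of $V^X$. The identity
\[
[[Y_i+dh_i,\,Y_j+dh_j]]_C=[Y_i,Y_j]+d\{h_i,h_j\}_L\in\Gamma(L),
\]
displayed just before Proposition \ref{CasCon}, shows that $\{h_i,h_j\}_L$ is an $L$-Hamiltonian function for $[Y_i,Y_j]$. Since $\sum_k c_{ij}^k h_k$ is another one, the difference
\[
f_{ij}:=\{h_i,h_j\}_L-\sum_{k=1}^r c_{ij}^k h_k
\]
satisfies $0+df_{ij}\in\Gamma(L)$, i.e.\ $f_{ij}\in{\rm Cas}(N,L)$.

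Finally, I will argue that the real linear span $W$ of $\{h_1,\ldots,h_r\}\cup\{f_{ij}:1\le i<j\le r\}$ is a finite-dimensional Lie subalgebra of $({\rm Adm}(N,L),\{\cdot,\cdot\}_L)$ containing all the $h_t$. Indeed, $\{h_i,h_j\}_L=\sum_k c_{ij}^k h_k+f_{ij}\in W$, and every bracket of any element of $W$ with a Casimir $f_{kl}$ vanishes by Proposition \ref{CasCon}. Hence no further generators appear under iterated bracketing, so ${\rm Lie}(\{h_t\}_{t\in\mathbb{R}},\{\cdot,\cdot\}_L)\subseteq W$ is finite-dimensional and $(N,L,h)$ is a Dirac--Lie Hamiltonian of $X$. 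The main subtlety I anticipate is exactly the possible emergence of the Casimir cocycle $f_{ij}$: without the centrality of ${\rm Cas}(N,L)$ inside $({\rm Adm}(N,L),\{\cdot,\cdot\}_L)$ supplied by Proposition \ref{CasCon}, iterated brackets could in principle generate infinitely many new functions, but that centrality truncates the process after a single step.
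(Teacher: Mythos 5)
Your proposal is correct and takes essentially the same route as the paper: your assignment $Y_i\mapsto h_i$ extended by linearity is the paper's map $T$, your Casimir corrections $f_{ij}$ are the paper's $\Upsilon(X_i,X_j)$, and your span $W$ is the paper's $\mathfrak{W}=\mathfrak{W}_0+\mathfrak{W}_{\mathcal{C}}$. Both arguments close the bracket computation in one step by invoking the centrality of ${\rm Cas}(N,L)$ from Proposition \ref{CasCon}, exactly as you anticipate.
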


\begin{proof}
  Since $V^X\subset {\rm Ham}(N,L)$ is a finite-dimensional Lie
algebra, we can define
a linear map $T : X_f
  \in V^X \mapsto f \in C^{\infty} (N)$ associating
  each $L$-Hamiltonian vector field in $V^X$ with an associated $L$-Hamiltonian function, e.g., given a basis $X_1,\ldots,X_r$ of $V^X$ we define $T(X_i)=h_i$, with
$i=1,\ldots,r$, and extend $T$ to $V^X$ by linearity. Note that the
  functions $h_1, \ldots, h_r$ need not be linearly independent over $\mathbb{R}$,
  as a function can be Hamiltonian for two different $L$-Hamiltonian vector fields $X_1$ and $X_2$ when $X_1-X_2\in {\rm G}(N,L)$. Given the system $X$,
there exists a smooth curve
  $h_t =T (X_t)$ in $\mathfrak{W}_0\equiv {\rm Im}\,T$ such that $X_t+dh_t \in \Gamma(L)$. To ensure
  that $h_t$ gives rise to a Dirac--Lie Hamiltonian, we need to
  demonstrate that $\dim\, {\rm Lie}(\{h_t \}_{t \in \mathbb{R}}, \{ \cdot,
  \cdot \}_L)<\infty$. This will be done by
constructing a
  finite-dimensional Lie algebra of functions containing the curve $h_t$.

  Consider two elements $Y_1,Y_2\in V^X$. Note that the functions $\{T (Y_1), T (Y_2)\}_L$ and $T ([Y_1, Y_2])$ have the
same
  $L$-Hamiltonian vector field. So, $\{T (Y_1), T (Y_2)\}_L - T ([Y_1,
  Y_2]) \in {\rm Cas}(N,L)$ and, in view of Proposition \ref{CasCon}, it Poisson
commutes with all other admissible functions. Let us define $\Upsilon : V^X
\times V^X \rightarrow C^{\infty} (N)$
  of the form
  \begin{equation}    \label{formula2}
\Upsilon (X_1, X_2) = \{T(X_1), T(X_2)\}_L - T [X_1, X_2].
  \end{equation}
  The image of $\Upsilon$ is contained in a
  finite-dimensional real Abelian Lie subalgebra of Cas$(N,L)$ of the form
  \[ \mathfrak{W}_\mathcal{C} \equiv \langle \Upsilon (X_i, X_j) \rangle,
     \qquad i, j = 1, \ldots, r, \]
  where $X_1, \ldots, X_r$ is a basis for $V^X$. From here, it follows that
  \[ \{ \mathfrak{W}_\mathcal{C}, \mathfrak{W}_\mathcal{C}\}_L = 0,
     \quad \{ \mathfrak{W}_\mathcal{C}, \mathfrak{W}_0 \}_L = 0,
     \quad \{ \mathfrak{W}_0, \mathfrak{W}_0 \}_L \subset
     \mathfrak{W}_\mathcal{C} + \mathfrak{W}_0 . \]
  Hence, $(\mathfrak{W} \equiv \mathfrak{W}_0 + \mathfrak{W}_\mathcal{C},\{\cdot,\cdot\}_L)$ is a
  finite-dimensional real Lie algebra containing the curve $h_t$, and $X$ admits a Dirac--Lie Hamiltonian $(N, L, T(X_t))$.
\end{proof}
The following proposition is easy to check.
\begin{proposition}\label{TanOrb} Let $(N,L,X)$ be a Dirac--Lie system. If $(N,L,h)$ and
$(N,L,\bar h)$ are two Dirac--Lie Hamiltonians for $(N,L,X)$, then
\[
h=\bar h+f^X,
\]
where $f^X\in C^\infty(\mathbb{R}\times N)$ is a $t$-dependent function such
that each $f^X_t:x\in N \mapsto f^X(x,t)\in \mathbb{R}$ is a Casimir function that is constant on every
integral manifold $\mathcal{O}$ of $\mathcal{D}^X$.
\end{proposition}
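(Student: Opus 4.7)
The plan is to set $f^X_t := h_t - \bar h_t$ and show that this $t$-dependent function has the two required properties: being a Casimir of $(N,L)$ and being constant along the leaves of the distribution $\mathcal{D}^X$. The whole argument will be a short chase through the definitions of Dirac--Lie Hamiltonian and Casimir, together with one application of Proposition \ref{CasCon}.

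First I would exploit that $\Gamma(L)$ is a $C^\infty(N)$-module, hence closed under subtraction. From the hypotheses $X_t+dh_t\in\Gamma(L)$ and $X_t+d\bar h_t\in\Gamma(L)$ I would subtract to obtain
\[
0 + d(h_t-\bar h_t)=0+df^X_t\in\Gamma(L),\qquad \forall t\in\mathbb{R}.
\]
This says precisely that the zero vector field is $L$-Hamiltonian with $L$-Hamiltonian function $f^X_t$, so by definition each $f^X_t\in{\rm Cas}(N,L)$. Smoothness of $f^X$ in $(t,x)$ is inherited from that of $h$ and $\bar h$.

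Next I would show constancy along $\mathcal{D}^X$. Since $(N,L,X)$ is a Dirac--Lie system, the Vessiot--Guldberg Lie algebra $V^X$ consists of $L$-Hamiltonian vector fields, and $\mathcal{D}^X$ is the generalized distribution spanned by $V^X$. Pick any $Y\in V^X$ and any $L$-Hamiltonian function $g\in{\rm Adm}(N,L)$ for $Y$, i.e.\ $Y+dg\in\Gamma(L)$. Then, by the very definition of the bracket on admissible functions,
\[
Y(f^X_t)=\{g,f^X_t\}_L,
\]
and by Proposition \ref{CasCon} the right-hand side vanishes because $f^X_t\in{\rm Cas}(N,L)$. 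Therefore every element of $V^X$ annihilates $f^X_t$, so $f^X_t$ is constant on every integral submanifold $\mathcal{O}$ of $\mathcal{D}^X$, finishing the proof.

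I do not anticipate a genuine obstacle here: the result is essentially a bookkeeping consequence of the linearity of $\Gamma(L)$ and the ideal property of Casimirs established in Proposition \ref{CasCon}. The only mild subtlety is the choice of $g$ in the last step, but since any $Y\in V^X$ is $L$-Hamiltonian by hypothesis such a $g$ always exists, and the value $Y(f^X_t)$ is independent of this choice precisely because $f^X_t$ is Casimir (so that two admissible Hamiltonians for $Y$, differing by a gauge element whose Hamiltonian is Casimir, produce the same Poisson bracket with $f^X_t$).
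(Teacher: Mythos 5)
Your proof is correct and is precisely the routine verification the paper has in mind when it states that the proposition ``is easy to check'' (the paper supplies no proof of its own): subtracting the two defining conditions $X_t+dh_t\in\Gamma(L)$ and $X_t+d\bar h_t\in\Gamma(L)$ shows $df^X_t\in\Gamma(L)$, i.e.\ $f^X_t\in{\rm Cas}(N,L)$, and then the ideal property of Proposition \ref{CasCon} (or, even more directly, the isotropy pairing $\langle 0+df^X_t,\,Y+dg\rangle_+=0$) forces $Yf^X_t=0$ for every $Y\in V^X$, hence constancy on the leaves of $\mathcal{D}^X$. No gaps.
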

Note that if we have a Dirac--Lie Hamiltonian $(N,L,h)$ and we define a linear map $\widehat T:  h\in {\rm Lie}(\{h_t\}_{t\in\mathbb{R}},\{\cdot,\cdot\})\mapsto X_h\in {\rm Ham}(N,L)$, the space $\widehat{T}({\rm Lie}(\{h_t\}_{t\in\mathbb{R}},\{\cdot,\cdot\})$ may span an infinite-dimensional Lie algebra of vector fields. For instance, consider again the Lie--Hamiltonian $(\mathcal{O}_2,\omega_{3KS},h^{3KS}_t=h_3+b_1(t)h_1)$ for the system (\ref{firstKS3}). The functions $h_1, h_2,$ and $h_3$ are also Hamiltonian for the vector fields
\[
Y_1=2v\frac{\partial}{\partial a}+e^{v^2}\frac{\partial}{\partial x},\quad Y_2=v\frac{\partial}{\partial v}+2a\frac{\partial}{\partial a},\quad Y_3=a\frac{\partial }{\partial v}+\left(\frac 32 \frac{a^2}v-2c_0v^3\right)\frac{\partial}{\partial a},
\]
which satisfy
\[
\stackrel{j-{\rm times}}{\overbrace{[Y_2,[\ldots,[Y_2}},Y_1]\ldots]]\!=\!f_j(v)\frac{\partial}{\partial x}\!+\!2(-1)^jv\frac{\partial}{\partial a}, \,\,\,\,\,\,\,\,\,\qquad f_j(v)\equiv\,\, \stackrel{j-{\rm times}}{\overbrace{v\frac{\partial}{\partial v}\ldots v\frac{\partial}{\partial v}}}\!\!(e^{v^2}).
\]
In consequence, ${\rm Lie}(\widehat{T}({\rm Lie}(\{h_t\}_{t\in\mathbb{R}},\{\cdot,\cdot\})),[\cdot,\cdot])$ contains an infinite-dimensional Lie algebra of vector fields  because the functions $\{f_j\}_{j\in\mathbb{R}}$ form an infinite family of linearly independent functions over $\mathbb{R}$.  So, we need to impose additional conditions to ensure that the image of $\widehat{T}$ is finite-dimensional.

The following theorem yields an alternative definition of a Dirac--Lie system.

\begin{theorem} Given a Dirac manifold $(N,L)$, the triple $(N,L,X)$ is a
Dirac--Lie system if and only if there exists a curve
$\gamma:t\in\mathbb{R}\rightarrow \gamma_t\in\Gamma(L)$ satisfying that $\rho(\gamma_t)=X_t\in {\rm Ham}(N,L)$ for every $t\in\mathbb{R}$ and
${\rm Lie}(\{\gamma_t\}_{t\in\mathbb{R}},[[\cdot,\cdot]]_C)$ is a
finite-dimensional real Lie algebra.
\end{theorem}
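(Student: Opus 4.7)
The statement is a bi-implication, and I would handle the two directions separately. The backward direction is essentially immediate from structural facts already recorded in Section~2. The forward direction requires a small but genuine construction, for which the key idea is to manufacture $\gamma_t$ from a basis of $V^X$ and then patch the resulting Courant--Dorfman brackets by Casimir corrections.

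For the backward direction, I would argue as follows. Given such a curve $\gamma_t$, use that $\rho:(\Gamma(L),[[\cdot,\cdot]]_C)\to(\mathfrak{X}(N),[\cdot,\cdot])$ is a Lie algebra morphism (stated in Section~2) to conclude
\[
V^X={\rm Lie}(\{X_t\}_{t\in\mathbb{R}},[\cdot,\cdot])=\rho\bigl({\rm Lie}(\{\gamma_t\}_{t\in\mathbb{R}},[[\cdot,\cdot]]_C)\bigr),
\]
which is finite-dimensional by hypothesis. Since each $X_t\in{\rm Ham}(N,L)$ and ${\rm Ham}(N,L)$ is a Lie subalgebra of $\mathfrak{X}(N)$ (Section~2), it follows that $V^X\subseteq{\rm Ham}(N,L)$, so $V^X$ is a Vessiot--Guldberg Lie algebra of $L$-Hamiltonian vector fields and $(N,L,X)$ is Dirac--Lie.

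For the forward direction, I would fix a basis $X_1,\dots,X_r$ of $V^X$, pick $L$-Hamiltonian functions $h_1,\dots,h_r$ for them (possible because each $X_i\in{\rm Ham}(N,L)$), and set $\gamma_i:=X_i+dh_i\in\Gamma(L)$. Decomposing $X_t=\sum_i b_i(t)X_i$, the curve $\gamma_t:=\sum_i b_i(t)\gamma_i$ lies in $\Gamma(L)$ and satisfies $\rho(\gamma_t)=X_t$. Writing $[X_i,X_j]=\sum_k c_{ij}^k X_k$ and applying the identity $[[X+df,Y+dg]]_C=[X,Y]+d\{f,g\}_L$ already derived in Section~2, one obtains
\[
[[\gamma_i,\gamma_j]]_C=\sum_k c_{ij}^k\gamma_k+\eta_{ij},\qquad \eta_{ij}:=0+dc_{ij},
\]
where $c_{ij}:=\{h_i,h_j\}_L-\sum_k c_{ij}^k h_k$. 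By the same reasoning used in the proof of Theorem~\ref{MT}, $c_{ij}$ is a Casimir function, hence $\eta_{ij}\in\Gamma(L)$. It then remains to check that the Casimir sections $\eta_{ij}$ are ``central'': using $\{h_k,c_{ij}\}_L=X_k c_{ij}=0$ (Proposition~\ref{CasCon}) and $d^2=0$, direct computation gives $[[\gamma_k,\eta_{ij}]]_C=d(X_k c_{ij})=0$ and $[[\eta_{ij},\eta_{kl}]]_C=0$. Therefore $\langle\gamma_1,\dots,\gamma_r\rangle+\langle\eta_{ij}\rangle_{i,j}$ is a finite-dimensional real Lie subalgebra of $(\Gamma(L),[[\cdot,\cdot]]_C)$ containing $\{\gamma_t\}_{t\in\mathbb{R}}$, which is what is required.

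The main obstacle is precisely this closure question in the forward direction. The span of the $\gamma_i$'s alone need not close under the Courant--Dorfman bracket, because the Poisson bracket $\{h_i,h_j\}_L$ is determined by $[X_i,X_j]$ only up to a Casimir. The content of the argument is the recognition that this ambiguity lifts to a genuine section $\eta_{ij}\in\Gamma(L)$, and that adjoining these Casimir sections is harmless because, by Proposition~\ref{CasCon}, their Courant--Dorfman brackets with any section of the form $X+df$ (with $f$ admissible) vanish, so no infinite tower of iterated brackets is ever generated.
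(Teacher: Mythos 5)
Your proposal is correct and follows essentially the same route as the paper: the backward direction via the anchor being a Lie algebra morphism, and the forward direction by choosing Hamiltonian functions $h_i$ for a basis of $V^X$ (the paper's linear map $T$), forming $\gamma_i=X_i+dh_i$, and closing the span by adjoining the Casimir sections $dc_{ij}$ (the paper's $d\Upsilon(X_i,X_j)$), whose brackets with everything vanish by Proposition \ref{CasCon}. The only cosmetic difference is notational ($c_{ij}$ versus $\Upsilon(X_i,X_j)$, and $\gamma_t=\sum_i b_i(t)\gamma_i$ versus $\gamma_t=X_t+dT(X_t)$, which coincide by linearity of $T$).
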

\begin{proof} Let us prove the direct part of the theorem.
 Assume that $(N,L,X)$ is a Dirac--Lie system.
In virtue of Theorem \ref{MT}, it admits a Dirac--Lie Hamiltonian
$(N,L,h)$, with $h_t=T(X_t)$ and $T:V^X\rightarrow {\rm Adm}(N,L)$ a linear morphism associating
each element of $V^X$ with one of its $L$-Hamiltonian functions.
We aim to prove that  the curve in $\Gamma(L)$ of the form $\gamma_t=X_t+d(T(X_t))$ satisfies that $\dim {\rm Lie}(\{\gamma_t\}_{t\in\mathbb{R}},[[\cdot,\cdot]]_C)<\infty$.

The sections of $\Gamma(L)$ of the form
\begin{equation}\label{gen}
X_1+dT(X_1)\,\,,\ldots,\,\,X_r+dT(X_r)\,\,,\,\,d\Upsilon(X_i,X_j), \qquad i,j=1,\ldots,r,
\end{equation}
where $X_1,\ldots,X_r$ is a basis of $V^X$ and $\Upsilon:V^X\times V^X\rightarrow {\rm Cas}(N,L)$ is the map
 (\ref{formula2}), span a finite-dimensional Lie algebra $(E,[[\cdot,\cdot]]_C)$. Indeed,
\[
[[X_i+dT(X_i),X_j+dT(X_j)]]_C=[X_i,X_j]+d\{T(X_i),T(X_j)\}_L, \qquad i,j=1,\ldots,r.
\]
Taking into account that $\{T(X_i),T(X_j)\}_L-T([X_i,X_j])=\Upsilon(X_i,X_j)$, we see that the above is a linear combination of the
generators (\ref{gen}). Additionally, we have that
\[
[[X_i+dT(X_i),d\Upsilon(X_j,X_k)]]_C=d\{T(X_i),\Upsilon(X_j,X_k)\}_L=0.
\]
So, sections (\ref{gen}) span a finite-dimensional subspace $E$ of  $(\Gamma(L),[[\cdot,\cdot]]_C)$. As $\gamma_t\in E$, for all $t\in\mathbb{R}$, we conclude the direct part of the proof.

The converse is straightforward from the fact that $(L,[[\cdot,\cdot]]_C,\rho)$ is a Lie algebroid. Indeed, given the curve $\gamma_t$ within a finite-dimensional real Lie algebra of sections $E$ satisfying that $X_t=\rho(\gamma_t)\in {\rm Ham}(N,L)$, we have that $\{X_t\}_{t\in\mathbb{R}}\subset \rho(E)$ are $L$-Hamiltonian vector fields. As $E$ is a finite-dimensional Lie algebra and $\rho$ is a Lie algebra morphism, $\rho(E)$ is a finite-dimensional Lie algebra of vector fields and $(N,L,X)$ becomes a Dirac--Lie system.
\end{proof}

The above theorem shows the interest of defining a class of Lie systems related to general Lie algebroids.

\section{On diagonal prolongations of Dirac--Lie systems}
The so-called {\it diagonal prolongations} of Lie systems play a fundamental r\^ole in the determination of superposition rules which motivates their study in this section \cite{CGM07}. Specifically, we analyze the properties of diagonal prolongations of Dirac--Lie systems. As a result, we discover new features that can be applied to study their superposition rules and introduce some new concepts of interest.

Let $\tau:E\to N$ be a vector bundle. Its {\it diagonal prolongation} to $N^m$ is the Cartesian product bundle $E^{[m]}=E\times\cdots\times E$ of $m$ copies of $E$, viewed as a vector bundle over $N^m$ in a natural way:
 \begin{equation*}E^{[m]}_{(x_{(1)},\dots,x_{(m)})}=E_{x_{(1)}}\oplus\cdots\oplus E_{x_{(m)}}\,.
\end{equation*}
Every section $X:N\to E$ of $E$ has a natural {\it diagonal prolongation} to a section $X^{[m]}$ of $E^{[m]}$:
\begin{equation*}
X^{[m]}(x_{(1)},\dots,x_{(m)})=X(x_{(1)})+\cdots +X(x_{(m)})\,.
\end{equation*}
Given a function $f:N\rightarrow \mathbb{R}$, we call {\it diagonal prolongation} of $f$ to $N^m$ the function $\widetilde{f}^{[m]}$ on $N^{m}$ of the form
$\widetilde{f}^{[m]}(x_{(1)},\ldots,x_{(m)})= f(x_{(1)})+\ldots+f(x_{(m)})$.

We can consider also sections $X^{(j)}$ of $E^{[m]}$ given by
\begin{equation}\label{prol1}
X^{(j)}(x_{(1)},\dots,x_{(m)})=0+\cdots +X(x_{(j)})+\cdots+0\,.
\end{equation}
It is clear that, if $\{X_i\mid i=1,\ldots,p\}$ is a basis of local sections of $E$, then $\{X_i^{(j)}\mid i=1,\ldots,p,j=1,\ldots,m\}$	 is a basis of local sections of $E^{[m]}$. Note that all this can be repeated also for {\it generalized vector bundles}, like generalized distributions.

Since there are obvious canonical isomorphisms
$$(TN)^{[m]}\simeq TN^m\quad\text{and}\quad (T^*N)^{[m]}\simeq T^*N^m\,,$$
we can interpret the diagonal prolongation $X^{[m]}$ of a vector field on $N$ as a vector field $\widetilde{X}^{[m]}$ on $N^m$, and the diagonal prolongation $\alpha^{[m]}$ of a 1-form on $N$ as a 1-form $\widetilde{\alpha}^{[m]}$ on $N^m$. In the case when $m$ is fixed, we will simply write $\widetilde{X}$ and $\widetilde{\alpha}$. The proof of the following properties of diagonal prolongations is straightforward.

\begin{proposition}\label{p*} The {\it diagonal prolongation} to $N^m$ of a vector field $X$ on
$N$ is the unique vector field $\widetilde X^{[m]}$ on $N^m$, projectable under the map
$\pi:(x_{(1)},\ldots,x_{(m)})\in N^m\mapsto x_{(1)}\in N$ onto $X$ and invariant under the permutation of variables $x_{(i)}\leftrightarrow x_{(j)}$, with $i,j=1,\ldots,m$.
The {\it diagonal prolongation} to $N^m$ of a 1-form $\alpha$ on $N$ is the unique 1-form $\widetilde \alpha^{[m]}$ on $N^m$ such
that $\widetilde \alpha^{[m]}(\widetilde X^{[m]})=\widetilde{\alpha(X)}^{[m]}$ for every vector field $X\in \Gamma(TN)$. We have $d\widetilde\alpha=\widetilde{d\alpha}$ and $\mathcal{L}_{\widetilde X^{[m]}}\widetilde\alpha^{[m]}=\widetilde{\mathcal{L}_X\alpha}^{[m]}$.
In particular, if $\alpha$ is closed (exact), so is its diagonal prolongation $\widetilde \alpha^{[m]}$ to $N^m$.
\end{proposition}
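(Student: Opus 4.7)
The plan is to work with the explicit coordinate-free description of the diagonal prolongations in terms of the canonical projections $\pi_j\colon N^m\to N$ onto the $j$-th factor. Writing
\[\widetilde{X}^{[m]}=\sum_{j=1}^m X^{(j)}\,,\qquad \widetilde{\alpha}^{[m]}=\sum_{j=1}^m \pi_j^{\ast}\alpha\,,\]
where $X^{(j)}$ is the section defined in (\ref{prol1}), makes the naturality properties transparent and reduces every claim of the proposition to the standard behavior of $d$, $\iota$, and $\mathcal{L}$ under pull-back.

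For the uniqueness of $\widetilde{X}^{[m]}$, the projectability of a candidate $Y$ onto $X$ via $\pi=\pi_1$ forces the component of $Y$ tangent to the first factor at any point $(x_{(1)},\dots,x_{(m)})$ to equal $X(x_{(1)})$; invariance of $Y$ under the transpositions $x_{(1)}\leftrightarrow x_{(j)}$ then forces the $j$-th factor component to equal $X(x_{(j)})$, so $Y=\widetilde{X}^{[m]}$. For the uniqueness of $\widetilde{\alpha}^{[m]}$, I would argue that on the open dense subset of $N^m$ where the $x_{(j)}$ are pairwise distinct, the values $\widetilde{X}^{[m]}(x_{(1)},\dots,x_{(m)})$ span all of $T_{(x_{(1)},\dots,x_{(m)})}N^m$ as $X$ ranges over $\Gamma(TN)$, since any prescribed tuple of tangent vectors at distinct base points is interpolated by some global vector field. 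The pairing condition $\widetilde{\alpha}^{[m]}(\widetilde{X}^{[m]})=\widetilde{\alpha(X)}^{[m]}$ thus determines $\widetilde{\alpha}^{[m]}$ on a dense open set, and hence everywhere by continuity.

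The heart of the proof is the verification of the two naturality identities. The identity $d\widetilde{\alpha}^{[m]}=\widetilde{d\alpha}^{[m]}$ is immediate from $\widetilde{\alpha}^{[m]}=\sum_j\pi_j^{\ast}\alpha$ and the commutation of $d$ with pull-back, provided one extends the notation $\widetilde{\,\cdot\,}^{[m]}$ to forms of higher degree in the obvious way, $\widetilde{\omega}^{[m]}=\sum_j\pi_j^{\ast}\omega$. For the Lie-derivative identity, I would first check, using $(\pi_j)_{\ast}\widetilde{X}^{[m]}(p)=X(x_{(j)})$, that $\iota_{\widetilde{X}^{[m]}}\pi_j^{\ast}\omega=\pi_j^{\ast}\iota_X\omega$ for every form $\omega$ on $N$; summing over $j$ then yields $\iota_{\widetilde{X}^{[m]}}\widetilde{\omega}^{[m]}=\widetilde{\iota_X\omega}^{[m]}$, which in particular recovers the pairing relation of the previous paragraph. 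Combining these two identities via Cartan's formula $\mathcal{L}=d\circ\iota+\iota\circ d$ delivers $\mathcal{L}_{\widetilde{X}^{[m]}}\widetilde{\alpha}^{[m]}=\widetilde{\mathcal{L}_X\alpha}^{[m]}$.

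Finally, the closure and exactness assertions are immediate corollaries: $d\alpha=0$ gives $d\widetilde{\alpha}^{[m]}=\widetilde{d\alpha}^{[m]}=0$, and $\alpha=df$ gives $\widetilde{\alpha}^{[m]}=\widetilde{df}^{[m]}=d\widetilde{f}^{[m]}$, where $\widetilde{f}^{[m]}=\sum_j\pi_j^{\ast}f$ is the prolongation of the function already defined in the text. The only mildly delicate point, and the one I would be most careful with, is the uniqueness of $\widetilde{\alpha}^{[m]}$ from the pairing condition; once the spanning-by-density argument is in place, the remainder is mechanical bookkeeping.
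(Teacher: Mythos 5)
Your proof is correct. The paper itself gives no argument for this proposition --- it simply declares the proof straightforward and records the local coordinate expressions (\ref{prol-coord}) immediately afterwards --- and your coordinate-free formulation $\widetilde X^{[m]}=\sum_j X^{(j)}$, $\widetilde\alpha^{[m]}=\sum_j\pi_j^{*}\alpha$ is precisely the invariant restatement of those formulas, so your route agrees in substance with what the authors have in mind while actually supplying the missing details. The only step that genuinely requires care is the uniqueness of $\widetilde\alpha^{[m]}$ from the pairing condition: at points of the fat diagonal (where some $x_{(i)}=x_{(j)}$) the values $\widetilde X^{[m]}(p)$ do not span $T_pN^m$, so one must pass to the open dense complement, use that a prescribed tuple of tangent vectors at pairwise distinct points is realized by a global vector field, and conclude by continuity; you correctly identify this as the delicate point and handle it (implicitly assuming $\dim N\geq 1$, which is harmless here). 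The remaining identities follow, exactly as you argue, from the $\pi_j$-relatedness of $\widetilde X^{[m]}$ and $X$ (giving $\iota_{\widetilde X^{[m]}}\pi_j^{*}\omega=\pi_j^{*}\iota_X\omega$), the commutation of $d$ with pull-backs, and Cartan's formula; the closedness and exactness statements are then immediate, with $\widetilde{df}^{[m]}=d\widetilde f^{[m]}$ matching the paper's definition of the prolongation of a function.
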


Using local coordinates $(x^a)$ in $N$ and the induced system $(x^a_{(i)})$ of coordinates in $N^m$, we can write, for $X=\sum_aX^a(x)\partial_{x^a}$ and $\alpha=\sum_a\alpha_a(x)dx^a$,
\begin{equation}\label{prol-coord}
\widetilde X^{[m]}=\sum_{a,i}X^a(x_{(i)})\partial_{x^a_{(i)}}\quad\text{and}\quad
\widetilde \alpha^{[m]}=\sum_{a,i}\alpha_a(x_{(i)})dx^a_{(i)}\,.
\end{equation}

Let us fix $m$. Obviously, given two vector fields $X_1$ and $X_2$ on $N$, we have $\widetilde{ [X_1,X_2]}=[\widetilde X_1,\widetilde X_2]$. In consequence, the prolongations to $N^m$ of the elements of a finite-dimensional real Lie algebra $V$ of vector fields on $N$ form a real Lie algebra $\widetilde V$ isomorphic to $V$. Similarly to standard vector fields, we can define the diagonal prolongation of a $t$-dependent vector field $X$ on $N$ to $N^m$ as the only $t$-dependent vector field $\widetilde X$ on $N^m$ satisfying that $\widetilde X_t$ is the prolongation of $X_t$ to $N^m$ for each $t\in\mathbb{R}$.

When $X$ is a Lie--Hamilton system, its diagonal prolongations are also Lie--Hamilton systems in a natural way \cite{BCHLS13}.
Let us now focus on proving an analogue of this result for Dirac--Lie systems.

\begin{definition} Given two Dirac manifolds $(N,L_N)$ and $(M,L_M)$, we say that $\varphi:N\rightarrow M$ is a {\it forward Dirac map} between them if $(L_M)_{\varphi(x)}\!=\!\mathfrak{P}_\varphi(L_N)_x$, where
\[
\mathfrak{P}_\varphi(L_N)_x\!=\!\{\varphi_{*x}X_{x}+\omega_{\varphi(x)}\in T_{\varphi(x)}M\oplus T_{\varphi(x)}^*M\!\mid\! X_x+ (\varphi^*\omega_{\varphi(x)})_x\!\in\! (L_N)_x\},
\]
 for all $x\in N$.
\end{definition}

\begin{proposition} Given a Dirac structure $(N,L)$ and the natural isomorphism
\[
(TN^m\oplus_{N^m}T^*N^m)_{(x_{(1)},\ldots,x_{(m)})}\simeq (T_{x_{(1)}}N\oplus T^*_{x_{(1)}}N)\oplus \cdots\oplus (T_{x_{(m)}}N\oplus T^*_{x_{(m)}}N),
\]
the diagonal prolongation $L^{[m]}$, viewed as a vector subbundle in $TN^m\oplus_{N^m}T^*N^m=\mathcal{P}N^{[m]}$,
is a Dirac structure on $N^m$.

The forward image of $L^{[m]}$ through each $\pi_{i}:(x_{(1)},\ldots,x_{(m)})\in N^m\rightarrow x_{(i)}\in N$, with $i=1,\ldots,m$, equals $L$. Additionally, $L^{[m]}$ is invariant under the permutations $x_{(i)}\leftrightarrow x_{(j)}$, with $i,j=1,\ldots,m$.
\end{proposition}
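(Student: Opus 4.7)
The plan is to verify the three defining properties of a Dirac structure for $L^{[m]}$ on $N^m$---smoothness with the right rank, isotropy, and involutivity under the Courant--Dorfman bracket---and then to deduce the forward-image and permutation-invariance claims. For smoothness and rank I would fix local coordinates $(x^a)$ on an open $U\subset N$ together with a smooth frame $e_1,\dots,e_n$ of $L|_U$, where $n=\dim N$. Using (\ref{prol1}) and (\ref{prol-coord}), the sections $e_\alpha^{(j)}$, $\alpha=1,\dots,n$, $j=1,\dots,m$, are smooth sections of $\mathcal{P}N^m$ over $U^m$ whose values at any point form a basis of $L_{x_{(1)}}\oplus\cdots\oplus L_{x_{(m)}}$; hence $L^{[m]}$ is locally trivial of rank $mn=\dim N^m$. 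Isotropy then follows at once from the fact that, under the stated direct-sum decomposition, the canonical pairing on $\mathcal{P}N^m$ is the sum of the canonical pairings on each factor $T_{x_{(i)}}N\oplus T_{x_{(i)}}^*N$, and each summand vanishes because $L$ is isotropic; together with the rank count, this gives maximal isotropy.

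For involutivity I would compute the Courant--Dorfman bracket of the frame elements. Writing $e_\alpha=X_\alpha+\eta_\alpha$ and using (\ref{prol-coord}), the vector and form parts of $e_\alpha^{(j)}$ only involve the $x_{(j)}$-variables; when $j\neq k$, every term of
\[
[[e_\alpha^{(j)},e_\beta^{(k)}]]_C=[X_\alpha^{(j)},X_\beta^{(k)}]+\mathcal{L}_{X_\alpha^{(j)}}\eta_\beta^{(k)}-\iota_{X_\beta^{(k)}}d\eta_\alpha^{(j)}
\]
vanishes identically. For $j=k$ the same computation reduces the bracket to $\bigl([[e_\alpha,e_\beta]]_C\bigr)^{(j)}$, which lies in $\Gamma(L^{[m]})$ by involutivity of $L$. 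To pass to arbitrary smooth sections I would use the Leibniz-type rule
\[
[[\sigma,f\tau]]_C=f[[\sigma,\tau]]_C+(Xf)\tau,
\]
with $X$ the vector part of $\sigma$, together with its first-slot counterpart. The first-slot rule carries an additional correction proportional to $\langle\sigma,\tau\rangle_+$, which vanishes on $\Gamma(L^{[m]})\times\Gamma(L^{[m]})$ by the isotropy just proved; the two identities then promote involutivity on the frame to $[[\Gamma(L^{[m]}),\Gamma(L^{[m]})]]_C\subset\Gamma(L^{[m]})$.

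To identify the forward image through $\pi_i$, note that for $\omega\in T_{x_{(i)}}^*N$ the pullback $\pi_i^*\omega$ has $\omega$ in the $i$-th slot of the direct-sum decomposition and $0$ elsewhere. Hence $Y+\omega\in\mathfrak{P}_{\pi_i}(L^{[m]})_{x_{(i)}}$ iff there exist $Z_j\in T_{x_{(j)}}N$ with $Z_i=Y$, $Z_i+\omega\in L_{x_{(i)}}$, and $Z_j+0\in L_{x_{(j)}}$ for $j\neq i$. Choosing $Z_j=0$ for $j\neq i$ satisfies the latter constraints automatically, so the condition collapses to $Y+\omega\in L_{x_{(i)}}$, identifying the forward image with $L$. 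Invariance of $L^{[m]}$ under each transposition $x_{(i)}\leftrightarrow x_{(j)}$ is then immediate from the symmetric, unordered direct-sum description of its fibres. I expect the main technical point to be the involutivity step: promoting the frame-level computation to all smooth sections is slightly delicate because the Courant--Dorfman bracket is neither skew-symmetric nor fully $C^\infty$-bilinear, but once isotropy of $L^{[m]}$ is established the asymmetric correction in the first-slot Leibniz rule disappears on $\Gamma(L^{[m]})\times\Gamma(L^{[m]})$, and the argument becomes a routine verification.
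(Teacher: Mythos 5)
Your proposal is correct and follows essentially the same route as the paper's proof: the prolonged frame $e_\alpha^{(j)}$, the vanishing of all cross-terms ($j\neq k$) in both the pairing and the Courant--Dorfman bracket, and the identification of the forward image through $\pi_i$ with the $i$-th summand of the fibrewise decomposition $L_{x_{(1)}}\oplus\cdots\oplus L_{x_{(m)}}$. Your explicit appeal to the Leibniz rules to pass from the frame computation to arbitrary sections (with the $\langle\cdot,\cdot\rangle_+$ correction killed by the already-established isotropy) makes precise a step the paper leaves implicit, and your forward-image argument is a slightly more direct ``iff'' version of the paper's two-inclusion check.
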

\begin{proof}
Being a diagonal prolongation of $L$, the subbundle $L^{[m]}$ is invariant under permutations $x_{(i)}\leftrightarrow x_{(j)}$ and each element of a basis $X_i+\alpha_i$ of $L$, with $i=1,\ldots,n$, can naturally be considered as an element $X^{(j)}_i+\alpha^{(j)}_i$ of the $j$th-copy of $L$ within $L^{[m]}$. This gives rise to a basis of $L^{[m]}$, which naturally becomes a smooth $m n$-dimensional subbundle of $\mathcal{P}N^m$. Considering the natural pairing $\langle\cdot,\cdot\rangle_+$ of $\mathcal{P}N^m$ and using $\langle \alpha^{(i)}_j,X^{(k)}_l\rangle=0$ for $i\neq k$, we have
\begin{eqnarray*}
&\left\langle \left(X^{(i)}_{j}+\alpha^{(i)}_{j}\right)(x_{(1)},\ldots,x_{(m)}),\left(X^{(k)}_{l}+
\alpha^{(k)}_{l}\right)(x_{(1)},\ldots,x_{(m)})\right\rangle_+\!\!=\\
&\delta^i_k\left\langle \left(X_{j}+\alpha_{j}\right)(x_{(i)}),\left(X_{l}+\alpha_{l}\right)(x_{(i)})\right\rangle_+=0,
\end{eqnarray*}
for every $p=(x_{(1)},\ldots,x_{(m)})\in N^m$.
As the pairing is bilinear and vanishes on a basis of $L^{[m]}$, it does so on the whole $L^{[m]}$, which is therefore isotropic. Since $L^{[m]}$ has rank $mn$, it is maximally isotropic.

Using that $[X^{(i)}_j,X^{(k)}_l]=0,$ $\iota_{X^{(i)}_j}d\alpha_{l}^{(k)}=0$, and $\mathcal{L}_{X^{(i)}_j}\omega_{l}^{(k)}=0$ for $i\neq k=1,\ldots,m$ and $j,l=1,\ldots,\dim\, N$, we obtain
\begin{equation*}
[[X^{(i)}_j+\alpha^{(i)}_j,X^{(k)}_l+\alpha^{(k)}_l]]_C=\delta^{i}_{k}[[X^{(i)}_j+\alpha^{(i)}_j,X^{(i)}_l+\alpha^{(i)}_l]]_C\in \Gamma(L^{[m]}).
\end{equation*}
So, $L^{[m]}$ is involutive. Since it is also maximally isotropic, it is a Dirac structure.

Let us prove that $\mathfrak{P}_{\pi_a}(L^{[m]})=L$ for every $\pi_a$. Note that
$(X^{(a)}_j+\alpha^{(a)}_j)_p\in L^{[m]}_p$ is such that $\pi_{a*}(X^{(a)}_j)_p=(X_j)_{x_{(a)}}$ and $(\alpha_j)_{x_{(a)}}\circ(\pi_{*a})_p =(\alpha^{(a)}_j)_p$ for every $p\in \pi^{-1}_a(x_{(a)})$. So, $(X_j+\alpha_j)_{x_{(a)}}\in (\mathfrak{P}_{\pi_a}(L^{[m]}))_{x_{(a)}}\subset L_{x_{(a)}}$ for $j=1,\ldots,n$ and every $x_{(a)}\in N$. Using that $X_j+\alpha_j$ is a basis for $L$ and the previous results, we obtain $L\subset \mathfrak{P}_{\pi_a}(L^{[m]})$. Conversely, $\mathfrak{P}_{\pi_a}(L^{[m]})\subset L$. Indeed, if $(X+\alpha)_{x_{(a)}}\in \mathfrak{P}_a(L^{[m]})$, then there exists an element $(Y+\beta)_p\in L^{[m]}_p$, with $p\in \pi^{-1}(x_{(a)})$, such that $\pi_{a*}Y_p=X_{x_{(a)}}$ and $(\alpha)_{x_{(a)}}\circ (\pi_{*a})_p=\beta_p$. Using that $(Y+\beta)_p=\sum_{ij}c_{ij}(X^{(i)}_j+\alpha^{(i)}_j)_p$ for a unique set of constants $c_{ij}$, with $i=1,\ldots,m$ and $j=1,\ldots,n$, we have
$\pi_{a*}(\sum_{ij}c_{ij}(X^{(i)}_j)_p)=\sum_{j}c_{aj}(X_j)_{x_{(a)}}=X_{x_{(a)}}$. Meanwhile, $\beta_p=\alpha_{x_{(a)}}\circ (\pi_{*a})_p$ means that
$\sum_{j}c_{aj}(\alpha_j)_{x_{(a)}}=\alpha_{x_{(a)}}$. So, $(X+\alpha)_{x_{(a)}}=\sum_{j}c_{aj}(X_j+\alpha_j)_{x_{(a)}}\in L_{x_{(a)}}$.
\end{proof}

\begin{corollary}\label{COR1} Given a Dirac structure $(N,L)$, we have $\rho_m (L^{[m]})=\rho(L)^{[m]}$, where $\rho_m$ is the projection $\rho_m:\mathcal{P}N^m\rightarrow TN^m$.  Then, if $X$ is an $L$-Hamiltonian vector field
with respect to $L$, its diagonal prolongation $\widetilde X^{[m]}$ to $N^m$ is an $L$-Hamiltonian vector field with respect to $L^{[m]}$. Moreover, $\rho^*_m (L^{[m]})=\rho^*(L)^{[m]}$, where $\rho^*_m$ is the canonical projection $\rho^*_m:\mathcal{P}N^m\rightarrow T^*N^m$.
\end{corollary}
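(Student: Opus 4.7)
The plan is to reduce each of the three claims to the explicit local description of $L^{[m]}$ furnished in the preceding proposition, where it was shown that a local basis $\{X_j+\alpha_j\}_{j=1}^n$ of $\Gamma(L)$ induces, via the embeddings of $L$ into the $k$-th factor of $\mathcal{P}N^m$, a local basis $\{X_j^{(k)}+\alpha_j^{(k)}\mid j=1,\ldots,n,\ k=1,\ldots,m\}$ of $\Gamma(L^{[m]})$. Since both projections $\rho_m:\mathcal{P}N^m\to TN^m$ and $\rho^*_m:\mathcal{P}N^m\to T^*N^m$ act on each summand of $\mathcal{P}N^m\simeq TN^m\oplus T^*N^m$ separately, every claim reduces to a check on these generators.

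To establish $\rho_m(L^{[m]})=\rho(L)^{[m]}$, I would first apply $\rho_m$ to the basis elements, obtaining $\rho_m(X_j^{(k)}+\alpha_j^{(k)})=X_j^{(k)}$, and then recognize these as exactly the diagonal generators of $\rho(L)^{[m]}$ supplied by (\ref{prol1}). Both inclusions follow since $\{X_j\}$ spans $\rho(L)$ locally, hence $\{X_j^{(k)}\}$ spans $\rho(L)^{[m]}$. The analogous equality $\rho^*_m(L^{[m]})=\rho^*(L)^{[m]}$ follows by the same argument applied to the cotangent components $\alpha_j^{(k)}$.

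For the Hamiltonian statement, suppose $X+df\in\Gamma(L)$ for some $f\in C^\infty(N)$. The plan is to verify directly that $\widetilde X^{[m]}+d\widetilde f^{[m]}\in\Gamma(L^{[m]})$, which will mean that $\widetilde X^{[m]}$ is $L^{[m]}$-Hamiltonian with Hamiltonian function $\widetilde f^{[m]}$. Using Proposition~\ref{p*} to write $d\widetilde f^{[m]}=\widetilde{df}^{[m]}$ and expanding both diagonal prolongations as in~(\ref{prol-coord}), namely $\widetilde X^{[m]}=\sum_{k=1}^m X^{(k)}$ and $\widetilde{df}^{[m]}=\sum_{k=1}^m(df)^{(k)}$, I would regroup the sum as $\sum_{k=1}^m(X+df)^{(k)}$, each term of which is a section of $L^{[m]}$ by the very construction of the latter as a direct sum of copies of $L$.

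I do not anticipate a serious obstacle here: the result is essentially a bookkeeping verification, and the only care required is to consistently track the identification of sections of $L$ with their images in the $k$-th factor of $L^{[m]}$ and to invoke the naturality property $d\widetilde f^{[m]}=\widetilde{df}^{[m]}$ from Proposition~\ref{p*} when handling the exact one-form $df$.
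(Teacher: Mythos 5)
Your proposal is correct and follows essentially the same route the paper intends: the corollary is left unproved there precisely because it reduces to the componentwise description $L^{[m]}_{(x_{(1)},\ldots,x_{(m)})}=L_{x_{(1)}}\oplus\cdots\oplus L_{x_{(m)}}$ and the basis $\{X_j^{(k)}+\alpha_j^{(k)}\}$ from the preceding proposition, exactly as you use them. Your argument for the Hamiltonian part, writing $\widetilde X^{[m]}+d\widetilde f^{[m]}=\widetilde X^{[m]}+\widetilde{df}^{[m]}\in\Gamma(L^{[m]})$ via Proposition~\ref{p*}, is literally the computation the paper later records in the proof of Proposition~\ref{Prop}(a).
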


\begin{corollary} If $(N,L,X)$ is a Dirac--Lie system, then
$(N^m,L^{[m]},\widetilde X^{[m]})$ is also a Dirac--Lie system.
\end{corollary}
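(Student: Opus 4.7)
The plan is to combine the preceding proposition and Corollary \ref{COR1} into a short three-step verification against the definition of a Dirac--Lie system. Recall the definition requires three things: that $(N^m,L^{[m]})$ be a Dirac manifold, that $\widetilde X^{[m]}$ be a Lie system on $N^m$, and that it admit a Vessiot--Guldberg Lie algebra of $L^{[m]}$-Hamiltonian vector fields. The first point is already settled by the preceding proposition, so nothing needs to be done there.

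For the second point, I would start from a basis $X_1,\dots,X_r$ of the (finite-dimensional) Vessiot--Guldberg Lie algebra $V^X$, writing $X_t=\sum_{k=1}^r b_k(t)X_k$. Since the diagonal prolongation construction is linear in sections and, by Proposition \ref{p*}, commutes with the Lie bracket of vector fields (that is, $\widetilde{[X_i,X_j]}^{[m]}=[\widetilde X_i^{[m]},\widetilde X_j^{[m]}]$), the prolongations $\widetilde X_1^{[m]},\dots,\widetilde X_r^{[m]}$ span a real Lie algebra $\widetilde V\subset \mathfrak{X}(N^m)$ isomorphic to $V^X$, in particular finite-dimensional. Because diagonal prolongation of $t$-dependent vector fields is again linear in the time-coefficients, $\widetilde X^{[m]}_t=\sum_{k=1}^r b_k(t)\widetilde X_k^{[m]}$, so $\widetilde X^{[m]}$ is a Lie system with Vessiot--Guldberg Lie algebra $\widetilde V$.

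For the third point, I apply Corollary \ref{COR1}: each $X_k\in V^X$ is $L$-Hamiltonian by hypothesis, so its diagonal prolongation $\widetilde X_k^{[m]}$ is $L^{[m]}$-Hamiltonian. Hence every element of the basis of $\widetilde V$ lies in $\operatorname{Ham}(N^m,L^{[m]})$; since $\operatorname{Ham}(N^m,L^{[m]})$ is a linear subspace of $\mathfrak{X}(N^m)$, the entire Lie algebra $\widetilde V$ consists of $L^{[m]}$-Hamiltonian vector fields. Together with the first two points, this shows that $(N^m,L^{[m]},\widetilde X^{[m]})$ meets the definition of a Dirac--Lie system.

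There is essentially no obstacle: the substantive work has been carried out in the preceding proposition (that $L^{[m]}$ is a Dirac structure on $N^m$) and in Corollary \ref{COR1} (that $L$-Hamiltonian vector fields prolong to $L^{[m]}$-Hamiltonian ones). The only mild care needed is to notice that we are prolonging a whole \emph{Lie algebra} of vector fields, and to confirm that the isomorphism $\widetilde V\cong V^X$ obtained from $\widetilde{[X_i,X_j]}^{[m]}=[\widetilde X_i^{[m]},\widetilde X_j^{[m]}]$ preserves finite-dimensionality; this is immediate from Proposition \ref{p*}. The corollary is therefore a direct consequence.
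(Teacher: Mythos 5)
Your proposal is correct and follows essentially the same route as the paper: the paper's one-line proof likewise takes the diagonal prolongations of the elements of the Vessiot--Guldberg Lie algebra as the new Vessiot--Guldberg Lie algebra for $\widetilde X^{[m]}$, and invokes the construction of $L^{[m]}$ together with Corollary \ref{COR1} to conclude that these prolongations are $L^{[m]}$-Hamiltonian. Your version merely makes explicit the supporting facts (that $L^{[m]}$ is a Dirac structure, that prolongation is a Lie algebra morphism preserving finite-dimensionality, and that linear combinations of $L^{[m]}$-Hamiltonian vector fields are $L^{[m]}$-Hamiltonian) that the paper leaves implicit.
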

\begin{proof} If $X$ admits a Vessiot--Guldberg Lie algebra $V$ of Hamiltonian vector fields with respect to $(N,L)$, then $\widetilde X$ possesses a Vessiot--Guldberg Lie algebra $\widetilde V$ given by the diagonal prolongations of the elements of $V$, which are $L^{[m]}$-Hamiltonian vector fields, by construction of $L^{[m]}$ and Corollary \ref{COR1}.
\end{proof}

Similarly to the prolongations of vector fields, one can define prolongations of functions and 1-forms in an obvious way.

\begin{proposition}\label{Prop} Let $X$ be a vector field and $f$ be a function on $N$. Then:
\begin{itemize}
\item[(a)] If $f$ is an $L$-Hamiltonian function for $X$, its diagonal prolongation $\widetilde{f}$ to $N^m$ is an $L^{[m]}$-Hamiltonian function of the diagonal prolongation $\widetilde X$ to $N^m$.
\item[(b)] If $f\in {\rm Cas}(N,L)$, then $\widetilde f\in {\rm Cas}(N^m,L^{[m]})$.
\item[(c)] The map $\lambda:({\rm Adm}(N,L),\{\cdot,\cdot\}_L) \ni f \mapsto \widetilde f \in ({\rm Adm}(N^m,L^{[m]}),\{\cdot,\cdot\}_{L^{[m]}})$ is an injective Lie algebra morphism.

\end{itemize}
\end{proposition}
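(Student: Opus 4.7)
The plan is to prove (a) first and then derive (b) and most of (c) as immediate consequences. For (a), I combine two facts already established in the excerpt. First, Proposition \ref{p*} says that $d$ commutes with diagonal prolongation: $d\widetilde f = \widetilde{df}$. Second, the proof of the preceding proposition shows that $L^{[m]}$ is spanned by the single-slot insertions $X_i^{(j)} + \alpha_i^{(j)}$ associated with a basis $X_i + \alpha_i$ of $L$; by linearity, every section $X + \alpha \in \Gamma(L)$ yields $m$ sections $(X + \alpha)^{(j)} \in \Gamma(L^{[m]})$, one in each copy. With these ingredients in hand, the identity
\begin{equation*}
\widetilde X + d\widetilde f \;=\; \widetilde X + \widetilde{df} \;=\; \sum_{j=1}^m \bigl(X^{(j)} + (df)^{(j)}\bigr) \;=\; \sum_{j=1}^m (X+df)^{(j)}
\end{equation*}
places $\widetilde X + d\widetilde f$ in $\Gamma(L^{[m]})$ whenever $X + df \in \Gamma(L)$, which proves (a).

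Part (b) is then the special case of (a) with $X = 0$: a function $f$ is a Casimir exactly when $0 + df \in \Gamma(L)$, and since the prolongation of the zero vector field is the zero vector field on $N^m$, it follows from (a) that $0 + d\widetilde f \in \Gamma(L^{[m]})$, i.e.\ $\widetilde f \in {\rm Cas}(N^m, L^{[m]})$.

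For (c), admissibility of $\widetilde f$ is exactly (a), while $\mathbb{R}$-linearity of $\lambda$ is immediate from the pointwise formula $\widetilde f(x_{(1)}, \ldots, x_{(m)}) = \sum_j f(x_{(j)})$. For injectivity, if $\widetilde f \equiv 0$, fix any $x_0 \in N$ and substitute $x_{(2)} = \cdots = x_{(m)} = x_0$ to obtain $f(x) \equiv -(m-1) f(x_0)$, so $f$ is constant; evaluating the constraint at $x_{(1)} = \cdots = x_{(m)} = x_0$ then forces $m f(x_0) = 0$, hence $f \equiv 0$. Finally, to see that $\lambda$ preserves the Poisson bracket, pick any $L$-Hamiltonian vector field $X$ for $f$; by (a), $\widetilde X$ is an $L^{[m]}$-Hamiltonian vector field for $\widetilde f$, so
\begin{equation*}
\{\widetilde f, \widetilde g\}_{L^{[m]}}(x_{(1)}, \ldots, x_{(m)}) \;=\; \widetilde X(\widetilde g)(x_{(1)}, \ldots, x_{(m)}) \;=\; \sum_{j=1}^m (Xg)(x_{(j)}) \;=\; \widetilde{\{f,g\}_L}(x_{(1)}, \ldots, x_{(m)}),
\end{equation*}
where the middle step uses that $X^{(j)}$ differentiates only the variables in the $j$th slot, and the final equality uses the definition $\{f, g\}_L = Xg$.

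I do not anticipate any genuine obstacle; the argument is essentially a verification that the diagonal prolongation construction is functorial in the appropriate sense. The only points requiring care are keeping the single-slot embedding $X^{(j)}$ notationally distinct from the full prolongation $\widetilde X = \sum_{j=1}^m X^{(j)}$, invoking Proposition \ref{p*} to commute $d$ past the tilde, and remembering that $\{\cdot, \cdot\}_L$ is independent of the particular $L$-Hamiltonian vector field chosen, so that the bracket computation in (c) does not depend on the auxiliary choice of $X$.
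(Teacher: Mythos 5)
Your proposal is correct and follows essentially the same route as the paper: admissibility of $\widetilde f$ via $d\widetilde f=\widetilde{df}$ and the slot-wise decomposition of $L^{[m]}$, Casimirs as the $X=0$ case, and the bracket identity $\{\widetilde f,\widetilde g\}_{L^{[m]}}=\widetilde X\widetilde g=\widetilde{\{f,g\}_L}$. You merely spell out the injectivity argument and the membership of $\widetilde X+d\widetilde f$ in $\Gamma(L^{[m]})$ in more detail than the paper does.
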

\begin{proof}
Let $f$ be an $L$-Hamiltonian function for $X$. Then, $X+df\in \Gamma(L)$ and $\widetilde X+d\widetilde f=\widetilde X+\widetilde{df}$ is as an element of $\Gamma(L^{[m]})$. By a similar argument, if $f\in {\rm Cas}(N,L)$, then $\widetilde f\in {\rm Cas}(N^m,L^{[m]})$.
Given $f,g\in {\rm Adm}(N,L)$, we have $\widetilde{\{f,g\}_L}=\widetilde{X_fg}=\widetilde{X_f}\widetilde{g}=X_{\widetilde{f}}\widetilde{g}=\{\widetilde f,\widetilde g\}_{L^{[m]}}$, i.e., $\lambda(\{f,g\}_L)=\{\lambda(f),\lambda(g)\}_{L^{[m]}}$. Additionally, as $\lambda$ is
linear, it becomes a Lie algebra morphism. Moreover, it is easy to see that $\widetilde f=0$ if and only if $f=0$. Hence, $\lambda$ is injective.
\end{proof}
Note, however, that in the above we cannot ensure that $\lambda$ is a Poisson algebra morphism, as in general $\widetilde {fg}\neq \widetilde f\widetilde g$.

Using the above proposition, we can easily prove the following corollaries.
\begin{corollary} If $h_1,\ldots,h_r:N\rightarrow \mathbb{R}$ is a family of functions on a Dirac manifold $(N,L)$ spanning a finite-dimensional real Lie algebra of functions with respect to the Lie bracket $\{\cdot,\cdot\}_L$, then their diagonal prolongations $\widetilde h_1,\ldots,\widetilde h_r$ to $N^m$ close an isomorphic Lie algebra of functions with respect to the Lie bracket $\{\cdot,\cdot\}_{L^{[m]}}$ induced by the Dirac structure $(N^m,L^{[m]})$.
\end{corollary}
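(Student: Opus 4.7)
The plan is to deduce the statement as an immediate consequence of the preceding Proposition \ref{Prop}, in particular part (c). Let $\mathfrak{h}$ denote the finite-dimensional real Lie subalgebra of $({\rm Adm}(N,L),\{\cdot,\cdot\}_L)$ spanned by $h_1,\dots,h_r$; by hypothesis $\mathfrak{h}$ is closed under $\{\cdot,\cdot\}_L$, and in particular every $h_i$ is admissible. First I would invoke parts (a) and (c) of Proposition \ref{Prop} to guarantee that each prolongation $\widetilde h_i$ lies in ${\rm Adm}(N^m,L^{[m]})$ and that the assignment $\lambda:f\mapsto \widetilde f$ is a well-defined injective Lie algebra morphism
\[
\lambda:({\rm Adm}(N,L),\{\cdot,\cdot\}_L)\longrightarrow ({\rm Adm}(N^m,L^{[m]}),\{\cdot,\cdot\}_{L^{[m]}}).
\]

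Next I would restrict $\lambda$ to $\mathfrak{h}$. Since $\lambda$ is linear, the images $\widetilde h_1,\dots,\widetilde h_r$ span $\lambda(\mathfrak{h})$; since $\lambda$ intertwines brackets and $\mathfrak{h}$ is closed under $\{\cdot,\cdot\}_L$, the subspace $\lambda(\mathfrak{h})$ is closed under $\{\cdot,\cdot\}_{L^{[m]}}$. Hence $\widetilde h_1,\dots,\widetilde h_r$ generate a finite-dimensional real Lie algebra of functions on $N^m$. Using injectivity of $\lambda$ once more, the restriction $\lambda|_\mathfrak{h}\colon\mathfrak{h}\to\lambda(\mathfrak{h})$ is a Lie algebra isomorphism: any relation $\sum c_i\widetilde h_i=0$ on $N^m$ pulls back to $\sum c_i h_i=0$ on $N$, so linear independence is preserved and the dimensions agree.

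There is essentially no obstacle here; the content is entirely absorbed in Proposition \ref{Prop}(c). The only small thing to notice, and worth stating explicitly in the write-up, is that the morphism $\lambda$ need not be a morphism of Poisson algebras (as remarked after Proposition \ref{Prop}, $\widetilde{fg}\neq \widetilde f\,\widetilde g$ in general), but this plays no role for the current assertion, which concerns only the Lie bracket structure.
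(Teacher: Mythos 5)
Your proof is correct and follows exactly the route the paper intends: the corollary is stated immediately after the remark ``Using the above proposition, we can easily prove the following corollaries,'' and your argument simply restricts the injective Lie algebra morphism $\lambda$ of Proposition \ref{Prop}(c) to the span of $h_1,\dots,h_r$. Your explicit note that $\lambda$ fails to be a Poisson algebra morphism but that this is irrelevant here is a nice touch, matching the paper's own caveat.
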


\begin{corollary} If $(N,L,X)$ is  a Dirac--Lie system admitting a Lie--Hamiltonian  $(N,L,h)$, then $(N^m,L^{[m]},\widetilde X^{[m]})$ is a Dirac--Lie system with a Dirac--Lie Hamiltonian $(N^m,L^{[m]},h^{[m]})$, where $h^{[m]}_t=\widetilde h_t^{[m]}$  is the diagonal prolongation of $h_t$ to $N^m$.
\end{corollary}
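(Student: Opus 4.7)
The plan is to deduce this result as an immediate assembly of the two results that directly precede the statement, rather than doing any fresh work. By the Corollary stated just before the theorem (``If $(N,L,X)$ is a Dirac--Lie system, then $(N^m,L^{[m]},\widetilde X^{[m]})$ is also a Dirac--Lie system''), the triple $(N^m,L^{[m]},\widetilde X^{[m]})$ is already known to be a Dirac--Lie system. Hence the only substantive task is to check that $(N^m,L^{[m]},h^{[m]})$ satisfies the two conditions defining a Dirac--Lie Hamiltonian: (i) the relation $\widetilde X^{[m]}_t + dh^{[m]}_t \in \Gamma(L^{[m]})$ for every $t \in \mathbb{R}$, and (ii) the finite-dimensionality of ${\rm Lie}(\{h^{[m]}_t\}_{t\in\mathbb{R}},\{\cdot,\cdot\}_{L^{[m]}})$.

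For (i) I would apply Proposition \ref{Prop}(a) pointwise in $t$. By hypothesis $X_t + dh_t \in \Gamma(L)$ for every $t$, so that proposition yields $\widetilde X_t^{[m]} + d\widetilde h_t^{[m]} \in \Gamma(L^{[m]})$. Since by construction $\widetilde X_t^{[m]}$ is the $t$-slice of $\widetilde X^{[m]}$ and $h^{[m]}_t = \widetilde h_t^{[m]}$ by definition, this is exactly the Hamiltonian condition required. For (ii) I would invoke the Corollary immediately preceding the theorem, which states that if a family of functions spans a finite-dimensional real Lie algebra under $\{\cdot,\cdot\}_L$ then their diagonal prolongations close on an isomorphic finite-dimensional Lie algebra under $\{\cdot,\cdot\}_{L^{[m]}}$. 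By hypothesis, ${\rm Lie}(\{h_t\}_{t\in\mathbb{R}},\{\cdot,\cdot\}_L)$ is finite-dimensional; fixing a basis $f_1,\dots,f_r$ of it, that corollary makes $\widetilde f_1^{[m]},\dots,\widetilde f_r^{[m]}$ into a basis of an isomorphic finite-dimensional Lie algebra $\widetilde{\mathfrak{W}} \subset {\rm Adm}(N^m,L^{[m]})$.

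To close, I would observe that the prolongation operation on functions is $\mathbb{R}$-linear; writing each $h_t$ as a linear combination of $f_1,\dots,f_r$ with $t$-dependent coefficients therefore writes $h^{[m]}_t = \widetilde h_t^{[m]}$ as the same linear combination of $\widetilde f_1^{[m]},\dots,\widetilde f_r^{[m]}$. Consequently $\{h^{[m]}_t\}_{t\in\mathbb{R}} \subset \widetilde{\mathfrak{W}}$, and ${\rm Lie}(\{h^{[m]}_t\}_{t\in\mathbb{R}},\{\cdot,\cdot\}_{L^{[m]}}) \subset \widetilde{\mathfrak{W}}$ is finite-dimensional, completing (ii). There is no genuine obstacle here; the only point that must be kept in view is precisely this $\mathbb{R}$-linearity of $f \mapsto \widetilde f^{[m]}$, which is what guarantees that a $t$-family of functions lying in a finite-dimensional real Lie algebra of admissible functions prolongs to a $t$-family still lying in the isomorphic prolonged Lie algebra.
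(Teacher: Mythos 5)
Your proof is correct and follows essentially the same route the paper intends: the paper offers no explicit proof, stating only that the corollary follows easily from Proposition \ref{Prop} and the preceding corollaries, which is exactly the assembly you carry out. Your explicit attention to the $\mathbb{R}$-linearity of $f\mapsto\widetilde f^{[m]}$ (needed to place the curve $\widetilde h_t^{[m]}$ inside the prolonged finite-dimensional Lie algebra) is the one small point the paper leaves tacit, and you handle it correctly.
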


\section{Superposition rules and $t$-independent constants of the motion for
Dirac--Lie systems}

Let us give a first straightforward application of Dirac--Lie systems to obtain constants of the motion.

\begin{proposition}
  \label{Cas}Given a Dirac--Lie system $(N, L, X)$, the elements of
  ${\rm Cas}(N, L)$ are constants of the motion for $X$. Moreover, the set $\mathcal{I}^X_L$ of
 its admissible $t$-independent constants of the motion form a Poisson
  algebra $( \mathcal{I}^X_L, \sbt, \{ \cdot, \cdot \}_L)$.
\end{proposition}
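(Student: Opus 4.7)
The plan is to split the statement into two claims and verify each using machinery already established in Sections 2 and 3: for the first, Proposition \ref{CasCon} (which says ${\rm Cas}(N,L)$ lies in the Lie-theoretic centre of $({\rm Adm}(N,L),\{\cdot,\cdot\}_L)$), and for the second, the Jacobi identity for $\{\cdot,\cdot\}_L$ together with the Leibniz rule, plus the fact (essentially Proposition \ref{invariants}) that ${\rm Adm}(N,L)$ is closed under pointwise product and Poisson bracket. The logic is standard once admissibility is tracked carefully, so the main care is bookkeeping rather than any deep obstacle.

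For the first claim, I would fix $f\in{\rm Cas}(N,L)$ and a basis $Y_1,\dots,Y_r$ of the Vessiot--Guldberg Lie algebra $V^X$ of $L$-Hamiltonian vector fields (which exists since $(N,L,X)$ is a Dirac--Lie system), with $L$-Hamiltonian functions $g_1,\dots,g_r\in{\rm Adm}(N,L)$. By the definition of the bracket $\{\cdot,\cdot\}_L$ we have $Y_k(f)=\{g_k,f\}_L$, and Proposition \ref{CasCon} gives $\{g_k,f\}_L=0$ because $f$ is Casimir. Since $X_t\in V^X$ for every $t$, writing $X_t=\sum_k b_k(t)Y_k$ yields $X_t(f)=0$ for all $t$, i.e. $f$ is a $t$-independent constant of the motion; it is admissible by definition, hence $f\in\mathcal I^X_L$.

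For the second claim I must check that $\mathcal I^X_L$ is closed under pointwise product and under $\{\cdot,\cdot\}_L$. Both operations preserve admissibility: the product case follows from the argument in Proposition \ref{invariants} (polynomial combinations of admissible functions are admissible), and the Poisson-bracket case is immediate since $({\rm Adm}(N,L),\{\cdot,\cdot\}_L)$ is a Lie algebra. Closure under product then follows from the ordinary Leibniz rule applied to vector fields: if $f_1,f_2\in\mathcal I^X_L$, then
\[
X_t(f_1 f_2)=X_t(f_1)\,f_2+f_1\,X_t(f_2)=0.
\]
For closure under the bracket, I would again take any $Y_k\in V^X$ with Hamiltonian $g_k$ and compute using the Jacobi identity for $\{\cdot,\cdot\}_L$:
\[
Y_k\bigl(\{f_1,f_2\}_L\bigr)=\{g_k,\{f_1,f_2\}_L\}_L=\{\{g_k,f_1\}_L,f_2\}_L+\{f_1,\{g_k,f_2\}_L\}_L.
\]
Each inner bracket equals $Y_k(f_i)=0$ since $f_i\in\mathcal I^X_L$, so the expression vanishes and $\{f_1,f_2\}_L\in\mathcal I^X_L$.

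Finally, the Lie-algebra axioms for $\{\cdot,\cdot\}_L$ and the Leibniz rule $\{f,gh\}_L=\{f,g\}_L h+g\{f,h\}_L$ are inherited from $({\rm Adm}(N,L),\cdot\,,\{\cdot,\cdot\}_L)$ restricted to the subset $\mathcal I^X_L$, yielding the Poisson algebra structure. No serious obstacle is foreseen; the only subtlety is keeping track of admissibility at each step, which is why I would invoke the argument of Proposition \ref{invariants} explicitly when handling the pointwise product.
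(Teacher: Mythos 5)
Your proposal is correct and follows essentially the same route as the paper: admissibility is preserved under products by the argument of Proposition \ref{invariants}, closure of $\mathcal{I}^X_L$ under the product follows from $X_t$ being a derivation, and closure under $\{\cdot,\cdot\}_L$ follows from the identity $X_t\{f,g\}_L=\{X_tf,g\}_L+\{f,X_tg\}_L$, which is exactly your Jacobi-identity computation since $X_t$ is $L$-Hamiltonian. Your explicit treatment of the first claim via Proposition \ref{CasCon} is a small amount of detail the paper leaves implicit, but it is not a different argument.
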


\begin{proof}
  Two admissible functions $f$ and $g$ are $t$-independent constants of the motion for $X$ if and
  only if $X_t f = X_t g = 0$ for every $t \in \mathbb{R}$. Using that every
$X_t$ is a derivation of the associative algebra  $C^\infty(N)$, we see that given $f,g\in \mathcal{I}^X_L$, then $f+g$, $\lambda f$,  and $f\cdot g$ are
also constants of the motion for $X$ for every $\lambda\in\mathbb{R}$. Since the sum and product of admissible functions are admissible functions, then $\mathcal{I}_L^X$ is closed under the sum and product of elements and real constants. So $(\mathcal{I}_L^X,\sbt)$ is an associative
subalgebra of $(C^\infty(N),\sbt)$.

As $(N,L,X)$ is a Dirac--Lie system, the vector fields $\{X_t \}_{t \in
\mathbb{R}}$
  are $L$-Hamiltonian. Therefore,
  \[ X_t \{f, g\}_L = \{X_tf, g\}_L + \{f, X_tg \}_L.
 \]
 As $f$ and $g$ are constants
  of the motion for $X$, then $\{f, g\}_L$ is so also. Using that $\{f,g\}_L$ is also an admissible function, we
finish the proof.
\end{proof}
The following can easily be proved.
\begin{proposition}\label{Prop:Ham}
  Let $(N,L,X)$ be a Dirac--Lie system possessing a Dirac--Lie Hamiltonian
  $(N, L, h)$. An admissible function $f : N \rightarrow \mathbb{R}$ is
  a constant of the motion for $X$ if and only if it Poisson commutes with all
  the elements of Lie$(\{h_t \}_{t \in \mathbb{R}}, \{ \cdot, \cdot
  \}_{L})$.
\end{proposition}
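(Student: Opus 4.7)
The plan is to reduce the statement to the defining property $X_t + dh_t \in \Gamma(L)$ of the Dirac--Lie Hamiltonian, and then exploit the Lie algebra structure of $({\rm Adm}(N,L),\{\cdot,\cdot\}_L)$ to propagate the vanishing of the Poisson bracket from the curve $\{h_t\}_{t\in\mathbb{R}}$ to the whole Lie algebra it generates.

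First I would observe that, because $(N,L,h)$ is a Dirac--Lie Hamiltonian for $X$, each $X_t$ is an $L$-Hamiltonian vector field with $L$-Hamiltonian function $h_t$. By the very definition of the Poisson bracket on admissible functions, for any $f\in{\rm Adm}(N,L)$ one has
\[
X_t f \;=\; \{h_t,f\}_L, \qquad \forall t\in\mathbb{R}.
\]
Hence $f$ is a $t$-independent constant of the motion for $X$ if and only if $\{h_t,f\}_L=0$ for all $t\in\mathbb{R}$.

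For the ``only if'' direction, I would argue that the set
\[
\mathcal{C}_f \;=\; \{\,g\in{\rm Adm}(N,L)\mid \{g,f\}_L=0\,\}
\]
is a Lie subalgebra of $({\rm Adm}(N,L),\{\cdot,\cdot\}_L)$. Indeed, $\mathcal{C}_f$ is plainly a linear subspace, and if $g_1,g_2\in\mathcal{C}_f$, the Jacobi identity yields
\[
\{\{g_1,g_2\}_L,f\}_L \;=\; \{g_1,\{g_2,f\}_L\}_L-\{g_2,\{g_1,f\}_L\}_L \;=\; 0,
\]
so $\{g_1,g_2\}_L\in\mathcal{C}_f$. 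Since $\{h_t\}_{t\in\mathbb{R}}\subset\mathcal{C}_f$ by hypothesis, the smallest Lie subalgebra containing this family, namely ${\rm Lie}(\{h_t\}_{t\in\mathbb{R}},\{\cdot,\cdot\}_L)$, is also contained in $\mathcal{C}_f$, which is precisely the claim. The converse is immediate, since $h_t\in{\rm Lie}(\{h_t\}_{t\in\mathbb{R}},\{\cdot,\cdot\}_L)$ and $\{h_t,f\}_L=0$ then gives $X_tf=0$ for every $t$.

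I do not foresee any serious obstacle: the only subtlety is to make sure that the Poisson bracket of admissible functions is actually well defined and satisfies the Jacobi identity, so that $\mathcal{C}_f$ is a Lie subalgebra; but this is guaranteed by the Poisson algebra structure on $({\rm Adm}(N,L),\sbt,\{\cdot,\cdot\}_L)$ recalled in the preliminaries on Dirac manifolds. The fact that $\{h_t,f\}_L$ is itself admissible (so that the bracket can be iterated) follows from Proposition~\ref{Cas} and the closure of ${\rm Adm}(N,L)$ under the Poisson bracket.
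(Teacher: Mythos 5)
Your proof is correct and is precisely the argument the paper has in mind: the paper states this proposition without proof (``The following can easily be proved''), and the intended reasoning is exactly your reduction of $X_t f$ to $\{h_t,f\}_L$ via the defining property $X_t+dh_t\in\Gamma(L)$, followed by the observation that the centralizer of $f$ in $({\rm Adm}(N,L),\{\cdot,\cdot\}_L)$ is a Lie subalgebra by the Jacobi identity. Your attention to the well-definedness of the bracket and to the admissibility of iterated brackets covers the only points that could have been glossed over.
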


Consider a Dirac--Lie system $(N,L^\omega,X)$ with $\omega$ being
a symplectic structure and $X$ being an autonomous system. Consequently, ${\rm Adm}(N,L)=C^\infty(N)$ and the above
proposition entails that $f\in C^\infty(N)$ is a constant of the motion for $X$ if and only if it Poisson commutes
with a Hamiltonian function $h$ associated to $X$. This shows that Proposition \ref{Prop:Ham}
recovers as a particular case this well-known result \cite{FM}.
Additionally, Proposition \ref{Prop:Ham} suggests us that the r\^ole played by
autonomous Hamiltonians for autonomous Hamiltonian systems is performed by
finite-dimensional Lie algebras of admissible functions associated with a
Dirac--Lie
Hamiltonian for Dirac--Lie systems. This fact can be employed, for instance, to study $t$-independent
first-integrals of Dirac--Lie systems, e.g., the maximal
number of such first-integrals in involution, which would lead to the
interesting analysis of integrability/superintegrability and action/angle variables for Dirac--Lie
systems \cite{NTZ12}.

Another reason to study $t$-independent constants of the motion of
Lie systems is their
use in deriving superposition rules \cite{CGM00}. More
explicitly, a superposition rule for a Lie system can be obtained through the $t$-independent
constants of the motion of one of its diagonal
prolongations \cite{CGM07}. The following proposition provides some ways of obtaining such constants.

\begin{proposition} If $X$ be a system possessing a $t$-independent constant of the motion $f$, then:
\begin{enumerate}
 \item The diagonal prolongation $\widetilde f^{[m]}$ is a $t$-independent constant of the motion for $\widetilde X^{[m]}$.
 \item If $Y$ is a $t$-independent Lie symmetry of $X$, then $\widetilde Y^{[m]}$ is a $t$-independent Lie symmetry of $\widetilde X^{[m]}$.
 \item If
$h$ is a $t$-independent constant of the motion for $\widetilde X^{[m]}$, then $\widetilde Y^{[m]}h$ is another $t$-independent constant of the motion for $\widetilde X^{[m]}$.
\end{enumerate}
\end{proposition}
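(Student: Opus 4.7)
The plan is to treat the three claims separately, reducing each to an identity that is either already in Proposition \ref{p*} or follows at once from the derivation property of vector fields. No serious obstacle is expected; the whole argument is a short computation once one invokes the key compatibility $\widetilde{[Y_1,Y_2]}^{[m]}=[\widetilde Y_1^{[m]},\widetilde Y_2^{[m]}]$ and the analogous formula $\widetilde{Xf}^{[m]}=\widetilde X^{[m]}\widetilde f^{[m]}$ (both of which are immediate from the coordinate expression (\ref{prol-coord})).

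For part (1), I would first observe that from (\ref{prol-coord}) one has
\[
\widetilde X_t^{[m]}\widetilde f^{[m]}(x_{(1)},\dots,x_{(m)})=\sum_{i=1}^{m}(X_tf)(x_{(i)})=\widetilde{X_tf}^{[m]}(x_{(1)},\dots,x_{(m)}).
\]
Since $f$ is a $t$-independent constant of the motion for $X$, one has $X_tf=0$ for every $t\in\mathbb{R}$, whence $\widetilde X_t^{[m]}\widetilde f^{[m]}=0$ for every $t$, proving the claim.

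For part (2), a $t$-independent Lie symmetry $Y$ of $X$ is characterized by $[Y,X_t]=0$ for every $t\in\mathbb{R}$. Using the bracket identity for prolongations noted in the paragraph following Proposition \ref{p*},
\[
[\widetilde Y^{[m]},\widetilde X_t^{[m]}]=\widetilde{[Y,X_t]}^{[m]}=0,\qquad\forall t\in\mathbb{R},
\]
so $\widetilde Y^{[m]}$ is a $t$-independent Lie symmetry of $\widetilde X^{[m]}$.

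For part (3), I would simply compute, for an arbitrary $t$-independent first integral $h\in C^\infty(N^m)$ of $\widetilde X^{[m]}$,
\[
\widetilde X_t^{[m]}\bigl(\widetilde Y^{[m]}h\bigr)=\widetilde Y^{[m]}\bigl(\widetilde X_t^{[m]}h\bigr)+[\widetilde X_t^{[m]},\widetilde Y^{[m]}]h=0+0=0,
\]
where the first term vanishes because $\widetilde X_t^{[m]}h=0$ by hypothesis and the second vanishes by part (2). Hence $\widetilde Y^{[m]}h$ is again a $t$-independent constant of the motion for $\widetilde X^{[m]}$. The only conceptual point to highlight in the writeup is that $h$ need not be a prolonged function, so one genuinely uses the general commutator identity rather than the specialized formula of part (1).
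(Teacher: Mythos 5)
Your proposal is correct and is essentially the paper's own argument: the authors dispose of all three parts by calling them a straightforward application of Proposition \ref{p*} together with the standard properties of diagonal prolongations (the identities $\widetilde{X_tf}^{[m]}=\widetilde X_t^{[m]}\widetilde f^{[m]}$ and $\widetilde{[Y,X_t]}^{[m]}=[\widetilde Y^{[m]},\widetilde X_t^{[m]}]$), which is exactly what you use. You have merely written out the short computations that the paper leaves implicit, including the correct observation in part (3) that $h$ need not itself be a prolonged function.
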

\begin{proof} This result is a straightforward application of Proposition \ref{p*} and the properties of the diagonal prolongations of $t$-dependent vector fields.
\end{proof}

Using the fact that the diagonal prolongation of vector fields is a Lie bracket homomorphism, in virtue of Proposition \ref{invariants} we get the following.
\begin{proposition}\label{Prop:Cas} Given a Dirac--Lie system $(N,L,X)$ that admits a Dirac--Lie Hamiltonian  $(N,L,h)$ such that $\{h_t\}_{t\in\mathbb{R}}$ is contained in a finite-dimensional  Lie algebra of admissible functions $(\mathfrak{M},\{\cdot,\cdot\}_L)$. Given the momentum map $J:N^m	\to\mathfrak{W}^*$ associated with the Lie algebra morphism $\phi:f\in \mathfrak{W}\mapsto \widetilde f\in {\rm Adm}(N^m,L^{[m]})$, the pull-back $J^*(C)$  of any Casimir function $C$ on $\mathfrak{W}^*$ is a constant of the motion for the diagonal prolongation $\widetilde X^{[m]}$. If $\mathfrak{W}\simeq{\rm Lie}(\{\widetilde h_t\}_{t\in\mathbb{R}},\{\cdot,\cdot\}_{L^{[m]}})$, the function $J^*(C)$ Poisson commutes with all $L^{[m]}$-admissible constants of the motion of $\widetilde X^{[m]}$.
\end{proposition}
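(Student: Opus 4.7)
The plan is to split the proof into the two claims and reduce each of them to earlier propositions. The essential idea is that $\phi\colon f\in\mathfrak{W}\mapsto \widetilde{f}\in\mathrm{Adm}(N^m,L^{[m]})$, thanks to Proposition \ref{Prop}(c), is already an injective Lie algebra morphism, i.e., an infinitesimal $L^{[m]}$-Hamiltonian action of $\mathfrak{W}$ on $(N^m,L^{[m]})$. Hence Proposition \ref{invariants} applies directly to $\phi$ and yields two facts that will do most of the work: first, $J^*$ is a morphism of Poisson algebras into $\mathrm{Adm}(N^m,L^{[m]})$, so $J^*(C)\in\mathrm{Adm}(N^m,L^{[m]})$; second, whenever $C$ is a Casimir on $\mathfrak{W}^*$, the function $J^*(C)$ Poisson-commutes with every element of $\phi(\mathfrak{W})$.

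For the first claim, I would invoke the corollary that the diagonal prolongation $(N^m,L^{[m]},\widetilde{X}^{[m]})$ is itself a Dirac--Lie system admitting the Dirac--Lie Hamiltonian $(N^m,L^{[m]},h^{[m]})$, where $h^{[m]}_t=\widetilde{h}_t=\phi(h_t)\in\phi(\mathfrak{W})$. Then $\widetilde{X}^{[m]}_t J^*(C)=\{\widetilde{h}_t,J^*(C)\}_{L^{[m]}}=0$, because $\widetilde{h}_t\in\phi(\mathfrak{W})$ and by Step~1. This gives at once that $J^*(C)$ is a $t$-independent constant of the motion of $\widetilde{X}^{[m]}$.

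For the second claim I would first unravel the hypothesis $\mathfrak{W}\simeq\mathrm{Lie}(\{\widetilde{h}_t\}_{t\in\mathbb{R}},\{\cdot,\cdot\}_{L^{[m]}})$: since $\phi$ is injective and $\mathrm{Lie}(\{\widetilde{h}_t\})\subset\phi(\mathfrak{W})$, comparing dimensions forces $\phi(\mathfrak{W})=\mathrm{Lie}(\{\widetilde{h}_t\})$. Now let $g$ be any $L^{[m]}$-admissible constant of the motion for $\widetilde{X}^{[m]}$. Proposition \ref{Prop:Ham} gives $\{g,k\}_{L^{[m]}}=0$ for every $k\in\mathrm{Lie}(\{\widetilde{h}_t\})=\phi(\mathfrak{W})$.

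The remaining, and main, step is to convert Poisson-commutation of $J^*(C)$ with $\phi(\mathfrak{W})$ into Poisson-commutation of $J^*(C)$ with $g$. For this I would fix a basis $v_1,\dots,v_n$ of $\mathfrak{W}$, with associated linear coordinates $f_1,\dots,f_n$ on $\mathfrak{W}^*$, and reuse the explicit construction from the proof of Proposition \ref{invariants}: writing $C=C(f_1,\dots,f_n)$ and choosing $L^{[m]}$-Hamiltonian vector fields $X_{\phi(v_i)}$ of $\phi(v_i)$, one obtains an $L^{[m]}$-Hamiltonian vector field of $J^*(C)$ of the form
\[
X_{J^*(C)}=\sum_{i=1}^{n}\frac{\partial C}{\partial x^i}(J)\,X_{\phi(v_i)}.
\]
Since each $X_{\phi(v_i)}g=-\{g,\phi(v_i)\}_{L^{[m]}}=0$ by the previous step, we conclude $\{J^*(C),g\}_{L^{[m]}}=X_{J^*(C)}g=0$. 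The hard part is precisely this last step: a naive appeal to Proposition \ref{invariants} alone only yields commutation with the Lie subalgebra $\phi(\mathfrak{W})$, whereas we need commutation with the much larger centralizer of $\phi(\mathfrak{W})$; the construction of $X_{J^*(C)}$ as a $C^{\infty}(N^m)$-linear combination of Hamiltonian vector fields of elements of $\phi(\mathfrak{W})$ is what bridges the gap.
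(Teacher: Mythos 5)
Your proof is correct and follows exactly the route the paper intends: the paper offers no written proof of Proposition \ref{Prop:Cas}, only the one-line remark that it follows from Proposition \ref{invariants} together with the fact that diagonal prolongation is a Lie bracket homomorphism (i.e., Proposition \ref{Prop}(c)). Your elaboration supplies the details the paper leaves implicit — in particular the genuinely nontrivial step for the second claim, where commutation of $J^*(C)$ with the centralizer of $\phi(\mathfrak{W})$ (rather than merely with $\phi(\mathfrak{W})$ itself) is obtained by expressing a Hamiltonian vector field of $J^*(C)$ as a $C^\infty(N^m)$-linear combination of the $X_{\phi(v_i)}$, exactly as in the proof of Proposition \ref{invariants} — and this is a faithful completion rather than a different argument.
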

\subsection{Example}
Let us use the above results to devise a superposition rule for the third-order Kummer--Schwarz equation in first-order form (\ref{firstKS3}) with $c_0=0$, the so-called {\it Schwarzian equations} \cite{SS12,TTX01}. To simplify the presentation, we will always assume $c_0=0$ in this section. It is known (cf. \cite{CGL11}) that the derivation of a superposition rule for this system can be reduced to obtaining certain three $t$-independent constants of the motion for the diagonal prolongation $\widetilde{X}^{3KS}$ of $X^{3KS}$ to $\mathcal{O}_2^2$. In \cite{CGL11} such
constants were worked out through the method of characteristics which consists in solving a series of systems of ODEs. Nevertheless, we can determine such constants more easily through Dirac--Lie systems.

The $t$-dependent vector field $\widetilde X^{3KS}$ is spanned by a linear combination of the diagonal prolongations of $Y_1$, $Y_2,$ and $Y_3$ to $\mathcal{O}_2^2$. From (\ref{VFKS1}), we have
\[
\begin{gathered}
\widetilde Y_1=\sum_{i=1}^2v_i\frac{\partial}{\partial a_i},\quad \widetilde Y_2=
\sum_{i=1}^2\left(v_i\frac{\partial}{\partial v_i}+2a_i\frac{\partial}{\partial a_i}\right),\\ \widetilde Y_3=\sum_{i=1}^2\left(v_i\frac{\partial}{\partial x_i}+a_i\frac{\partial}{\partial v_i}+\frac 32
\frac{a_i^2}{v_i}\frac{\partial}{\partial a_i}\right).
\end{gathered}
\]
From Proposition \ref{Prop} and functions (\ref{Fun3KS}), the vector fields $\widetilde Y_1,\widetilde Y_2,\widetilde Y_3$ are $L^{[2]}$-Hamiltonian with $L^{[2]}$-Hamiltonian functions
\[
\widetilde h_1=-\frac 2{v_1}-\frac 2{v_2},\qquad \widetilde {h}_2=-\frac{a_1}{v_1^2}-\frac{a_2}{v_2^2},\qquad \widetilde {h}_3=-\frac {a_1^2}{2v_1^3}-\frac {a_2^2}{2v_2^3}.
\]
Indeed, these are the diagonal prolongations to $\mathcal{O}_2^2$ of the $L$-Hamiltonian functions of $Y_1, Y_2$, and $Y_3$.
 Moreover, they span a real Lie algebra of functions isomorphic to that one spanned by $h_1,h_2,h_3$ and to $\mathfrak{sl}(2,\mathbb{R})$. We can then define a Lie algebra morphism $\phi:\mathfrak{sl}(2,\mathbb{R})\rightarrow C^\infty(N^2)$ of the form $\phi(e_1)=\widetilde h_1$, $\phi(e_2)=\widetilde h_2$ and $\phi(e_3)=\widetilde h_3$, where $\{ e_1,e_2,e_3\}$ is the standard basis of $\mathfrak{sl}(2,\mathbb{R})$. Using that $\mathfrak{sl}(2,\mathbb{R})$ is a simple Lie algebra, we can compute the Casimir invariant on $\mathfrak{sl}(2,\mathbb{R})^*$ as $e_1e_3-e_2^2$ (where $e_1,e_2,e_3$ can be considered as functions on $\mathfrak{sl}(2,\mathbb{R})$). Proposition 	\ref{Prop:Cas} ensures then that $\widetilde h_1\widetilde h_3-\widetilde h_2^2$ Poisson commutes with $\widetilde h_1,\widetilde h_2$ and $\widetilde h_3$. In this way, we obtain a constant of the motion for $\widetilde{X}^{3KS}$ given by
 \[
 I=\widetilde h_1\widetilde h_3-\widetilde h_2^2=\frac{(a_2v_1-a_1v_2)^2}{v_1^3v_2^3}.
 \]
Schwarzian equations admit a Lie symmetry $Z=x^2\partial/ \partial x$ \cite{OT09}. Its prolongation to ${T}^2\mathbb{R}$, i.e.,
\begin{equation}\label{LieSym}
Z_P=x^2\frac{\partial}{\partial x}+2vx\frac{\partial}{\partial v}+2(ax+v^2)\frac{\partial}{\partial a},
\end{equation}
is a Lie symmetry of $X^{3KS}$. From Proposition \ref{Prop}, we get that $\widetilde Z_P$ is a Lie symmetry of $\widetilde{X}^{3KS}$. So, we can construct constants of the motion for $\widetilde{X}^{3KS}$ by applying $\widetilde Z_P$ to any of its $t$-independent constants of the motion. In particular,
\[
F_2\equiv -\widetilde Z_P \log |I|=x_1+x_2+\frac{2v_1v_2(v_1-v_2)}{a_2v_1-a_1v_2}
\]
is constant on particular solutions $(x_{(1)}(t),v_{(1)}(t),a_{(1)}(t),x_{(2)}(t),v_{(2)}(t),a_{(2)}(t))$ of $\widetilde{X}^{3KS}$. If $(x_{(2)}(t),v_{(2)}(t),a_{(2)}(t))$ is a particular solution for $X^{3KS}$, its opposite is also. So, the function
\[
F_3\equiv x_1-x_2+\frac{2v_1v_2(v_1+v_2)}{a_2v_1-a_1v_2}
\]
is also constant along solutions of $\widetilde{X}^{3KS}$, i.e., it is a new constant of the motion. In consequence, we get three $t$-independent constants of the motion: $\Upsilon_1=I$ and
\[
\Upsilon_2=\frac{F_2+F_3}2=x_1+\frac{2v_1^2v_2}{a_2v_1-a_1v_2},\qquad \Upsilon_3=\frac{F_2-F_3}2=x_2-\frac{2v_1v_2^2}{a_2v_1-a_1v_2}.
\]
This gives rise to three $t$-independent constants of the motion for ${\widetilde X}^{3KS}$. Taking into account that $\partial (\Upsilon_1,\Upsilon_2,\Upsilon_3)/\partial (x_1,v_1,a_1)\neq 0$, the expressions $\Upsilon_1=\lambda_1$, $\Upsilon_2=\lambda_2$, and $\Upsilon_3=\lambda_3$ allow us to obtain the expressions of $x_1,v_1,a_1$ in terms of the remaining variables and $\lambda_1,\lambda_2,\lambda_3$. More specifically,
\[
x_1=\frac{4}{\lambda_1(\lambda_3-x_2)}+\lambda_2,\,\, v_1=\frac{4v_2}{\lambda_1(\lambda_3-x_2)^2},\,\, a_1=\frac{8v_2^2+4 a_2(\lambda_3-x_2)}{\lambda_1(\lambda_3-x_2)^3}.
\]
According to the theory of Lie systems \cite{CGM07}, the map $\Phi:(x_2,v_2,a_2;\lambda_1,\lambda_2,\lambda_3)\in \mathcal{O}_2^2\times \mathbb{R}^3\mapsto (x_1,v_1,a_1)\in \mathcal{O}_2^2$ enables us to write the general solution of (\ref{firstKS3}) into the form
\[
(x(t),v(t),a(t))=\Phi(x_2(t),v_2(t),a_2(t);\lambda_1,\lambda_2,\lambda_3).
\]
This is the known superposition rule for Schwarzian equations (in first-order form) derived in \cite{LS12} by solving a system of PDEs. Meanwhile, our present techniques enable us to obtain the same result without any integration. Note that $x(t)$, the general solution of Schwarzian equations, can be written as $x(t)=\tau\circ \Phi(x_2(t),\lambda_1,\lambda_2,\lambda_3)$, with $\tau$ the projection $\tau:(x_2,v_2,a_2)\in { T}^2\mathbb{R}\mapsto x_2\in \mathbb{R}$, from a unique particular solution of (\ref{KS3}), recovering a known feature of these equations \cite{OT09}.

\section{Bi--Dirac--Lie systems}

It can happen that a Lie system $X$ on a manifold $N$ possesses Vessiot--Guldberg Lie algebras of vector fields with respect to two different Dirac structures. This results in defining two Dirac--Lie systems. For instance, the system of coupled Riccati equations (\ref{CoupledRic}) admits two Dirac--Lie
structures \cite{BHLS12Cla}: the one previously given, $(\mathcal{O},L^\omega,X)$, where $\omega$ is given by (\ref{SymCoupledRic}), and a second one, $(\mathcal{O},L^{\bar \omega},X)$, with
\[
\bar \omega=\sum_{i< j=1}^4\frac{dx_i\wedge x_j}{(x_i-x_j)^2}.
\]
In the following sections, several similar examples will be shown. This suggests us to define the following notion.

\begin{definition} A {\it bi--Dirac--Lie system} is a four-tuple $(N,L_1,L_2,X)$, where $(N,L_1)$ and $(N,L_2)$ are two different Dirac manifolds and $X$ is a Lie system on $N$ such that $V^X\subset {\rm Ham}(N,L_1)\cap {\rm Ham}(N,L_2)$.
\end{definition}

Given a bi--Dirac--Lie system $(N,L_1,L_2,X)$, we can apply indistinctly the methods of the previous sections to $(N,L_1,X)$ and $(N,L_2,X)$ to obtain superposition rules, constants of the motion, and other properties of $X$. This motivates studies on constructions of this type of structures.

Let us depict a new procedure to build up bi--Dirac--Lie systems from $(N,L^\omega,X)$ whose $X$ possesses a $t$-independent Lie symmetry $Z$.  This method is a generalization to nonautonomous systems, associated to presymplectic manifolds, of the method  for autonomous Hamiltonian systems devised in \cite{CMR02}.

Consider a Dirac--Lie system $(N,L^{\omega},X)$, where $\omega$ is a presymplectic structure, and a $t$-independent Lie symmetry $Z$  of $X$, i.e. $[Z, X_t] = 0$ for all $t \in
\mathbb{R}$. Under the  above assumptions, $\omega_Z=\mathcal{L}_Z\omega$ satisfies $d\omega_Z=d \mathcal{L}_Z \omega =
\mathcal{L}_Z d
\omega = 0$, so $(N,\omega_Z)$ is a presymplectic manifold. The vector fields of $V^X$ are still
Hamiltonian with respect to $\left( N, \omega_Z \right)$. Indeed, we can see that Theorem \ref{MT} ensures that $X$
admits a Dirac--Lie Hamiltonian $(N,L^\omega,h)$ and
\[
\begin{gathered}
\left[Z,X_t\right]=0\Longrightarrow \iota_{X_t}\circ\mathcal{L}_Z=\mathcal{L}_Z\circ\iota_{X_t}\Longrightarrow \\\iota_{X_t} \omega_Z = \iota_{X_t}  \mathcal{L}_Z
  \omega = \mathcal{L}_Z \iota_{X_t} \omega = -\mathcal{L}_Z dh_t =- d
(Zh_t),\qquad \forall t\in\mathbb{R}.
\end{gathered}
\]
So, the vector fields $\{X_t\}_{t\in\mathbb{R}}$ are
$L^{\omega_Z}$-Hamiltonian. Since the successive Lie brackets and linear combinations of
$L$-Hamiltonian vector fields and elements of $V^X$ are $L$-Hamiltonian vector fields, the whole Lie algebra  $V^X$ is Hamiltonian with respect to $\omega_Z$. Consequently, $(N,L^{\omega_Z},X)$ is also a Dirac--Lie system. In view of (\ref{PreSymSeqII}) and since $\dim {\rm Cas}(N,L^{\omega_Z})=1$, we see that $(B^{\omega_Z})^{-1}(V^X)$ is a finite-dimensional Lie algebra. As the curve $\bar h:t\in\mathbb{R}\mapsto Zh_t\in {\rm Adm}(N,L^{\omega_Z})$ is included within $(B^{\omega_Z})^{-1}(V^X)$,  the Lie algebra ${\rm Lie}(\{Zh_t\}_{t\in\mathbb{R}},\{\cdot,\cdot\}_{L^{\omega_Z}})$, where $\{\cdot,\cdot\}_{L^{\omega_Z}}$ is the Poisson bracket induced by $L^{\omega_Z}$, becomes finite-dimensional. In other words, $(N,L^{\omega_Z},Zh_t)$ is also a Lie--Hamiltonian for $X$. Moreover,
\[\{\bar h_t,\bar h_{t'}\}_{L^{\omega_Z}}=X_t(\bar h_{t'})=X_t(Zh_{t'})=Z(X_th_{t'})=Z\{ h_t,h_{t'}\}_{L^{\omega}},\qquad \forall t\in\mathbb{R}\,.
\]Summarizing, we have the following proposition.

\begin{proposition} If $(N,L^\omega,X)$ is a Dirac--Lie system for which $X$ admits a $t$-independent Lie symmetry $Z$,
then $(N,L^\omega,L^{\mathcal{L}_Z\omega},X)$ is a bi--Dirac--Lie system. If $(N,L^\omega,h)$ is a Dirac--Lie Hamiltonian for $X$, then $(N,L^{\mathcal{L}_Z\omega},Zh)$ is a Dirac--Lie Hamiltonian for $X$ and there exists an exact sequence of Lie algebras
$$
(\{h_t\}_{t\in\mathbb{R}},\{\cdot,\cdot\}_{L^{\omega}})\stackrel{Z}{\longrightarrow} (\{Zh_t\}_{t\in\mathbb{R}},\{\cdot,\cdot\}_{L^{\omega_Z}})\rightarrow 0\,.
$$
\end{proposition}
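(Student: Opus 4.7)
The plan is to assemble the observations already made in the paragraph preceding the statement into a compact proof. First I would verify that $\omega_Z=\mathcal{L}_Z\omega$ is a presymplectic form: this is immediate from $d\mathcal{L}_Z\omega=\mathcal{L}_Z d\omega=0$, so the associated Dirac structure $L^{\omega_Z}$ is well defined.

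Next, I would establish $V^X\subset{\rm Ham}(N,L^{\omega_Z})$. Using Theorem \ref{MT} to extract a Dirac--Lie Hamiltonian $(N,L^\omega,h)$ with $\iota_{X_t}\omega=-dh_t$, and the commutation $\iota_{X_t}\mathcal{L}_Z=\mathcal{L}_Z\iota_{X_t}$ that comes from $[Z,X_t]=0$, one computes $\iota_{X_t}\omega_Z=\mathcal{L}_Z\iota_{X_t}\omega=-d(Zh_t)$. Hence each $X_t$ is $L^{\omega_Z}$-Hamiltonian with Hamiltonian function $Zh_t$, and since ${\rm Ham}(N,L^{\omega_Z})$ is closed under brackets and scalar combinations, the whole $V^X$ consists of $L^{\omega_Z}$-Hamiltonian vector fields. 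This yields the bi--Dirac--Lie structure.

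The step most worth flagging is showing that ${\rm Lie}(\{Zh_t\}_{t\in\mathbb{R}},\{\cdot,\cdot\}_{L^{\omega_Z}})$ is finite-dimensional, so that $(N,L^{\omega_Z},Zh)$ is a bona fide Dirac--Lie Hamiltonian and not merely a family of admissible functions. Here I would invoke Proposition \ref{CasCon} for $L^{\omega_Z}$: the map $B^{\omega_Z}$ sends $Zh_t$ to the class of $X_t$ in $\widehat{{\rm Ham}}(N,L^{\omega_Z})$, so the subalgebra generated by $\{Zh_t\}$ sits inside the preimage $(B^{\omega_Z})^{-1}(\pi(V^X))$. Because $\omega_Z$ is presymplectic and $N$ is connected, ${\rm Cas}(N,L^{\omega_Z})=\mathbb{R}$, and this preimage is a central extension of the finite-dimensional $\pi(V^X)$ by $\mathbb{R}$, hence finite-dimensional.

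Finally, the exact sequence reduces to the morphism identity $\{Zh_t,Zh_{t'}\}_{L^{\omega_Z}}=X_t(Zh_{t'})=Z(X_th_{t'})=Z\{h_t,h_{t'}\}_{L^\omega}$, which once more uses $[Z,X_t]=0$; surjectivity onto the target generating set is tautological by construction. I do not expect genuine obstacles beyond careful bookkeeping of which Dirac structure each Poisson bracket refers to; the finite-dimensionality step is the only place a non-trivial input is needed, namely the smallness of ${\rm Cas}(N,L^{\omega_Z})$ for a presymplectic form on a connected manifold.
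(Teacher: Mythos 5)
Your proposal is correct and follows essentially the same route as the paper: the closedness of $\omega_Z$, the computation $\iota_{X_t}\omega_Z=\mathcal{L}_Z\iota_{X_t}\omega=-d(Zh_t)$ from $[Z,X_t]=0$, the finite-dimensionality of ${\rm Lie}(\{Zh_t\}_{t\in\mathbb{R}},\{\cdot,\cdot\}_{L^{\omega_Z}})$ via the exact sequence (\ref{PreSymSeqII}) together with $\dim{\rm Cas}(N,L^{\omega_Z})=1$, and the morphism identity $\{Zh_t,Zh_{t'}\}_{L^{\omega_Z}}=Z\{h_t,h_{t'}\}_{L^\omega}$. Your writing $(B^{\omega_Z})^{-1}(\pi(V^X))$ is in fact a slightly more careful rendering of what the paper denotes $(B^{\omega_Z})^{-1}(V^X)$, but this is a matter of notation, not of substance.
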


Note that, given a Lie--Hamilton system $(N,L^\omega,X)$, the triple $(N,L^{\omega_Z},X)$ need not be a Lie--Hamilton system: $\omega_Z$ may fail to be a symplectic two-form (cf. \cite{CMR02}). This causes that the theory of Lie--Hamilton systems cannot be applied to study $(N,L^{\omega_Z},X)$, while the  methods of our work do.
\subsection{Example}
Let us illustrate the above theory with an example. Recall that Schwarzian equations  admit a Lie symmetry $Z=x^2\partial/\partial x$. As a consequence, system (\ref{firstKS3}), with $c_0=0$, possesses a $t$-independent Lie symmetry $Z_P$ given by (\ref{LieSym}) and
\[
\omega_{Z_P}\equiv \mathcal{L}_{Z_P}\omega_{3KS}=-\frac{2}{v^3}(xdv\wedge da+vda\wedge dx+adx\wedge dv).
\]
Moreover,
\[
\begin{gathered}
\iota_{Y_1}\omega_{Z_P}=-d({{Z_P}}h_1)=-d\left(\frac{4x}{v}\right),\quad
\iota_{Y_2}\omega_{Z_P}=-d({{Z_P}}h_2)=d\left(2-\frac{2ax}{v^2}\right),\\
\iota_{Y_3}\omega_{Z_P}=-d({{Z_P}}h_3)=d\left(\frac {2a}v-\frac{a^2x}{v^3}\right).
\end{gathered}
\]
So, $Y_1,Y_2$, and $Y_3$ are Hamiltonian vector fields with respect to $\omega_{Z_P}$. Moreover, since
\begin{eqnarray*}\{ {Z_P}h_1, {Z_P}h_2\}_{L^{\omega_{Z_P}}}&=& Z_Ph_1\,,\\
\{ {Z_P}h_2, {Z_P}h_3\}_{L^{\omega_{Z_P}}}&=& {Z_P}h_3\,,\\
\{ {Z_P}h_1,{Z_P}h_3\}_{L^{\omega_{Z_P}}}&=&2{Z_P}h_2\,,
\end{eqnarray*}
we see that $Z_P h_1$, $Z_Ph_2,$ and $Z_Ph_3$ span a new finite-dimensional real Lie algebra. So,
if $(\mathcal{O}_2,L^\omega,h)$ is a Lie--Hamiltonian for $X$, then $(\mathcal{O}_2,L^{\omega_{Z_P}},Z_Ph)$ is a Dirac--Lie Hamiltonian for $X$.

Let us devise a more general  method to construct bi--Dirac--Lie systems. Given a Dirac manifold $(N,L)$ and a closed two-form $\omega$ on $N$, the sections on $TN\oplus_N T^*N$ of the form
\[
X+\alpha-\iota_X\omega,
\]
where $X+\alpha\in \Gamma(L)$, span a new Dirac structure $(N,\,^\omega\! L)$ \cite{BR03}. When two Dirac structures are connected by a transformation of this type, it is said that they are {\it gauge equivalent}. Using this, we can prove the following propositions.
\begin{proposition} Let $Z$ be a vector field on $N$. Then, the Dirac structures $L^\omega$ and $L^{\omega_Z}$, with $\omega_Z=\mathcal{L}_Z\omega$, are gauge equivalent.
\end{proposition}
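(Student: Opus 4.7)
\medskip
\noindent\emph{Proof plan.} The plan is to unpack the definition of the gauge transformation on Dirac structures and to exhibit an explicit closed two-form $B$ relating $L^\omega$ and $L^{\omega_Z}$. Recall that sections of $L^\omega$ are precisely the sections of $TN\oplus_N T^*N$ of the form $X - \iota_X\omega$ with $X\in\Gamma(TN)$, and that the gauge transform by a closed two-form $B$ produces a Dirac structure $^{B}L^\omega$ whose sections are
\[
X - \iota_X\omega - \iota_X B \;=\; X - \iota_X(\omega + B),\qquad X\in\Gamma(TN).
\]
Hence, as a first observation, $^{B}L^\omega = L^{\omega + B}$ for every closed two-form $B$.

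The plan is then to take $B := \omega_Z - \omega = \mathcal{L}_Z\omega - \omega$ and to verify that this two-form is closed. Since $L^\omega$ is a Dirac structure, $\omega$ is closed, so by Cartan's magic formula
\[
\mathcal{L}_Z\omega \;=\; d\iota_Z\omega + \iota_Z d\omega \;=\; d\iota_Z\omega,
\]
which is exact, in particular closed. Therefore $B = \mathcal{L}_Z\omega - \omega$ is a difference of two closed two-forms and is itself closed, so the gauge transformation by $B$ is well-defined.

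Combining the two observations, the gauge transform of $L^\omega$ by $B$ satisfies
\[
{}^{B}L^\omega \;=\; L^{\omega + B} \;=\; L^{\omega + (\mathcal{L}_Z\omega - \omega)} \;=\; L^{\omega_Z},
\]
which is exactly the claim. The whole argument is essentially a definition-chase; the only point that requires any verification is the closedness of the two-form implementing the gauge transformation, and this follows immediately from $d\omega = 0$. So there is no substantive obstacle beyond correctly matching sign conventions between the graph description of $L^\omega$ and the definition of $\,^{B}L$ recalled just before the proposition.
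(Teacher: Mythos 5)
Your proposal is correct and follows essentially the same route as the paper: both identify the gauge transform of $L^\omega$ by the closed two-form $B=\omega_Z-\omega$ with $L^{\omega_Z}$, the only cosmetic difference being that you establish $d\omega_Z=0$ via Cartan's formula (showing $\omega_Z$ is in fact exact) while the paper uses the commutation of $d$ with $\mathcal{L}_Z$. No gaps.
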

\begin{proof} The Dirac structure $L^\omega$ is spanned by sections of the form $X-\iota_X\omega$, with $X\in \Gamma(N)$, and the Dirac structure $L^{\omega_Z}$ is spanned by sections of the form $X-\iota_X\omega_Z$. Recall that $d\omega=d\omega_Z=0$. So, $L^{\omega_Z}$ is of the form
\[
X-\iota_X\omega-\iota_X(\omega_Z-\omega),\qquad X-\iota_X\omega\in \Gamma(L^\omega).
\]
As $d(\omega_Z-\omega)=0$, then $L^\omega$ and $L^{\omega_Z}$ are connected by a gauge transformation.
\end{proof}
This result gives us a hint to construct a more general method to create bi--Dirac--Lie systems.

\begin{proposition} Let $(N,L,X)$ be a Dirac--Lie system and $\omega$ be a closed two-form such that $\widehat \omega(V^X) \subset \mathcal{B}^1(N)$, where $\mathcal{B}^1(N)$ is the space of exact one-forms on $N$. Then,
$(N,L,^\omega\!\! L,X)$ is a bi--Dirac--Lie system.
\end{proposition}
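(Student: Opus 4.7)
The plan is to show that the same Vessiot--Guldberg Lie algebra $V^X$ that witnesses $(N,L,X)$ as a Dirac--Lie system also consists of ${}^\omega L$-Hamiltonian vector fields. Since the preceding proposition already ensures that ${}^\omega L$ is a genuine Dirac structure (gauge transformations of Dirac structures by closed two-forms produce Dirac structures), establishing $V^X \subset {\rm Ham}(N,L)\cap {\rm Ham}(N,{}^\omega L)$ will directly yield the bi--Dirac--Lie system $(N,L,{}^\omega L,X)$.

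First I would fix an arbitrary $Y \in V^X$. As $(N,L,X)$ is Dirac--Lie, there exists an $L$-Hamiltonian function $f_Y \in {\rm Adm}(N,L)$ with $Y + df_Y \in \Gamma(L)$. Unpacking the definition of the gauge transform of $L$ by $\omega$, the section $Y + df_Y - \iota_Y\omega$ belongs to $\Gamma({}^\omega L)$ by construction. Next I would invoke the hypothesis $\widehat{\omega}(V^X)\subset \mathcal{B}^1(N)$: the one-form $\iota_Y\omega = \widehat{\omega}(Y)$ is globally exact, so there exists $\mu_Y\in C^\infty(N)$ with $\iota_Y\omega = d\mu_Y$. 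Substituting gives
\[
Y + d(f_Y - \mu_Y) \in \Gamma({}^\omega L),
\]
so $Y$ is ${}^\omega L$-Hamiltonian with ${}^\omega L$-Hamiltonian function $f_Y - \mu_Y \in {\rm Adm}(N,{}^\omega L)$.

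Applying this argument to every $Y\in V^X$ yields $V^X \subset {\rm Ham}(N,{}^\omega L)$; combined with $V^X \subset {\rm Ham}(N,L)$, this is exactly the condition making $(N,L,{}^\omega L,X)$ a bi--Dirac--Lie system. The only delicate point is the globality of the primitive $\mu_Y$; this is precisely what the assumption $\widehat{\omega}(V^X)\subset \mathcal{B}^1(N)$ secures, since $\omega$ being merely closed would only guarantee local primitives and thus only a local version of the statement. Once this globality is in hand, the remainder of the proof reduces to unwinding definitions.
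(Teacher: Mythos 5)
Your proof is correct and follows essentially the same route as the paper: take $Y\in V^X$ with $Y+df\in\Gamma(L)$, note $Y+df-\iota_Y\omega\in\Gamma({}^\omega L)$ by definition of the gauge transform, and use the hypothesis $\widehat\omega(V^X)\subset\mathcal{B}^1(N)$ to replace $-\iota_Y\omega$ by the differential of a globally defined function, so that $Y$ acquires a global ${}^\omega L$-Hamiltonian function. Your added remark on why global (rather than merely local) exactness is the real content of the hypothesis is a fair gloss on the paper's argument, which is otherwise identical up to a sign convention for the primitive.
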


\begin{proof} If $Y\in V^X$, then it is $L$-Hamiltonian and $Y+df\in \Gamma(L)$ for a certain function $f\in C^\infty(N)$. By definition of $\,^\omega\! L$, we have that $Y+df-\iota_Y\omega \in \Gamma(\,^\omega\! L)$. Since $\widehat\omega(V^X)\subset \mathcal{B}^1(N)$ by assumption, then $\widehat \omega(Y)=-dg$ for a certain $g\in C^\infty(N)$.  So,
$Y+d(f+g)\in \Gamma(\,^\omega\! L)$ and $Y$ is $\,^\omega\! L$-Hamiltonian. Hence, $V^X$ is a finite-dimensional real Lie algebra of $\,^\omega\! L$-Hamiltonian vector fields, $(N,\,^\omega\! L,X)$ is a Dirac--Lie system and $(N,L,\,^\omega\! L,X)$ is a bi--Dirac--Lie system.
\end{proof}

\begin{note} Note that two gauge equivalent Dirac structures may have different spaces of admissible functions. This causes that they can be used to obtain different admissible constants of the motion and other properties of $X$. In brief, gauge equivalent Dirac structures are not equivalent from the point of view of their associated Dirac--Lie systems.
\end{note}

\section{Dirac--Lie systems and mixed superposition rules}

In this section we will use the developed methods of Dirac--Lie systems to constructing mixed superposition rules.

Recall that a {\it mixed superposition rule} for a system $X$ on $\mathbb{R}^n$, in terms of some systems $X_{(1)},\ldots,X_{(m)}$, is a superposition function
$\Phi:\mathbb{R}^{n_1}\times\ldots\mathbb{R}^{n_m}\times\mathbb{R}^{n}\rightarrow\mathbb{R}^n$ allowing us to express the general solution, $x(t)$, of $X$ in the form
\[
x(t)=\Phi(x_{(1)}(t),\ldots,x_{(m)}(t),\lambda_1,\ldots,\lambda_n),
\]
where $\lambda_1,\ldots,\lambda_n$ are real constants and $x_{(1)}(t),\ldots,x_{(m)}(t)$ are particular solutions of the systems $X_{(1)},\ldots,X_{(m)}$, respectively.
The main advantage of the use of mixed superposition rules is that they are much more versatile than standard superposition rules \cite{GL12}.

In \cite{GL12} it was proved that a mixed superposition rule for a Lie system $X$ on $\mathbb{R}^n$ can be obtained by the following procedure.
We have to determine a series of systems
\[
X_{(a)}=\sum_{i=1}^{n_a}X^i_{(a)}(t,x_{(a)})\frac{\partial}{\partial x^i_{(a)}}\in \Gamma({T}\mathbb{R}^{n_a}),\qquad a=1,\ldots,m,
\]
such that $X_E=X_{(1)}\times\ldots\times X_{(m)}\times X$, i.e.,
the time-dependent vector field
\[
X_E(t,x_{(1)},\ldots,x_{(m)})=\sum_{i=1}^nX^i(t,x)\frac{\partial}{\partial x^i}+\sum_{a=1}^m \sum_{i=1}^{n_a}X^i_{(a)}(t,x_{(a)})\frac{\partial}{\partial x^i_{(a)}},
\]
gives rise to the distribution $\mathcal{D}^{X_E}$ for which the projection
\[
{\rm pr}_*:\mathcal{D}^{X_E}\rightarrow {T}(\mathbb{R}^{n_1}\times\ldots\times \mathbb{R}^{n_m})\,,\quad\text{with}\quad {\rm pr}(x_{(1)},\ldots,x_{(m)},x)= (x_{(1)},\ldots,x_{(m)})\,,
\]
is an injective map. In such a case, a family $F_1,\ldots,F_n:\mathbb{R}^{n_1}\times\ldots\times\mathbb{R}^{n_m}\times\mathbb{R}^n\rightarrow\mathbb{R}$ of  $t$-independent constants of the motion for $X_E$
satisfying
\[
\frac{\partial(F_1,\ldots,F_n)}{\partial(x_1,\ldots,x_n)}\neq 0,
\]
where $(x_1,\ldots,x_n)$ is the coordinate system on $\mathbb{R}^n$, enables us to construct a mixed superposition rule.
Indeed, the equations $F_i=\lambda_i$, where $\lambda_1,\ldots,\lambda_n$ are real constants, allow us to obtain the variables $x_1,\ldots,x_n$ in terms of the remaining variables $x_{(1)},\ldots,x_{(m)}$ and $\lambda_1,\ldots,\lambda_n$, giving rise to a map
\[
(x_1,\ldots,x_n)=\Phi(x_{(1)},\ldots,x_{(m)};\lambda_1,\ldots,\lambda_n),
\]
which becomes, along with $X_{(1)},\ldots,X_{(m)}$, the searched mixed superposition rule for $X$.
\subsection{Example}
We now aim to obtain a mixed superposition rule to study the Schwarzian equation
\begin{equation}\label{Schwarz}
\{x,t\}=\frac{d^3x}{dt^3}\left(\frac{dx}{dt}\right)^{-1}-\frac 32\left(\frac{d^2x}{dt^2}\right)^2\left(\frac{dx}{dt}\right)^{-2}=2b_1(t),
\end{equation}
where $\{x,t\}$ is the referred to as {\it Schwarzian derivative} of the function $x(t)$ in terms of the variable $t$ and $b_1(t)$ is an arbitrary nonconstant $t$-dependent function. More specifically, we obtain a mixed superposition rule for the Lie system (\ref{firstKS3}) with $c_0=0$, which is obtained from Schwarzian equations by adding two variables $v=dx/dt$ and $a=dv/dt$. Then, we use the mixed
superposition rule to analyze (\ref{Schwarz}).

In order to determine the searched mixed superposition rule, consider for example the direct product of (\ref{Schwarz}) along with the Lie systems
\begin{equation}\label{linear}
\left\{\begin{aligned}
\frac{dx_{(i)}}{dt}&=v_{(i)},\\
\frac{dv_{(i)}}{dt}&=-b_1(t)x_{(i)},\\
\end{aligned}\right.\qquad i=1,2.
\end{equation}
The above systems can be written in the form $(X_t)_{(i)}=X^3_{(i)}+b_1(t)X^1_{(i)}$, with $i=1,2$ and
\[
X^1_{(i)}=-x_{(i)}\frac{\partial}{\partial v_{(i)}},\qquad X^2_{(i)}=\frac 12\left(v_{(i)}\frac{\partial}{\partial v_{(i)}}-x_{(i)}\frac{\partial}{\partial x_{(i)}}\right), \qquad X^3_{(i)}=v_{(i)}\frac{\partial}{\partial x_{(i)}}.
\]
Since $X^1_{(i)}$, $X^2_{(i)},$ and $X^3_{(i)}$, with $i=1,2$, close the same commutation relations as the vector fields $Y_1,Y_2,$ and $Y_3$ given by (\ref{VFKS1}), we obtain that the vector fields
\[
M^1\equiv X^1_{(1)}\times X^1_{(2)}\times Y_1,\qquad M^2\equiv X^2_{(1)}\times X^2_{(2)}\times Y_2,\qquad M^3\equiv X^3_{(1)}\times X^3_{(2)}\times Y_3,
\]
satisfy the same commutation relations as $Y_1,Y_2,$ and $Y_3$. In consequence, $ X^E_t= M^3+b_1(t)M^1$,
span a generalized distribution $\mathcal{D}^{X^E}$ of rank three at a generic point of ${T}\mathbb{R}^2\times \mathcal{O}_2$. As this manifold is seven-dimensional and the differential of the $t$-independent first-integrals of $X$ must vanish on vector fields taking values on the integrable distribution $\mathcal{D}^{X^E}$,  we obtain that $X^E$ admits four (locally defined) $t$-independent functionally independent first integrals. Moreover, since ${\rm pr}_*:\mathcal{D}^{X^E}\rightarrow {T}\left({T}\mathbb{R}^2\right)$, with ${\rm pr}:(x_{(1)},v_{(1)},x_{(2)},v_{(2)},x,v,a)\in {T}\mathbb{R}^2\times\mathcal{O}_2\mapsto (x_{(1)},v_{(1)},x_{(2)},v_{(2)})\in {T}\mathbb{R}^2$, is injective at each point of an open dense subset of ${T}\mathbb{R}^2\times \mathcal{O}_2$, we can ensure that the system ${X^E}$ possesses a mixed superposition rule (cf. \cite{GL12}).

Standard techniques to obtain a mixed superposition rule for $X^E$ demand the integration of the vector fields $M^1$, $M^2,$ and $M^3$, e.g., by means of the
method of characteristics \cite{GL12}. We here propose a simpler method based on the fact that $X$, $X_{(1)},$ and $X_{(2)}$ are Dirac--Lie systems.
More specifically, $X$ is  a Dirac--Lie system with respect to
$
\omega=v^{-3}dv\wedge da
$
and $X_{(1)},$ $X_{(2)}$ with respect to
$\omega_{(1)}=dx_{(1)}\wedge dv_{(1)}$ and $\omega_{(2)}=dx_{(2)}\wedge dv_{(2)}$, respectively. Using this, we can define on ${T}\mathbb{R}\times{T}\mathbb{R}\times\mathcal{O}_2$
the closed two-form
\[
\omega_1=v^{-3}dv\wedge da+dx_{(1)}\wedge dv_{(1)}.
\]
Now, since $Y_1$, $Y_2,$ and $Y_3$ have $L^\omega$-Hamiltonian functions (\ref{Fun3KS}) and $X^1_{(i)},X^2_{(i)},X^3_{(i)}$ have
$L^{\omega_{(i)}}$-Hamiltonian functions
\[
h_1=-\frac{x_{(i)}^2}{2},\qquad h_2=\frac {1}{2}x_{(i)}v_{(i)},\qquad h_3=-\frac{v_{{(i)}}^2}{2},\qquad i=1,2\,,
\]
the vector fields $M^1$, $M^2,$ and $M^3$ admit the $L^{\omega_1}$-Hamiltonian functions
\[
h_1=-\frac 2v-\frac 12 x_{(1)}^2,\qquad h_2=-\frac{a}{v^2}+\frac 12 x_{(1)}v_{(1)},\qquad h_3=-\frac {a^2}{2v^3}-\frac{v_{(1)}^2}{2},
\]%
which close the same commutation relations (relative to the Poisson bracket induced by $\omega_1$) as $M^1$, $M^2,$ and $M^3$ with respect to the Lie bracket of vector fields. Thus, $h_1,h_2$ and $h_3$ span a finite-dimensional real Lie algebra of functions isomorphic to $\mathfrak{sl}(2,\mathbb{R})$. In consequence, the function
\[
F_1=h_1h_3-h_2^2=\frac{(2vv_{(1)}+ax_{(1)})^2}{4v^3}
\]
Poisson commutes with $h_1$, $h_2$, and $h_3$, so $F_1$ is a constant of the motion for $X^E$.
Proceeding in a similar way with the closed two-form
\[
\omega_2=v^{-3}dv\wedge da+dx_{(2)}\wedge dv_{(2)},
\]
we obtain a new constant of the motion
\[
F_2=\frac{(2vv_{(2)}+ax_{(2)})^2}{4v^3}
\]
for $X^E$. In order to obtain a mixed superposition rule, we need a third common $t$-independent constant of the motion for $M^1,M^2,$ and $M^3$. This can be done
by recalling that Schwarzian equations admit a Lie symmetry $Z_P$ given by (\ref{LieSym}) and the systems (\ref{linear}) have the Lie symmetry
\[
Z_L=\frac 12 \left(x_{(1)}\frac{\partial}{\partial x_{(1)}}+v_{(1)}\frac{\partial}{\partial v_{(1)}}+x_{(2)}\frac{\partial}{\partial x_{(2)}}+v_{(2)}\frac{\partial}{\partial v_{(2)}}\right).
\]
Hence, $Z_P\times Z_L$ is a Lie symmetry for $X^E$.
Using the method employed in the last section, we have that $({T}\mathbb{R}\times{T}\mathbb{R}\times \mathcal{O}_2,L^{\omega_{Z_P\times Z_L}},X_{(1)}\times X_{(2)}\times X)$ is a Dirac--Lie system with
\[
\omega_{Z_P\times Z_L}\equiv \mathcal{L}_{Z_P\times Z_L}\omega_{1}=\frac{2}{v^3}(xda\wedge dv +vdx\wedge da+adv\wedge dx)+dx_{(1)}\wedge dv_{(1)}.
\]
As Hamiltonian functions for $Z_1$, $Z_2,$ and $Z_3$ can be taken
\[
h_1=\frac {4x}{v}-\frac 12 x_{(1)}^2,\qquad h_2=-2+\frac{2ax}{v^2}+\frac 12 x_{(1)}v_{(1)},\qquad h_3=-\frac{2a}{v}+\frac {a^2x}{v^3}-\frac{v_{(1)}^2}{2}.
\]
These functions span a Lie algebra isomorphic to $\mathfrak{sl}(2,\mathbb{R})$. In consequence, we obtain through the corresponding Casimir the constant of the motion
\[
I=-\frac{(2vv_{(1)}\!+\!a x_{(1)})(2v_{(1)}vx-2v^2x_{(1)}+ax x_{(1)})}{2v^3}\!=\!2F_1\left(\!-x\!+\!\frac{2x_{(1)}v^2}{2vv_{(1)}+ax_{(1)}}\right).
\]
As $F_1$ is a constant of the motion, we obtain that
\[
F_3=-x+\frac{2x_{(1)}v^2}{2vv_{(1)}+ax_{(1)}}
\]
is a much simpler constant of the motion which will simplify further calculations. Note that
\[
\frac{\partial(F_1,F_2,F_3)}{\partial(x,v,a)}\neq 0.
\]
Hence, we can make use of $F_1,F_2,$ and $F_3$  to obtain a mixed superposition rule for $X$ in terms of $X_{(1)}$ and $X_{(2)}$. More specifically, by imposing $F_3=\lambda_3,$ with $\lambda_3$ being a certain real constant, we obtain
\[
x=-\lambda_3+\frac{2v^2x_{(1)}}{2vv_{(1)}+ax_{(1)}}.
\]
Now, imposing $F_1=\lambda_1$ and $F_2=\lambda_2$, we see that
\begin{equation}\label{eq1}
2vv_{(1)}+ax_{(1)}=\pm2v\sqrt{\lambda_1 v},\qquad  2vv_{(2)}+ax_{(2)}=\pm2v\sqrt{\lambda_2 v}.
\end{equation}
For simplicity, we restrict ourselves to the case when the signs are positive. Multiplying the first equality in (\ref{eq1}) by $x_{(2)}$, the second by $x_{(1)}$, subtracting and using that $v\neq 0$, we obtain
\[
\begin{gathered}
v_{(1)}x_{(2)}-v_{(2)}x_{(1)}= x_{(2)}\sqrt{\lambda_1v}-x_{(1)}\sqrt{\lambda_2v}.
\end{gathered}
\]
Multiplying the first equality in (\ref{eq1})  by $v_{(2)}$, the second by $v_{(1)}$, and subtracting, we get
\[
\begin{gathered}
a(v_{(2)}x_{(1)}-v_{(1)}x_{(2)})= 2 v (v_{(2)}\sqrt{\lambda_1v}-v_{(1)}\sqrt{\lambda_2v}).
\end{gathered}
\]

If we assume that $\mathcal{W}=v_{(2)}x_{(1)}-v_{(1)}x_{(2)}\neq 0$, i.e.,  $(x_{(1)},v_{(1)})$ and $(x_{(2)},v_{(2)})$ are not proportional, then
\[
v={\rm sg}(\lambda_1)\frac{(v_{(2)}x_{(1)}-v_{(1)}x_{(2)})^2}{[x_{(2)}\sqrt{|\lambda_1|}-x_{(1)}\sqrt{|\lambda_2|}]^2}.
\]
From this,
\begin{equation}\label{eq3}
\begin{gathered}
x=-\lambda_3+ {\rm sg}(\lambda_1)\left|\frac{v_{(2)}x_{(1)}-x_{(2)}v_{(1)}}{x_{(2)}\sqrt{|\lambda_1|}-
x_{(1)}\sqrt{|\lambda_2|}}\right|\frac{x_{(1)}}{\sqrt{|\lambda_1|}}, \\
\quad a=-2{\rm sg}(\lambda_1)\frac{(v_{(2)}x_{(1)}-x_{(2)}v_{(1)})^2}{(x_{(2)}\sqrt{|\lambda_1|}-
x_{(1)}\sqrt{|\lambda_2|})^3}({v_{(2)}\sqrt{|\lambda_1|}-v_{(1)}\sqrt{|\lambda_2|}}).
\end{gathered}
\end{equation}
Previous expressions give rise to a mixed superposition rule for system (\ref{firstKS3}) with $c_0=0$ in terms of two linearly independent particular solutions of the systems $X_{(1)}$ and $X_{(2)}$. More specifically, the mapping $\Phi:(x_{(1)},v_{(1)},x_{(2)},v_{(2)};\lambda_1,\lambda_2,\lambda_3)\in{T}\mathbb{R}^2\times \mathbb{R}^3\mapsto (x,v,a)\in \mathcal{O}_2$ allows us to bring the general solution of $X$ into the form
\[
(x(t),v(t),a(t))=\Phi(x_{(1)}(t),v_{(1)}(t),x_{(2)}(t),v_{(2)}(t);\lambda_1,\lambda_2,\lambda_3).
\]
Moreover, we can further simplify the form of $\Phi$. Observe that $x_{(1)}(t)v_{(2)}(t)-x_{(2)}(t)v_{(1)}(t)$, where $(x_{(i)}(t),v_{(i)}(t))$, $i=1,2$, are particular solutions of $X_{(1)}$ and $X_{(2)}$, is the
Wronskian associated to two particular solutions $x_{(1)}(t),x_{(2)}(t)$ of $d^2x/dt^2=-b_1(t)x$.

It is interesting to note that the map $\tau^{(2)}\circ\Phi$, where $\tau^{(2)}:(x,v,a)\in {T}^2\mathbb{R}\mapsto x\in\mathbb{R}$ is the projection of the
second tangent bundle ${T}^2\mathbb{R}$ onto $\mathbb{R}$, describes the general solution of Schwarzian equations in terms of particular solutions of other systems.
We could say that this is an example of a mixed superposition rule for higher-order systems of differential equations, which could be used to generalize the notion of superposition rules for higher-order systems of differential equations proposed in \cite{CGL11}.

\section{Dirac--Lie systems and Schwarzian--KdV equations}
Let us give some final relevant applications of our methods. In particular, we devise a procedure to construct traveling wave solutions for some relevant nonlinear PDEs by means of Dirac--Lie systems. For simplicity, we hereafter denote the partial derivatives of a function $f:(x_1,\ldots,x_n)\in \mathbb{R}^n\mapsto f(x_1,\ldots,x_n)\in\mathbb{R}$ in the form $\partial_{x_i}f$.

Consider the so-called {\it Schwarzian Korteweg de Vries equation} (SKdV equation)\cite{I10}
\begin{equation}\label{SKdV}
\{\Phi,x\}\partial_x\Phi=\partial_t\Phi,
\end{equation}
where $\Phi:(t,x)\in\mathbb{R}^2\rightarrow \Phi(t,x)\in\mathbb{R}$ and
\[
\{\Phi,x\}\equiv \frac{\partial^3_x \Phi}{\partial_x\Phi}-\frac 32\left(\frac{\partial^2_x\Phi}{\partial_x \Phi}\right)^2.
\]
This PDE has  been attracting some attention due to its many interesting properties \cite{I10,GRL98,RBMG03}. For instance, Dorfman established
a bi-symplectic structure for this equation \cite{Do89}, and many others have been studying its solutions and generalizations \cite{I10,RBMG03}. As a relevant result, we can mention that, given a solution $\Phi$ of the SKdV equation, the function $\{\Phi,x\}$ is a particular solution of the Korteweg-de Vries equation (KdV equation) \cite{Mar11}
\begin{equation}\label{KdV}
\partial_tu=\partial_x^3u+3u\partial_xu .
\end{equation}

We now  look for traveling wave solutions of (\ref{SKdV}) of the type $\Phi(t,x)=g(x-f(t))$ for a certain fixed $t$-dependent function $f$ with $df/dt=v_0\in \mathbb{R}$. Substituting $\Phi=g(x-f(t))$ within (\ref{SKdV}), we obtain that $g$ is a particular solution of the Schwarzian equation
\begin{equation}\label{Traveling}
\frac{d^3g}{dz^3}=\frac 32\frac{(d^2g/dz^2)^2}{dg/dz}-v_0\frac{dg}{dz},
\end{equation}
where $z\equiv x-f(t)$. We already know that the Schwarzian equations can be studied through the superposition rule (\ref{eq3}), which can better be obtained by using that Schwarz equations can be studied through a Dirac--Lie system, as seen in this work. More specifically, we can generate all their solutions from a known one as
\begin{equation}\label{Sup}
g_2(z)=\frac{\alpha g_1(z)+\beta}{\gamma g_1(z)+\delta},\qquad \alpha\delta-\beta\gamma\neq 0,\qquad \alpha,\beta,\gamma,\delta\in\mathbb{R}.
\end{equation}
In addition, (\ref{Traveling}) is a HODE Lie system, i.e., when written as a first-order system
by adding the variables $v=dx/dz$ and $a=dv/dz$, it becomes a Lie system $X$,
namely one of the form (\ref{Ex}). It can be proved that (\ref{Traveling}) can be integrated for any $v_0=df/dt$.
For instance, particular solutions of this equation read
\[
\bar{g}_1(z)={\rm th}\left[\sqrt{v_0/2}z\right]\,\,\,\, (v_0>0),\qquad g_1(z)=\frac{1}{z+1}\,\,\,\,(v_0=0).
\]
Note that $g_1(z)$ has the shape of a solitary stationary solution, i.e., $\lim_{x\rightarrow \pm\infty}g_1(x-\lambda_0)=0$  for every $\lambda_0\in\mathbb{R}$. Meanwhile, $\bar{g}_1$ is a traveling wave solution. Moreover, the general solution of (\ref{Traveling}) in both cases can be obtained from (\ref{Sup}).

Other methods can be employed to study SKdV equations through the Lie system (\ref{Traveling}).
For instance, our mixed superposition rule allows us to obtain the general solution of (\ref{Traveling}) out of a couple of particular solutions of the linear system (\ref{linear}). Obviously, this can be much easier than solving (\ref{Traveling}) directly.

Finally, it is also relevant that every Lie system related to a Lie algebra of vector fields $V$ induces the so-called {\it quasi-Lie scheme} $S(V,V)$ \cite{CGL11}.  One of the reasons of the importance of this scheme is that it induces a group $\mathcal{G}(V)$ of $t$-dependent changes of variables, the referred to as the {\it group of the scheme}, that enables us to transform the system $X$ into a new Lie system with the same Vessiot--Guldberg Lie algebra. This can be potentially employed to transform $X$ into a new Schwarzian equation with a different $f(t)$, which would be rise to a certain type of B\"acklund transformations for our traveling solutions of SKdV equations.

\section{Conclusions and Outlook}

We have introduced a new type of Lie systems on Dirac manifolds generalizing
Lie--Hamilton systems. We have analyzed their geometric properties and we showed that they can be employed to study systems, e.g. SKdV and Schwarzian equations, appearing in the
physics and mathematics literature.

In addition, the more general structure of Dirac--Lie systems allowed us to
investigate systems that could not be treated through known techniques. In particular,
we have developed a theory to obtain bi--Dirac--Lie systems, several methods to obtain constants of the motion, Lie symmetries, and superposition rules for Dirac--Lie systems, and various generalizations of notions appearing in the theory of Lie systems. As a result, we were
able to obtain through geometric and algebraic techniques  mixed and standard superposition rules for Schwarzian equations.

In a future research, we will aim at finding new types of Lie systems related to other geometric
structures. For instance, it
would be interesting to study the existence of Lie systems admitting a
Vessiot--Guldberg Lie algebra of Hamiltonian vector fields
with respect to almost or twisted Poisson structures. We are also interested in developing a generalization of the theory of this work to the framework of Lie algebroids.
The latter has shown to be very fruitful in Geometric Mechanics \cite{LMM05,GG08,GGU06,GLMM09} and Control Theory \cite{CM04,GJ11}.
Moreover, a further analysis of the properties of Dirac--Lie
systems is being performed. Moreover, we aim to develop co-algebra techniques \cite{BBR06,BCHLS13} to
obtain mixed and standard superposition rules for Dirac--Lie systems.

\section*{Acknowledgements}
Research of J. Grabowski and J. de Lucas founded by the  Polish National Science Centre grant
HARMONIA under the contract number DEC-2012/04/M/ST1/00523.
Research of J.F. Cari\~nena and J. de Lucas was partially financed by research
projects MTM2012-33575 and MTM2011-15725-E (MEC) and E24/1 (Gobierno de Arag\'on). C. Sard\'on acknowledges a fellowship provided by the University of Salamanca and partial financial support by
research project FIS2009-07880 (DGICYT).

\bibliographystyle{elsarticle-num}

\end{document}